\theoremstyle{plain}
\newtheorem{theorem}{Theorem}
\newtheorem{definition}[theorem]{Definition}
\newtheorem{lemma}[theorem]{Lemma}
\newtheorem{example}[theorem]{Example}
\newtheorem{assumption}[theorem]{Assumption}
\newcommand\ZZ{{\mathbb Z}}
\newcommand\NN{{\mathbb N}}
\renewcommand\ell{l}
\newcounter{mycount1}\newcounter{mycount2}\newcounter{mycount3}\newcounter{mycount}
\newenvironment{romlist}{\begin{list}{\rm(\roman{mycount1})}%
   {\usecounter{mycount1}\labelwidth=1cm\itemsep 0pt}}{\end{list}}
\newenvironment{numlist}{\begin{list}{\arabic{mycount2}.}%
   {\usecounter{mycount2}\labelwidth=1cm\itemsep 0pt}}{\end{list}}
\newenvironment{letlist}{\begin{list}{\rm(\alph{mycount3})}%
   {\usecounter{mycount3}\labelwidth=1cm\itemsep 0pt}}{\end{list}}
\newenvironment{Alist}{\begin{list}{\MakeUppercase{\alph{mycount}}.}%
   {\usecounter{mycount}\labelwidth=1cm\itemsep 0pt}}{\end{list}}
\numberwithin{equation}{section}
\numberwithin{theorem}{section}
\numberwithin{figure}{section}
\title[Positive speed self-avoiding walks]{Positive speed self-avoiding walks on graphs with more than one end}
\author{Zhongyang Li}
\address{Department of Mathematics,
University of Connecticut,
Storrs, Connecticut 06269-3009, USA} \email{zhongyang.li@uconn.edu}
\urladdr{\url{https://mathzhongyangli.wordpress.com}}
\begin{document}
\maketitle
\begin{abstract}
A self-avoiding walk (SAW) is a path on a graph that visits each vertex at most once. The mean square displacement of an $n$-step SAW is the expected value of the square of the distance between its ending point and starting point, where the expectation is taken with respect to the uniform measure on $n$-step SAWs starting from a fixed vertex. It is conjectured that the mean square displacement of an $n$-step SAW is asymptotically $n^{2\nu}$, where $\nu$ is a constant. Computing the exact values of the exponent $\nu$ on various graphs has been a challenging problem in mathematical and scientific research for long. 

 In this paper we show that on any locally finite Cayley graph of an infinite, finitely-generated group with more than two ends, the number of SAWs whose end-to-end distances are linear in lengths has the same exponential growth rate as the number of all the SAWs. We also prove that for any infinite, finitely-generated group with more than one end, there exists a locally finite Cayley graph on which SAWs have positive speed - this implies that the mean square displacement exponent $\nu=1$ on such graphs.  

These results are obtained by proving more general theorems for SAWs on quasi-transitive graphs with more than one end, which make use of a variation of Kesten's pattern theorem in a surprising way, as well as the Stalling's splitting theorem. Applications include proving that SAWs have positive speed on the square grid in an infinite cylinder, and on the infinite free product graph of two connected, quasi-transitive graphs.
\end{abstract}

\section{Introduction}

Self-avoiding walks, which are paths on graphs visiting no vertex more than once, were first introduced  as a model for long-chain polymers in chemistry (\cite{PF53}, see also \cite{MS96}). The theory of SAWs impinges on several areas of science including
combinatorics, probability, and statistical mechanics. Each of these
areas poses its characteristic questions concerning counting and geometry. Despite the simple definition, SAWs have been notoriously difficult to to study due to the fact that SAWs are, in general, non-Markovian.

The most natural SAW models are defined on regular graphs, such as the square-grid, the hexagonal lattice, etc; SAWs on these graphs have been studied extensively. In this paper, we consider SAWs on the general quasi-transitive graphs. Let $G=(V,E)$ be an infinite, connected graph, and let $\mathrm{Aut}(G)$ be the automorphism group for $G$. We say that $G$ is \textbf{quasi-transitive}, if there exists a subgroup $\Gamma$ of $\mathrm{Aut}(G)$ acting quasi-transitively on $G$, i.e. the action of $\Gamma$ on $V$ has only finitely many orbits. More precisely, there exist a finite set of vertices $W\subset V$, $|W|<\infty$, such that for any $v\in V$, there exist $w\in W$ and $\gamma\in\Gamma$ with $w=\gamma v$. The set $W$ is called a \textbf{fundamental domain}. A graph is called \textbf{locally finite}, if every vertex has finite degree, i.e., incident to finitely many edges. A subset of vertices $U\subseteq V$ is called \textbf{connected}, if for any $p,q\in U$, there exists $u_0(=p),\ u_1,\ \ldots,\ u_{n-1},\ u_n(=q)\in U$ such that for $1\leq i\leq n$, $u_{i-1}$ and $u_i$ are adjacent vertices (two vertices joined by an edge).

The \textbf{connective constant} is a fundamental quantity concerning counting the number of SAWs starting from a fixed vertex, and this is the starting point of a rich theory of geometry and phase transition. It is defined, on a quasi-transitive graph, to be the exponential growth rate of the number of $n$-step SAWs starting from a fixed vertex. More precisely, let $c_n(v)$ be the number of $n$-step SAWs starting from a fixed vertex $v\in V$, the connective constant $\mu$ is defined to be
\begin{eqnarray}
\mu:=\lim_{n\rightarrow\infty} [\mathrm{sup}_{v\in V}c_n(v)]^{\frac{1}{n}}\label{cc}
\end{eqnarray}
The limit on the right hand side of (\ref{cc}) is known to exist by a sub-additivity argument. It is proved in \cite{jmhII} that the connective constant $\mu$ defined in (\ref{cc}), can be expressed as follows
\begin{eqnarray}
\mu=\lim_{n\rightarrow\infty} c_n(v)^{\frac{1}{n}},\qquad\forall v\in V.\label{cc1}
\end{eqnarray}

Although the definition of the SAW is quite simple, a lot of fundamental questions concerning SAWs remain unknown. For example, it is still an open problem to compute the exact value of the connective constant for the 2-dimensional square grid. A recent breakthrough is a proof of the fact that the connective constant of the hexagonal lattice is $\sqrt{2+\sqrt{2}}$; see \cite{ds}. See \cite{GL-Comb,GrL3,GL-amen,GLgmn,GrLrev} for results concerning bounds of connective constants on quasi-transitive graphs; \cite{GL-loc,GL-cayley} for results concerning the dependence of connective constants on local structures of graphs; \cite{GLWSAW} for the continuous dependence of connective constants of weighted SAWs on edge weights of the graph; and \cite{GLFs} for the changes of the connective constant of SAWs under local transformations of the underlying graph.

Another important quantity relating to SAWs is the \textbf{mean square displacement exponent $\nu$}. Let $\pi_n^v$ be an $n$-step SAW on $G$ starting from a fixed vertex $v$, and let 
\begin{eqnarray*}
\|\pi_n^v\|=\mathrm{dist}_{G}(\pi(n),\pi(0)),
\end{eqnarray*}
where $\mathrm{dist}_{G}(\cdot,\cdot)$ is the graph distance on $G$. Let $\langle\cdot \rangle$ be the expectation taken with respect to the uniform probability measure for $n$-step SAWs on $G$ starting from a fixed vertex.  The mean square displacement exponent $\nu$ for SAWs, defined by
\begin{eqnarray*}
\langle\|\pi_n^v\|^2 \rangle\sim n^{2\nu},
\end{eqnarray*}
has been an interesting topic to mathematicians and scientists for long. Here ``$\sim$" means that there exist constants $C_1,C_2>0$, independent of $n$, such that $C_1 n^{2\nu}\leq \langle\|\pi_n^v\|^2 \rangle\leq C_2n^{2\nu}$. 

Although the connective constant depends on the local structure of the graph, the mean square displacement exponent $\nu$ is believed to be universal in the sense that it depends only on the dimension of the space where the graph is embedded, but independent of the graph.  It is conjectured that $\nu=\frac{3}{4}$ for SAWs on graphs embedded in the 2-dimensional Euclidean plane (in particular, this means that the square grid, the hexagonal lattice and the triangular lattice share the same exponent $\nu=\frac{3}{4}$, although they obviously have distinct connective constants),  $\nu=\frac{1}{2}$ for SAWs on $\ZZ^d$ with $d\geq 4$, and that $\nu=1$ for SAWs on a non-amenable graph with bounded vertex degree. See \cite{GLFs} for the invariance of SAW components under local transformations of cubic (valent-3) graphs.

The conjecture that $\nu=\frac{1}{2}$ when $d\geq 5$ was proved in \cite{BS85,HS92}. See \cite{bdgs} for related results when $d=4$, and \cite{CH13} for related results for $d\geq 2$.

It is proved in \cite{NP12} that if a non-amenable Cayley graph satisfies 
\begin{eqnarray}
(\Delta-1)\rho\mu^{-1}<1,\label{cd12}
\end{eqnarray}
 then SAWs have positive speed. Here $\Delta$ is the vertex degree, $\rho$ is the spectral radius for the transition matrix of the simple random walk on the graph, and $\mu$ is the connective constant as defined in (\ref{cc}). Combining with the results in \cite{BM88,BS96,PS00}, it is known that for any finitely generated non-amenable group, there exists a locally finite Cayley graph on which SAWs have positive speed.

It is proved in \cite{MW05} that SAWs have positive speed for certain regular tilings of the hyperbolic plane. An upper bound of the spectral radius for a planar graph with given maximal degree is proved in \cite{DM10}, which, combining with (\ref{cd12}), can be used to show that SAWs have positive speed on a large class of planar graphs. It is shown in \cite{ib16} that SAWs on the 7-regular infinite planar triangulation has linear expected displacement.

The main goal of this paper is to study the mean square displacement exponent $\nu$ for SAWs on quasi-transitive graphs with more than one end.
The number of \textbf{ends} of a connected graph is the supremum over its finite subgraphs of the number of infinite components that remain after removing the subgraph.

Let $G=(V,E)$ be an infinite, connected, locally finite, quasi-transitive graph with more than one end. Let $\Gamma\subseteq \mathrm{Aut}(G)$ be a subgroup of the automorphism group of $G$ acting quasi-transitively on $G$. Since $G$ has more than one end, there exists a finite subset of $V$ (which is called a ``cut set"), such that after removing all the vertices as well as incident edges of the set, the remaining graph has at least two infinite components. If distinct components of the remaining graph have certain ``symmetry" under the action of $\Gamma$, one may map certain portions of an SAW from one component to another component of the remaining graph and form a new SAW, such that the end-to-end distance of the new SAW is linear in its length. Then the number of $n$-step SAWs with end-to-end distance linear in $n$, when $n$ is large, may be compared with the total number of $n$-step SAWs. To that end, we may make the following assumptions on the graph $G$ concerning the ``symmetry'' of different components after removing the finite ``cut set''.

\begin{assumption}\label{ap31}There exist a finite set of vertices $S$, $S\subset V$ and $|S|<\infty$, such that 
\begin{enumerate}
\item $S$ is connected;
\item  $G\setminus S$ (the graph obtained from $G$ by removing all the vertices in $S$ and their incident edges) has at least two infinite components;
\item for any component $A$ of $G\setminus S$, let $\partial_A S$ be the set consisting of all the vertices in $S$ incident to a vertex in $A$. There exists an infinite component $B$ of $G\setminus S$ and a graph automorphism $\gamma\in\Gamma$, such that $B\cap A=\emptyset$, $\gamma A\subseteq B$; for any $v\in \partial_A S$, $\gamma v\in \partial_B S\cup B$, $v$ and $\gamma v$ are joined by a path in $G\setminus(A\cup\gamma A)$, whose length is bounded above by a constant $N$ independent of $A,v$. Denote $\gamma$ by $\phi(S,A):=\gamma$.
\end{enumerate}
See Figure \ref{fig:ap31}.
\end{assumption}

\begin{figure}[hb]
  \centering
  \includegraphics[width=15cm]{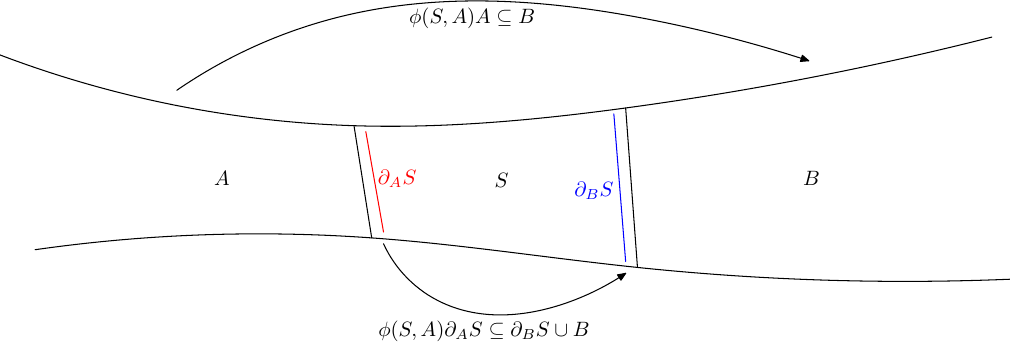}
  \caption{Assumption \ref{ap31}}\label{fig:ap31}
\end{figure}

\begin{assumption}\label{ap32}There exist a finite set of vertices $S$, $S\subset V$ and $|S|<\infty$ satisfying Assumption \ref{ap31}. Moreover, assume that
\begin{itemize}
\item there exist a finite set of vertices $S'$, such that $S\subseteq S'$. Let $\partial S'$ be the set consisting of all the vertices in $S'$ incident to a vertex in $G\setminus S'$. For any two distinct vertices $u,v\in\partial S'$, there exists an SAW $\ell_{uv}$ joining $u$ and $v$ and visiting every vertex in $S$.
\end{itemize}
See Figure \ref{fig:ap32}.
\end{assumption}

\begin{figure}[hb]
  \centering
  \includegraphics[width=15cm]{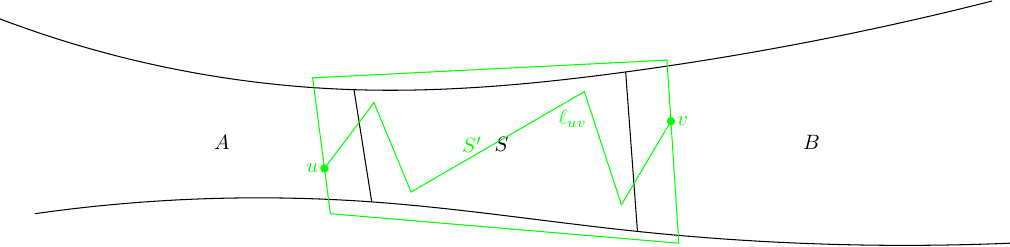}
  \caption{Assumption \ref{ap32}}\label{fig:ap32}
\end{figure}

Here are the main results of the paper.

\begin{theorem}\label{m31}Let $G=(V,E)$ be an infinite, connected, locally finite, quasi-transitive graph with more than one end. Let $\mu$ be the connective constant of $G$. Let $\pi_n^v$ be an $n$-step SAW on $G$ starting from a fixed vertex $v$.
\begin{Alist} 
\item If $G$ satisfies Assumption \ref{ap31}, then there exists $a\in(0,1]$
\begin{eqnarray*}
\limsup_{n\rightarrow\infty} \sup_{v\in V}|\{\pi_n^v:\|\pi_n^v\|\geq a n\}|^{\frac{1}{n}}=\mu.
\end{eqnarray*}
\item If $G$ satisfies Assumption \ref{ap32}, then $\pi_n^v$ has positive speed, i.e., there exist constants $C,\alpha,\beta>0$, such that
\begin{eqnarray*}
\mathbb{P}_n(\|\pi_n^v\| \leq \alpha n)\leq Ce^{-n\beta}.
\end{eqnarray*}
where $\mathbb{P}_n$ is the uniform measure on the set of $n$-step SAWs on $G$ starting from a fixed vertex.
\end{Alist}
\end{theorem}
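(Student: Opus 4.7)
The plan is to combine a quasi-transitive variant of Kesten's pattern theorem with the cut-set structure supplied by Assumptions \ref{ap31}--\ref{ap32}, together with Stallings' splitting theorem, which controls the coarse geometry of a quasi-transitive graph with more than one end via an action on a Bass--Serre tree.

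For Part A, I would fix the finite connected cut set $S$, an infinite component $A$ of $G\setminus S$, and the automorphism $\gamma=\phi(S,A)$ from Assumption \ref{ap31}. The first step is to build a short SAW $P$ that begins inside $A$, crosses $S$ entirely, and ends inside the image component $\gamma A\subseteq B$; appending the boundary joining path of length at most $N$ guaranteed by the assumption makes $P$ an ``internal proper pattern'' in Kesten's sense, i.e.\ it can be embedded as an interior segment of arbitrarily long SAWs on $G$. Applying the quasi-transitive pattern theorem then furnishes a constant $K$ and an exponential growth rate strictly less than $\mu$ for the number of $n$-step SAWs in which $P$ occurs fewer than $n/K$ times. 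The second step is to translate occurrences of $P$ into displacement: since $S$ separates $G$ and $\gamma$ moves $A$ into a different component, each occurrence of $P$ corresponds to an edge-move in the Bass--Serre tree associated with $S$, and graph distance in $G$ dominates the induced tree distance up to an additive constant times $N$. Having at least $n/K$ occurrences thereby forces $\|\pi_n^v\|\geq a n$ with, say, $a=1/(KN)$, proving Part A.

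For Part B, the additional clause in Assumption \ref{ap32} --- an SAW $\ell_{uv}$ joining any two boundary vertices of $S'$ and passing through every vertex of $S$ --- upgrades $P$ to a ``strongly insertable'' pattern: arbitrarily many disjoint copies of $P$ can be spliced into any long SAW through $S'$ without violating self-avoidance. This is precisely the insertability hypothesis under which Kesten's pattern theorem yields \emph{exponential} concentration rather than merely a sub-maximal growth rate, so one obtains
\begin{eqnarray*}
\PP_n\bigl(\#\{\text{occurrences of }P\text{ in }\pi_n^v\}\leq \epsilon n\bigr)\leq C e^{-\beta n}
\end{eqnarray*}
for some $\epsilon,C,\beta>0$. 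Combining with the geometric conversion from Part A then gives $\PP_n(\|\pi_n^v\|\leq \alpha n)\leq Ce^{-\beta n}$ with $\alpha=\epsilon/(KN)$.

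The hardest step will be the geometric translation from pattern occurrences to displacement. The naive picture that ``each crossing of $S$ permanently separates the SAW from its start'' has to be made precise in the quasi-transitive setting, where different occurrences of $P$ may involve different $\Gamma$-translates of $S$. The essential tool is that Stallings' splitting theorem supplies an honest action of (a suitable subgroup of) $\Gamma$ on a tree whose edges are indexed by translates of $S$, so that the projection of any SAW on $G$ to this tree remains a walk whose combinatorial length lower-bounds the SAW's end-to-end graph distance in $G$. Ruling out that later portions of the SAW can undo the tree progress accumulated by earlier pattern occurrences, using self-avoidance as the key global constraint, is the crucial estimate that underlies both parts.
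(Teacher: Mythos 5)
There is a genuine gap, and it sits exactly where the paper has to work hardest. You invoke ``the quasi-transitive pattern theorem'' for a pattern $P$ that traverses all of $S$ and crosses into another component. Under Assumption \ref{ap31} alone such a $P$ need not exist ($S$ is only assumed connected; nothing guarantees a SAW through \emph{every} vertex of $S$ --- that is precisely the extra content of Assumption \ref{ap32}), and even when it exists there is no off-the-shelf pattern theorem for general quasi-transitive graphs that makes it occur linearly often. The paper's proof of Part A is essentially a workaround for this: it never forces the walk to visit all of $S$, but instead introduces the events $\tilde E_k$ (``at least $k$ vertices of some translate $\gamma S$ are visited''), proves by an explicit surgery argument (Lemma \ref{l2}, using $\phi(S,A)$ to fold the tail of the walk into another component) that $\lambda(\tilde E_1)<\mu$, and then locates the critical $k$ with $\lambda(\tilde E_k)<\mu=\lambda(\tilde E_{k+1})$. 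The crossing event $F$ is then defined at that critical level, and a second surgery (Lemma \ref{l25}) shows walks that meet translates of $S$ often but never cross are exponentially rare. None of this case analysis survives if you assume a ready-made pattern theorem, so as written Part A of your argument does not go through under Assumption \ref{ap31}. Your outline of Part B is closer to the truth: the Hamiltonian-type path $\ell_{uv}$ of Assumption \ref{ap32} is exactly what the paper uses (Lemma \ref{l33}) to force $\lambda(E^*)<\mu$ by a splice-and-count argument, and exponential concentration then follows from the strict gap in growth rates together with $c_n\ge\mu^n$; but your Part B still leans on the Part A machinery you have not actually established.

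Two smaller points. First, Stallings' splitting theorem is a statement about finitely generated groups; Theorem \ref{m31} concerns arbitrary quasi-transitive graphs, so it is not available here --- the paper uses it only later, to verify Assumptions \ref{ap31}--\ref{ap32} for Cayley graphs (Theorems \ref{mg} and \ref{t15}). Second, the conversion from crossings to displacement does not need a Bass--Serre (or Dunwoody) tree: the paper builds the ``no undoing'' property directly into the definition of $F$ by taking $\pi(\alpha)$ and $\pi(\beta)$ to be the \emph{first and last} visits of the entire walk to $\gamma S$ and requiring $\pi(\alpha-1)$, $\pi(\beta+1)$ to lie in distinct components of $G\setminus\gamma S$; after step $\beta$ the walk can never re-enter the component of $\pi(\alpha-1)$, since doing so would require revisiting $\gamma S$. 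That elementary bookkeeping, not a tree action, is what makes $a'n$ occurrences of $F$ imply $\|\pi\|\ge a'n$.
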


For a graph satisfying Assumption \ref{ap32}, Theorem \ref{m31} implies that the mean square displacement of SAWs on the graph is of the order $n^2$, i.e.
\begin{eqnarray*}
\langle\|\pi_n^v\|^2 \rangle\sim n^2.
\end{eqnarray*}

The approach to prove Theorem \ref{m31} is to consider a finite ``cut set'' $S$ as given by Assumption \ref{ap31}, such that SAWs, once crossing this ``cut set'', will move to another component of $G\setminus S$ and most of them may never come back again. The analysis involves arguments and technical details inspired by the pattern theorem (\cite{HK63}), see also (\cite{MS96,CH13,GrL3,SW90,ASW09}). The proofs of Part A. and Part B. are similar; note that under the stronger assumption \ref{ap32}, not only the the number of $n$-step SAWs whose end-to-end distance is linear in $n$ has the same exponential growth rate as the total number of $n$-step SAWs starting from a fixed vertex, but the number of of $n$-step SAWs whose end-to-end distance is not linear in $n$ is actually exponential small compared to the total number of $n$-step SAWs starting from a fixed vertex.

Applications of Theorem \ref{m31} include a proof that SAWs on an infinite cylindrical square grid have positive speed, and that SAWs on an infinite free product graph of two quasi-transitive, connected graphs have positive speed. 

\begin{example}(Cylinder) Consider the quotient graph of the square grid $\ZZ^2$, $\ZZ\times \ZZ_{\ell}$, where $\ell$ is a positive integer. This is a graph with two ends. We can choose $S=\{0\}\times\ZZ_{\ell}$ and $S'=\{-1,0,1\}\times Z$. Then Assumption \ref{ap32} is satisfied and SAWs have positive speed. See also \cite{FSG99} for discussions about SAWs on a cylinder.
\end{example}

\begin{definition}(Free product of graphs)\label{df41} Let $G_1=(V_1,E_1,o_1)$, $G_2=(V_2,E_2,o_2)$ be two connected, locally finite, quasi-transitive, rooted graphs with vertex sets $V_1$, $V_2$; edge sets $E_1,E_2$ and roots $o_1\in V_1,o_2\in V_2$, respectively.  For $i\in\{1,2\}$, assume that 
\begin{enumerate}
\item $|V_i|\geq 2$; and
\item $V_i^{\times}=V_i\setminus\{o_i\}$; and
\item $I(x)=i$ if $x\in V_i^{\times}$. 
\end{enumerate}
Define
\begin{eqnarray*}
V:=V_1*V_2=\{x_1x_2\ldots x_n|n\in \mathbb{N},x_k\in V_1^{\times}\cup V_2^{\times}, I(x_k)\neq I(x_{k+1})\}\cup\{o\}
\end{eqnarray*}
We define an edge set $E$ for the vertex set $V$ as follows: if $i\in\{1,2\}$ and $x,y\in V_i$, and $(x,y)\in E_i$, then $(wx,wy)\in E$ for all $w\in V$. See \cite{GM15} for discussions of SAWs on free product graphs of quasi-transitive graphs.
\end{definition}

\begin{theorem}\label{tm51}Let $G=(V,E)$ be the free product graph of two connected, locally finite, quasi-transitive, rooted graphs $G_1=(V_1,E_1,o_1)$ and $G_2=(V_2,E_2,o_2)$ with $|V_i|\geq 2$, for $i=1,2$, as defined in \ref{df41}. Then SAWs on $G$ have positive speed.
\end{theorem}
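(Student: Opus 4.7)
The plan is to verify Assumption \ref{ap32} for the free product $G = G_1 * G_2$ and then apply Theorem \ref{m31}(B). I take $S := \{o\}$ and $S' := \{o\} \cup N(o)$, where $N(o)$ denotes the set of neighbors of $o$ in $G$; both are finite because $G$ is locally finite, and $S \subseteq S'$.

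I first verify Assumption \ref{ap31}. Item (1) is immediate since $S$ is a single vertex. For item (2), the tree-of-blocks structure of a free product makes $o$ a cut vertex, and the connected components of $G \setminus \{o\}$ are in bijection with the connected components of $G_1 \setminus \{o_1\}$ and of $G_2 \setminus \{o_2\}$: each such component consists of all words $x_1 x_2 \cdots x_n \in V$ whose first letter lies in a fixed connected component of some $G_i \setminus \{o_i\}$. Every such component of $G \setminus \{o\}$ is infinite because at every non-root vertex further alternating $G_j$-copies can be attached. Since $|V_i| \geq 2$ and $G_i$ is connected for $i = 1, 2$, each of $G_1 \setminus \{o_1\}$ and $G_2 \setminus \{o_2\}$ is nonempty, so $G \setminus \{o\}$ has at least two infinite components.

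For item (3), given a component $A$ of $G \setminus \{o\}$ on the $G_i$-side, I take $\Gamma \leq \Aut(G)$ to be a quasi-transitive subgroup containing the block-wise lifts of quasi-transitive subgroups $\Gamma_i \leq \Aut(G_i)$ together with the branch-permutation automorphisms of $G$. By combining $\Gamma_i$-elements (which move $o_i$ within $G_i$) with branch-swaps at each vertex along a block-geodesic into $A$, I construct $\gamma \in \Gamma$ sending $o$ to a vertex of $A$ at bounded graph-distance from $o$ with $\gamma A \subseteq A$; taking $B := A$ makes item (3) hold, with the path being the bounded-length graph-geodesic from $o$ to $\gamma o$. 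Turning to the extra hypothesis of Assumption \ref{ap32}: $\partial S' = N(o)$, because every $u \in N(o)$ has a further neighbor outside $S'$ (a vertex at graph-distance $2$ from $o$ obtained by extending $u$ into the next block), while $o$ has all its neighbors inside $S'$. For any two distinct $u, v \in N(o)$, the length-$2$ walk $u, o, v$ is an SAW from $u$ to $v$ that visits $o$, the unique vertex of $S$, providing the required paths $\ell_{uv}$.

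Applying Theorem \ref{m31}(B) then yields positive speed for SAWs on $G$. The main technical obstacle in this plan is item (3) of Assumption \ref{ap31}: since $G_i$ is only quasi-transitive, a single element of $\Gamma_i$ may not suffice to shift $o$ into $A$, and the block-wise lift of a product of $\Gamma_i$-elements must be carefully checked to define a graph automorphism of $G$ belonging to a quasi-transitive subgroup and to send $A$ into itself rather than spilling across $\{o\}$. This step relies on the tree-of-blocks description of the free product and on the interlocking of the stabilizer actions of $\Aut(G_1)_{o_1}$ and $\Aut(G_2)_{o_2}$ with the branch-permutation automorphisms of $G$.
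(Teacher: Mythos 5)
Your overall strategy (take $S=\{o\}$, verify Assumption \ref{ap32}, apply Theorem \ref{m31}~B) is the paper's strategy, and your treatment of items (1)--(2) of Assumption \ref{ap31} and of the $S'$ condition (with $S'=\{o\}\cup N(o)$, $\partial S'=N(o)$, and the two-step SAW $u,o,v$ through $o$) is fine. The genuine gap is in your verification of item (3): you take $B:=A$ and build $\gamma$ with $\gamma A\subseteq A$. This cannot work, for two reasons. First, it is internally inconsistent with the assumption itself: with $\partial_A S=\{o\}$ and $\gamma o\in A$, the required path from $o$ to $\gamma o$ must lie in $G\setminus(A\cup\gamma A)$, yet its endpoint $\gamma o$ lies in $A$. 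Second, and more importantly, the way item (3) is used in the proofs of Lemmas \ref{l2} and \ref{l25} is as a surgery that maps the tail of an SAW, which lives in the component $A$, into a \emph{different} component $B$ of $G\setminus\gamma S$, so that the mapped tail is automatically disjoint from the initial segment (which stays in $A\cup\gamma S$) and the concatenation is again self-avoiding. With $\gamma A\subseteq A$ the mapped tail lands back in $A$ and the surgered walk need not be self-avoiding, so Theorem \ref{m31} is not applicable with your $\gamma$.

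The paper's choice avoids this: for a component $A$ of $G\setminus\{o\}$ whose words begin with a letter of $V_1^{\times}$, it takes $\phi(S,A)$ to be left concatenation $x\mapsto ux$ with a fixed $u\in V_2^{\times}$ (and symmetrically $y\mapsto wy$ with $w\in V_1^{\times}$ for components on the $G_2$ side). This sends $A$ into a component consisting of words beginning with a $V_2^{\times}$ letter, hence into a component $B$ disjoint from $A$, and $o$ is sent to $u\in\partial_B S\cup B$ at distance bounded by the diameter contribution of a single block, so the path condition holds with the edge (or short path) from $o$ to $u$. If you replace your ``branch-permutation'' construction by this cross-factor shift, the rest of your argument goes through; no appeal to quasi-transitive subgroups $\Gamma_i\le\Aut(G_i)$ or to block geodesics into $A$ is needed for item (3).
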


The \textbf{ends of a finitely generated group} are defined to be the ends of the corresponding Cayley graph; this definition is insensitive to the choice of the finite generating set. It is well known that every finite-generated infinite group has either 1, 2, or infinitely many ends. Concerning groups with more than one end, 
Theorem \ref{m31} also has the following corollaries.

\begin{theorem}\label{mg}Let $\Gamma$ be an infinite, finitely-generated group with more than two ends. Let $G=(V,E)$ be a locally finite Cayley graph of $\Gamma$. For $v\in V$ Let $\pi_n^v$ be an $n$-step SAW on $G$ starting from $v$. Then 
\begin{eqnarray*}
\limsup_{n\rightarrow\infty}|\{\pi_n^v:\|\pi_n^v\|\geq a n\}|^{\frac{1}{n}}=\mu.
\end{eqnarray*}
\end{theorem}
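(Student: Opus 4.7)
The plan is to reduce Theorem \ref{mg} directly to Theorem \ref{m31}(A). A locally finite Cayley graph of $\Gamma$ is vertex-transitive, hence quasi-transitive with a single-vertex fundamental domain, so Theorem \ref{m31}(A) applies as soon as Assumption \ref{ap31} is verified. The entire task is therefore to exhibit a finite connected cut set $S \subset V$ satisfying conditions (1)--(3) of that assumption.

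First, by the Freudenthal--Hopf theorem a finitely generated group has $0$, $1$, $2$, or infinitely many ends, so ``more than two ends'' forces $\Gamma$ to have infinitely many ends. Stallings' splitting theorem then produces a nontrivial decomposition of $\Gamma$ as an amalgamated product $A_1 \ast_C A_2$ or an HNN extension, with $C$ a finite subgroup. Passing to the associated Bass--Serre tree $T$ yields a minimal $\Gamma$-action with finite edge stabilizers, and $T$ is not a line (the more-than-two-ends assumption forces $T$ to contain a vertex of valence $\geq 3$). Fix an edge $e$ of $T$ with finite stabilizer $\Gamma_e$; a Dunwoody-track argument produces a finite subset $S_0 \subset V$ that separates $G$ into two infinite pieces corresponding to the two half-trees of $T \setminus \{e\}$. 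Thickening $S_0$ by adjoining finitely many connecting vertex-paths in $G$ yields a finite connected $S \supseteq S_0$ whose complement still has at least two infinite components, verifying conditions (1) and (2).

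For condition (3), let $A$ be any component of $G \setminus S$, lying in the piece corresponding to one of the two half-trees, say $T_v$. Because $T$ is not a line and $\Gamma$ acts minimally, there exists $\gamma \in \Gamma$ whose action on $T$ carries the half-tree $T_v$ strictly into the opposite half-tree $T_u$ (for instance, choose $\gamma$ so that it sends $e$ to an edge of $T_u$ not incident to $u$). Then $\gamma A$ is contained in the infinite component $B \ne A$ of $G \setminus S$ corresponding to $T_u$. Cocompactness of the $\Gamma$-action on $T$ together with local finiteness of $G$ ensures that $G \setminus S$ has only finitely many components falling into finitely many $\Gamma$-orbit types, so a single choice of $\gamma = \phi(S, A)$ of bounded word length works uniformly in $A$. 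The connecting path from $v$ to $\gamma v$ can be routed through $S$ and its $\gamma$-translate, both lying in $G \setminus (A \cup \gamma A)$, and its length is bounded by the word length of $\gamma$ plus the diameter of $S$, giving a uniform constant $N$. The main obstacle is precisely this uniformity in the choice of $\gamma$, which is resolved by the cocompactness of the $\Gamma$-action on $T$. Once Assumption \ref{ap31} is verified, Theorem \ref{m31}(A) immediately yields $\limsup_{n \to \infty} |\{\pi_n^v : \|\pi_n^v\| \geq an\}|^{1/n} = \mu$ as claimed.
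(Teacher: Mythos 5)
Your proposal is correct in its overall skeleton and matches the paper's strategy at the top level: both invoke Stallings' splitting theorem and reduce Theorem \ref{mg} to Theorem \ref{m31} A by exhibiting a finite connected cut set $S$ satisfying Assumption \ref{ap31}. Where you diverge is in how $S$ and the maps $\phi(S,A)$ are produced. The paper never passes to the Bass--Serre tree: it treats the amalgamated-product and HNN cases separately and works entirely with normal forms (Lemma \ref{l23}, Lemma \ref{luv}, Lemma \ref{l43}, Theorem \ref{t47}), taking $S$ to be a connected thickening of a neighborhood $C_1$ of the finite amalgamating subgroup and defining $\phi(S,A)$ as left multiplication by an explicit element ($r_1$, $r_2$, or $t^{\pm 2k}$, $t^{\pm 2k}r_i$) whose normal form begins and ends in prescribed factors. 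Your tree-theoretic route is more uniform (it handles both Stallings cases at once) and more geometric, at the cost of importing Dunwoody tracks as a black box; the paper's route is longer and case-ridden but self-contained modulo the normal-form theorems. One correction of emphasis: you call the uniformity of $\gamma$ over components $A$ ``the main obstacle,'' but that part is cheap --- since $G$ is locally finite and $S$ is finite, $G\setminus S$ has only finitely many components, and the translates $\gamma S$ inherit the same constants by transitivity. The genuinely delicate point, to which the paper devotes most of Section \ref{p2} and which your sketch dispatches in one sentence, is verifying that the path from $v$ to $\gamma v$ can be kept inside $G\setminus(A\cup\gamma A)$: one must check $\gamma S\cap S=\emptyset$, $\gamma S\cap A=\emptyset$, $S\cap \gamma A=\emptyset$, and that the middle segment joining $S$ to $\gamma S$ does not re-enter $A$ or $\gamma A$. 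These all hold once $\gamma$ is chosen to push the relevant half-tree sufficiently deep into the opposite one, so your argument does go through, but that verification is where the real work lies and should be written out.
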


\begin{theorem}\label{t15}Let $\Gamma$ be an infinite, finitely-generated group with more than one end. There exists a locally finite Cayley graph $G=(V,E)$ of $\Gamma$, such that SAWs on $G$ have positive speed.
\end{theorem}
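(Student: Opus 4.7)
The plan is to combine Stallings' theorem on ends of groups with Theorem \ref{m31}(B). Since $\Gamma$ is finitely-generated with more than one end, Stallings' theorem gives a nontrivial splitting over a finite subgroup $C$: either an amalgamated free product $\Gamma\cong A*_C B$ with $[A:C],[B:C]\geq 2$, or an HNN extension $\Gamma\cong A*_C$ with $C$ finite. I would treat the amalgamated case in detail; the HNN case is analogous, with the Britton normal form replacing the amalgamated-product normal form and the stable letter playing the role of the cross-factor generator. The objective is to construct a locally finite Cayley graph $G$ of $\Gamma$ satisfying Assumption \ref{ap32}, after which Theorem \ref{m31}(B) yields positive speed at once.

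First, I would take as generating set $\Sigma$ the union of finite symmetric generating sets of $A$ and $B$, together with every non-identity element of $C$ (this step is vacuous when $C$ is trivial). In the resulting Cayley graph $G=(V,E)$, the subgroup $C\subset V$ induces a clique, so $S:=C$ is connected and even Hamilton-connected: a Hamilton path of $S$ joining any prescribed ordered pair of endpoints exists. Using the normal form for $A*_C B$, $V\setminus S$ decomposes as $V_A\sqcup V_B$, where $V_A$ (resp.\ $V_B$) consists of elements whose reduced word begins with a letter in $A\setminus C$ (resp.\ $B\setminus C$). Because every generator lies in $A\cup B\cup C$, right multiplication by a generator never alters the initial syllable; hence no edge of $G$ crosses between $V_A$ and $V_B$ without passing through $S$, and both $V_A$, $V_B$ are infinite. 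Assumption \ref{ap31} then holds, with the shift $\phi(S,\cdot)$ given by left multiplication by a suitable element of $A$ or $B$ that moves one subcomponent into another, and with $N$ bounded by the diameter of $S$.

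The principal work, and the expected main obstacle, is arranging Assumption \ref{ap32}: producing $S'\supseteq S$ such that every pair $u,v\in\partial S'$ is joined by a self-avoiding walk that visits every vertex of $S$. Since $S$ is Hamilton-connected, it suffices to route $u$ to some $s_u\in S$ and $v$ to a \emph{distinct} $s_v\in S$ via vertex-disjoint paths in $S'\setminus S$, then to splice in a Hamilton path of $S$ from $s_u$ to $s_v$. My plan is to enlarge $\Sigma$ by adjoining, for each $c\in C$ and each existing generator $s$ of $A$ or $B$, the element $cs\in A\cup B$; taking $S'$ to be $S$ together with all its neighbors in this enlarged graph, every boundary vertex is then adjacent to every vertex of $S$, so distinct anchors $s_u\neq s_v$ can always be chosen. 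Restricting the added generators to $A\cup B$ is essential, as it preserves the partition $V\setminus S=V_A\sqcup V_B$ and hence the cut-set property of $S$ used in the previous paragraph. Once these verifications are complete, Assumption \ref{ap32} is satisfied and Theorem \ref{m31}(B) delivers positive speed.
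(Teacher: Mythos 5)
Your strategy is the same as the paper's: apply Stallings' splitting theorem, build a Cayley graph in which the finite edge group $C$ spans a clique, verify Assumption \ref{ap32} with $S=C$, and invoke Theorem \ref{m31}(B). The one place you diverge is the final step, and there you make the problem harder than it is: Assumption \ref{ap32} permits $S'=S$, and with $S'=S=C$ a clique, $\partial S'=C$ and the required SAW between distinct $u,v\in\partial S'$ visiting all of $S$ is exactly the Hamilton path of the clique you already noted (vacuous when $C$ is trivial). This is precisely what the paper does. Your enlargement of $S'$ to the closed neighborhood of $S$, together with the extra generators $cs$, is therefore unnecessary, and as written it has a small wrinkle: a Cayley graph needs a symmetric generating set, and the symmetrization $(cs)^{-1}=s^{-1}c^{-1}$ forces you to close the generating set under multiplication by $C$ on \emph{both} sides before the claim ``every boundary vertex of $S'$ is adjacent to every vertex of $S$'' holds; this is fixable but avoidable. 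Finally, be cautious about dismissing the HNN case as merely analogous: there are two associated subgroups $C_1,C_2$, the choice of the shifting automorphism $\phi(S,A)$ depends on which side of the splitting a component lies (the paper needs a five-way case analysis), and the degenerate case $C_1=C_2=H$ (so $H$ finite and $\Gamma$ two-ended) requires taking $S=S'=H$ itself rather than a neighborhood of the edge groups.
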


The proofs of Theorems \ref{mg} and \ref{t15} make use of the Stalling's splitting theorem (see \cite{JS68}), which gives explicit presentations for groups with more than one end; as well as constructions of sets $S$ and $S'$ satisfying Assumptions \ref{ap31} and \ref{ap32}.

The organization of the paper is as follows. In Section \ref{pp1}, we prove Theorem \ref{m31} A. In Section \ref{ppb}, we prove Theorem \ref{m31} B. Theorems \ref{mg} and \ref{t15} are proved in Section \ref{p2}. In Section \ref{fp}, we prove Theorem \ref{tm51}.

\section{Proof of Theorem \ref{m31} A.} \label{pp1}

This section is devoted to prove Theorem \ref{m31} A.

Let $G=(V,E)$ be a graph satisfying the assumption of Theorem \ref{m31}.
Let $S$ be a finite set of vertices satisfying Assumption \ref{ap31}. Recall that $\Gamma$ is a subset of $\mathrm{Aut}(G)$ acting quasi-transitively on $G$. Let $\Gamma S$ be the set of images of $S$ under $\Gamma$. By quasi-transitivity of $G$, for each $\gamma\in \Gamma$, $\gamma S$ still satisfies Assumption \ref{ap31}. 

We shall next introduce events $E^*$, $E_k$ and $\tilde{E}_k$ and their restrictions to a length-$2m$ sub-walks $E^*(m)$, $E_k(m)$ and $\tilde{E}_k(m)$, where $m$ is a fixed postive integer. In the proof of Theorem \ref{m31}, we shall modify an $n$-step SAW to a new SAW such that these events appear at least $an$ times in the new SAW; for some $0<a<1$; moreover, we may choose $\delta n$ of the $an$ occurrences of these events for some $0<\delta<a$ and map part of the SAW there by a graph automorphism to a different component of the remaining graph after removing the ``cut set'', this way, we construct a new SAW whose end-to-end distance is linear in $n$ since it crosses the ``cut set'' at least $\delta n$ times. Different choices of locations of these events for the modifications and mappings to construct new SAWs will give an exponential factor strictly greater than 1 on the number of SAWs whose end-to-end distances are linear in $n$, compared to the number of those SAWs whose end-to-end distances are not.

Let $\pi$ be an $n$-step SAW on $G$. We say that $E^*$ occurs at the $j$th step of $\pi$ if there exists $\gamma S\in \Gamma S$ such that $\pi(j)\in \gamma S$, and all the vertices of $\gamma S$ are visited by $\pi$. For $k\geq 1$, we say that $E_k$ occurs at the $j$th step of $\pi$, if there exists $\gamma S\in \Gamma S$, such that $\pi(j)\in \gamma S$, and at least $k$ vertices of $\gamma S$ are visited by $\pi$. We say that $\tilde{E}_k$ occurs at the $j$th step of $\pi$ if $E^*$ or $E_k$ (or both) occur there.

In the following, we will use $E$ to denote any of $E^*$, $E_k$ or $\tilde{E}_k$. If $m$ is a positive integer, we say that $E(m)$ occurs at the $j$th step of $\pi$ if $E$ occurs at the $m$th step of the $2m$-step subwalk $(\pi(j-m),\ldots,\pi(j+m))$. (If $j-m<0$ or $j+m>n$, then an obvious modification must be made in this definition: for $j-m<0$, it means that $E$ occurs at the $j$th step of $(\pi(0),\ldots, \pi(j+m))$; for $j+m>n$, it means that $E$ occurs at the $m$th step of $(\pi(j-m), \ldots, \pi(n))$). In particular, if $E(m)$ occurs at the $j$th step of $\pi$, then $E$ occurs at the $j$th step of $\pi$.

Let $c_n(v)$ be the number of $n$-step SAWs on $G$ starting from a fixed vertex $v$. For $r\geq 0$, let $c_n^v(r,E)$ (resp.\ $c_n^v(r,E(m))$) be the number of $n$-step SAWs starting from $v$ for which $E$ (resp.\ $E(m)$) occurs at no more than $r$ different  steps. 

Let 
\begin{eqnarray*}
c_n&=&\mathrm{sup}_{v\in V}c_n(v);\\
c_n(r,E)&=&\mathrm{sup}_{v\in V}c_n^v(r,E);\\
\lambda(E)&=&\limsup_{n\rightarrow\infty}c_n(0,E)^{\frac{1}{n}}.
\end{eqnarray*}
Let $\mu$ be the connective constant of $G$ as defined in (\ref{cc}). We have that $\lambda(E)<\mu$ if and only if 
\begin{eqnarray}
\mathrm{there\ exist\ }\epsilon>0, M\in\NN, \mathrm{such\ that\ } c_m(0,E)<[\mu(1-\epsilon)]^m, \mathrm{for\ } m\geq M.\label{em1}
\end{eqnarray}

\begin{lemma}\label{l1}Suppose that
\begin{eqnarray}
\lambda(E)<\mu.\label{lk}
\end{eqnarray}
Let $\epsilon,M$ satisfy (\ref{em1}), and let $m\geq M$ satisfy
\begin{eqnarray}
c_m\leq [\mu(1+\epsilon)]^m.\label{em2}
\end{eqnarray}
Then there exist $a=a(\epsilon,m)$ and $R=R(\epsilon,m)\in (0,1)$, such that
\begin{eqnarray*}
\limsup_{n\rightarrow\infty}c_n(an,E(m))^{\frac{1}{n}}<R\mu.
\end{eqnarray*}
\end{lemma}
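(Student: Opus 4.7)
The plan is a Kesten-style block decomposition of $\pi$, using the hypothesis $c_m(0, E) < [\mu(1-\epsilon)]^m$ as the strong control on $E$-free pieces. First I partition the indices $\{0, 1, \ldots, n\}$ of $\pi$ into $K = \lfloor n/m\rfloor$ consecutive blocks $B_k = \{km, km+1, \ldots, (k+1)m\}$ (so adjacent blocks share a single endpoint) and let $\sigma_k := \pi|_{B_k}$ be the corresponding $m$-step sub-SAW. Call $B_k$ \emph{good} if $E(m)$ fails at every index in $B_k$ and \emph{bad} otherwise; since $E(m)$ holds at at most $an$ indices of $\pi$ and each index lies in at most two blocks, there are at most $2an$ bad blocks and at least $K - 2an$ good blocks.

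The crux of the argument is the implication: \emph{if $B_k$ is good then $E$ fails at every step of the isolated $m$-step SAW $\sigma_k$}. Indeed, if $E$ occurs at step $i \in \{0, \ldots, m\}$ of $\sigma_k$ via some translate $\gamma S$, set $j = km + i \in B_k$; the inclusion $[km, (k+1)m] \subseteq [j-m, j+m]$ (which holds precisely because $0 \le i \le m$) places $\gamma S$ inside the vertex set of the $2m$-window $\pi|_{[j-m, j+m]}$, witnessing $E(m)$ at step $j \in B_k$ and contradicting goodness. Consequently every good $\sigma_k$ is counted by $c_m(0, E) < [\mu(1-\epsilon)]^m$ from (\ref{em1}), while every bad $\sigma_k$ is trivially counted by $c_m \le [\mu(1+\epsilon)]^m$ from (\ref{em2}).

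Since $\pi$ is determined by the sequence $(\sigma_0, \ldots, \sigma_{K-1})$ with each $\sigma_{k+1}$ starting at the endpoint of $\sigma_k$, summing the block-by-block product bound over the possible positions of bad blocks yields
\[
c_n(an, E(m)) \;\le\; \sum_{j=0}^{2an} \binom{K}{j}\, c_m^{\,j}\, c_m(0, E)^{\,K-j}.
\]
Inserting the hypotheses (\ref{em1}) and (\ref{em2}), applying the standard estimate $\binom{n/m}{2an} \le (e/(2am))^{2an}$, and taking $n$-th roots gives
\[
\limsup_{n\to\infty} c_n(an, E(m))^{1/n} \;\le\; \mu \cdot (e/(2am))^{2a}(1+\epsilon)^{2am}(1-\epsilon)^{1-2am}.
\]
As $a \to 0^+$ with $\epsilon, m$ held fixed the right-hand side tends to $\mu(1-\epsilon) < \mu$, so choosing $a = a(\epsilon, m) > 0$ sufficiently small produces some $R = R(\epsilon, m) \in (0, 1)$ with bound $R\mu$. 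The principal technical obstacle is verifying the key implication of the second paragraph, i.e.\ the subtle interaction between the window-localized event $E(m)$ and the event $E$ viewed inside the isolated sub-SAW $\sigma_k$; a minor secondary point is handling the $O(1)$ boundary blocks near indices $0$ and $n$ where the $2m$-window is clipped, which does not affect the asymptotics.
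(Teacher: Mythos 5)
Your proof is correct and follows essentially the same route as the paper: a Kesten-style decomposition into $\lfloor n/m\rfloor$ consecutive $m$-step blocks, the observation that a block free of $E(m)$-occurrences yields a standalone $m$-step SAW counted by $c_m(0,E)<[\mu(1-\epsilon)]^m$ (the paper phrases this as $c_m(0,E)=c_m(0,E(m))$), and a binomial-entropy estimate showing the resulting bound drops below $\mu$ once the density of bad blocks is small. Your explicit verification of the window-containment implication and your factor-of-$2$ bookkeeping for blocks sharing endpoints are slightly more careful than the paper's write-up, but the argument is the same.
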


\begin{proof}Assume that (\ref{lk}) holds. Let $\epsilon,m$ satisfy (\ref{em2}). Since $c_m(0,E)=c_m(0,E(m))$, by (\ref{em1}) we have
\begin{eqnarray*}
c_m(0,E(m))<[\mu(1-\epsilon)]^m.
\end{eqnarray*}
Let $\pi$ be an $n$-step SAW on $G$ and $L=\left\lfloor\frac{n}{m}\right\rfloor$. If $E(m)$ occurs in no more than $r$ steps of $\pi$, then $E(m)$ occurs at no more than $r$ of the $m$-step subwalks
\begin{eqnarray*}
(\pi((j-1)m),\pi((j-1)m+1),\ldots,\pi(jm)),\qquad 1\leq j\leq L.
\end{eqnarray*}

Counting the number of ways in which $r$ or fewer of these subwalks can contain an occurrence of $E(m)$, we have that
\begin{eqnarray*}
c_n(r,E(m))&\leq& \sum_{i=1}^{r}\left(\begin{array}{c}L\\i\end{array}\right)c_m^i[c_m(0,E(m))]^{L-i}c_{n-Lm}\\
&\leq &\mu^{Lm}c_{n-Lm}\sum_{i=1}^{r}\left(\begin{array}{c}L\\i\end{array}\right)(1+\epsilon)^{im}(1-\epsilon)^{(L-i)m}
\end{eqnarray*}

For $\xi$ small and positive, we have
\begin{eqnarray*}
&&\sum_{i=0}^{\xi L}\left(\begin{array}{c}L\\i\end{array}\right)(1+\epsilon)^{im}(1-\epsilon)^{(L-i) m}\\
&\leq &(\xi L+1)\left(\begin{array}{c}L\\ \xi L\end{array}\right)\left(\frac{1+\epsilon}{1-\epsilon}\right)^{\xi Lm}(1-\epsilon)^{Lm}
\end{eqnarray*}

The $L$th root of the right hand side converges as $L\rightarrow\infty$ to
\begin{eqnarray*}
f(\xi)=\frac{1}{\xi^{\xi}(1-\xi)^{1-\xi}}\left(\frac{1+\epsilon}{1-\epsilon}\right)^{\xi m}(1-\epsilon)^m,
\end{eqnarray*}
which is strictly less than 1 for $0<\xi<\xi_0$, and some $\xi_0=\xi_0(\epsilon,m)>0$. Therefore when $0<a<\frac{\xi}{m}$, and $R=f(\xi)^{\frac{1}{m}}$, we have
\begin{eqnarray*}
c_n(an, E(m))^{\frac{1}{n}}<R\mu,
\end{eqnarray*}
and the proof is complete.
\end{proof}

\begin{lemma}\label{qtc}Let $G_1=(V(G_1),E(G_1))$ be a component of $G\setminus \Gamma S$. There exists a subgroup $\Gamma_1$ of $\Gamma$ acting quasi-transitively on $G_1$.
\end{lemma}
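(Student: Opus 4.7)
The plan is to take $\Gamma_1$ to be the setwise stabilizer of $V(G_1)$ in $\Gamma$, namely
\[
\Gamma_1 := \{\gamma \in \Gamma : \gamma V(G_1) = V(G_1)\},
\]
and to show that this subgroup acts quasi-transitively on the induced subgraph $G_1$. Because $G \setminus \Gamma S$ is an induced subgraph of $G$ (vertices are removed together with their incident edges), each $\gamma \in \Gamma_1$ restricts to an automorphism of $G_1$, so $\Gamma_1$ does act on $G_1$ by graph automorphisms.

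First I would record that $\Gamma S$ is $\Gamma$-invariant by construction, so every $\gamma \in \Gamma$ permutes the connected components of $G \setminus \Gamma S$. In particular, for any $\gamma \in \Gamma$ the image $\gamma V(G_1)$ is again a component of $G \setminus \Gamma S$, and two distinct components of $G \setminus \Gamma S$ are disjoint.

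The key step is the following dichotomy-based observation: if $v, w \in V(G_1)$ and $w = \gamma v$ for some $\gamma \in \Gamma$, then $\gamma V(G_1)$ is a component of $G \setminus \Gamma S$ that contains $w \in V(G_1)$, and therefore $\gamma V(G_1) = V(G_1)$, i.e.\ $\gamma \in \Gamma_1$. In other words, any two vertices of $V(G_1)$ that lie in the same $\Gamma$-orbit already lie in the same $\Gamma_1$-orbit. Letting $V_1, \ldots, V_k$ denote the finitely many $\Gamma$-orbits on $V$ (whose existence follows from quasi-transitivity of $\Gamma$ on $G$, with $k \le |W|$), it follows that every $\Gamma_1$-orbit on $V(G_1)$ equals a nonempty intersection $V_i \cap V(G_1)$, so there are at most $k$ of them. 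Picking one representative from each such intersection yields a finite fundamental domain for $\Gamma_1$ on $G_1$, giving quasi-transitivity.

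I do not foresee a real obstacle here. The only conceptual point is to recognize that, because the components of $G \setminus \Gamma S$ are $\Gamma$-permuted and pairwise disjoint, no $\Gamma$-identification among vertices of $V(G_1)$ can be realized by an element outside the setwise stabilizer $\Gamma_1$. This reduces the quasi-transitivity of $\Gamma_1$ on $G_1$ directly to that of $\Gamma$ on $G$.
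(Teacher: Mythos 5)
Your proof is correct and follows essentially the same route as the paper: the paper takes $\Gamma_1=\{\gamma\in\Gamma:\gamma w\in V(G_1)\ \forall w\in W_1\}$, which coincides with your setwise stabilizer because $\Gamma$ permutes the pairwise disjoint components of $G\setminus\Gamma S$. Your write-up actually supplies the verification that the paper dismisses as ``straightforward to check,'' in particular the key point that any $\Gamma$-identification between two vertices of $V(G_1)$ is already realized by an element of the stabilizer.
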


\begin{proof}Since $\Gamma$ acts on $G$ quasi-transitively, $V$ has finitely many orbits under the action of $\Gamma$. Let $W_1$ be the subset of $V(G_1)$ consisting of one representative in each orbit of $V$ under the action of $\Gamma$, such that the intersection of the orbit with $V(G_1)$ is nonempty, then $|W_1|<\infty$.

Let 
\begin{eqnarray*}
\Gamma_1=\{\gamma\in\Gamma:\forall w\in W_1,\gamma w\in V(G_1)\}.
\end{eqnarray*}
Then it is straightforward to check that $\Gamma_1$ is a subgroup of $\Gamma$, and that $\Gamma_1$ acts on $G_1$ quasi-transitively.
\end{proof}

\begin{lemma}\label{lcc}There exists a component $G_1=(V(G_1),E(G_1))$ of $G\setminus\Gamma S$, such that $\lambda(\tilde{E}_1)$ is the connective constant of $G_1$, i.e.
\begin{eqnarray}
\lambda(\tilde{E}_1)&=&\lim_{n\rightarrow\infty} \sup_{v\in V(G_1)}\tilde{c}_n(v)^{\frac{1}{n}}\label{d1}\\
&=&\lim_{n\rightarrow\infty}\tilde{c}_n(v)^{\frac{1}{n}},\ \forall v\in V(G_1)\label{d2}
\end{eqnarray}
where $\tilde{c}_n(v)$ is the number of $n$-step SAWs on $G_1$ starting from $v$.
\end{lemma}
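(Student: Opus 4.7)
The plan is to identify $\lambda(\tilde E_1)$ with the maximum of the connective constants of the connected components of $G\setminus\Gamma S$ and then take $G_1$ to be a component that attains this maximum.

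The first step is to unpack the definitions. Since $E^*$ at step $j$ implies $E_1$ at step $j$ in the case $k=1$ (the vertex $\pi(j)\in\gamma S$ is itself visited), the event ``$\tilde E_1$ occurs at step $j$'' is simply ``$\pi(j)\in\Gamma S$''. Therefore $c_n^v(0,\tilde E_1)$ counts $n$-step SAWs starting from $v$ that avoid $\Gamma S$; any such SAW lies entirely in the unique connected component $C(v)$ of $G\setminus\Gamma S$ containing $v$, and the count vanishes whenever $v\in\Gamma S$ and $n\ge 1$. Taking the sup over $v$ yields
\begin{eqnarray*}
c_n(0,\tilde E_1)\;=\;\sup_{C}\,\sup_{v\in V(C)}\tilde c_n^{C}(v),
\end{eqnarray*}
where $C$ ranges over the components of $G\setminus\Gamma S$ and $\tilde c_n^{C}(v)$ is the number of $n$-step SAWs in $C$ starting from $v$.

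Next I would argue that up to the $\Gamma$-action there are only finitely many such components. Fix a finite fundamental domain $W\subset V$ for $\Gamma$. Every component $C$ contains a $\Gamma$-image of some $w\in W\setminus\Gamma S$, so $C$ is $\Gamma$-equivalent to a component containing such a $w$; hence there are at most $|W|$ orbits. Components in the same orbit are graph-isomorphic and therefore have identical SAW-counts. By Lemma \ref{qtc}, each component $C$ carries a subgroup $\Gamma_C\le\Gamma$ acting quasi-transitively, so $C$ has a well-defined connective constant $\mu(C)$; moreover by (\ref{cc})--(\ref{cc1}) applied inside $C$ one has $\lim_n\tilde c_n^C(v)^{1/n}=\lim_n\sup_{v'\in V(C)}\tilde c_n^C(v')^{1/n}=\mu(C)$ for every $v\in V(C)$. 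Let $C_1,\dots,C_r$ be orbit representatives and choose $G_1:=C_{i_0}$ with $\mu(C_{i_0})=\max_i\mu(C_i)=:\mu^*$.

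Finally I would combine the pieces. Because the outer sup in the display above reduces to a sup over the finitely many representatives $C_1,\dots,C_r$, and each of the finitely many sequences $\bigl(\sup_{v\in V(C_i)}\tilde c_n^{C_i}(v)\bigr)^{1/n}$ converges to $\mu(C_i)\le\mu^*$, we obtain $\limsup_n c_n(0,\tilde E_1)^{1/n}=\mu^*=\mu(G_1)$, which is (\ref{d1}); equation (\ref{d2}) is then (\ref{cc1}) applied to $(G_1,\Gamma_{G_1})$. The only non-routine ingredient is the finiteness of the set of $\Gamma$-orbits of components, which the quasi-transitivity of $\Gamma$ delivers via the fundamental-domain argument above; a minor auxiliary point, immediate because translation by $\gamma\in\Gamma$ is a graph isomorphism, is that moving to a $\Gamma$-translate of a component preserves $\tilde c_n^C(v)$, so the sup over components can genuinely be collapsed to a max over finitely many orbit representatives.
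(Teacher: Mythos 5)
Your proof is correct and follows essentially the same route as the paper's: both reduce the supremum defining $\lambda(\tilde E_1)$ to a maximum over finitely many $\Gamma$-orbits (the paper over starting vertices, you over components), take $G_1$ to be a maximizing component, and obtain (\ref{d2}) from Lemma \ref{qtc} together with (\ref{cc})--(\ref{cc1}). Your write-up merely makes the finitely-many-orbits and isomorphism-of-translates steps more explicit than the paper does.
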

\begin{proof}By definition of $\lambda(\tilde{E}_1)$, we have
\begin{eqnarray*}
\lambda(\tilde{E}_1)=\lim_{n\rightarrow\infty}\sup_{v\in V\setminus \Gamma S}\overline{c}_n(v)^{\frac{1}{n}},
\end{eqnarray*}
where $\overline{c}_n(v)$ is the number of $n$-step SAWs starting from $v$ on the component of $G\setminus \Gamma S$ including $v$.

By the quasi-transitivity of $G\setminus \Gamma S$ under the action of $\Gamma$, there exists $v_1\in G\setminus\Gamma S$, such that 
\begin{eqnarray*}
\lambda(\tilde{E}_1)=\limsup_{n\rightarrow\infty} \overline{c}_n(v_1)^{\frac{1}{n}}.
\end{eqnarray*}
Let $G_1$ be the component of $G\setminus \Gamma S$ containing $v_1$, then
\begin{eqnarray*}
\lambda(\tilde{E}_1)=\limsup_{n\rightarrow\infty}\overline{c}_n(v_1)^{\frac{1}{n}}\leq\lim_{n\rightarrow\infty}\sup_{v\in V(G_1)}\overline{c}_n(v)^{\frac{1}{n}}\leq \lim_{n\rightarrow\infty} \sup_{v\in V\setminus \Gamma S}\overline{c}_n(v)^{\frac{1}{n}}=\lambda(\tilde{E}_1).
\end{eqnarray*}
Therefore $\lambda(\tilde{E}_1)$ is the connective constant for $G_1$, and (\ref{d1}) follows. The identity (\ref{d2}) follows from Lemma \ref{qtc} and (\ref{cc}), (\ref{cc1}).
\end{proof}

\begin{lemma}\label{l2}$\lambda(\tilde{E}_1)<\mu$.
\end{lemma}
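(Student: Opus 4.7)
The plan is to invoke a Kesten-style pattern theorem for SAWs on the quasi-transitive graph $G$, applied to a short pattern that forces a visit to $\Gamma S$. By Lemma~\ref{lcc}, $\lambda(\tilde{E}_1)$ equals the connective constant of some component $G_1$ of $G\setminus \Gamma S$, so the goal reduces to showing this quantity is strictly less than $\mu$. The strategy is to produce a pattern whose absence from an SAW is implied by that SAW avoiding $\Gamma S$ entirely, and then to conclude via the pattern theorem.

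First I would exhibit a short SAW $P$ on $G$ that visits $\Gamma S$. Since $G$ is connected and $G\setminus \Gamma S$ is nonempty, there is an edge $(x,y)$ of $G$ with $x\in \Gamma S$ and $y\in V\setminus\Gamma S$; extending this edge by one more step (possible because $G$ is locally finite and has no isolated vertices) yields an SAW of length at most two passing through a vertex of $\Gamma S$. I would next verify that $P$ qualifies as a proper internal pattern in the quasi-transitive sense (as in \cite{GrL3, MS96}): the $\Gamma$-translates $\gamma P$ can be realized as interior sub-SAWs of arbitrarily long SAWs on $G$, uniformly in $\gamma$. This should follow from the infiniteness, connectivity and quasi-transitivity of $G$, which supply infinite SAW extensions at each endpoint of $P$ and transfer them uniformly across the $\Gamma$-orbit of $P$. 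The pattern theorem then delivers a constant $\delta>0$ such that
\[
c_n(0,\tilde{E}_1)\ \leq\ \mu^n e^{-\delta n}
\]
for all sufficiently large $n$, since any SAW that avoids $\Gamma S$ cannot contain any $\Gamma$-translate of $P$ (each such translate passes through a vertex of $\Gamma S$). Taking $n$-th roots and the limit superior yields $\lambda(\tilde{E}_1)\leq \mu e^{-\delta}<\mu$.

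The principal obstacle is the verification that $P$ is a proper internal pattern in the quasi-transitive sense. Such versions of Kesten's theorem (cf.\ \cite{GrL3, GLgmn}) require the pattern to be insertable at a positive density of positions along typical long SAWs, with uniform bounds across the $\Gamma$-orbits. The structural input for this verification is provided by Assumption~\ref{ap31}: the push-forward automorphisms $\phi(S,A)$ guarantee that the end structure of $G$ is uniform under $\Gamma$, which permits the uniform insertion of translates of $P$ required by the theorem. Once this verification is complete, the displayed bound on $c_n(0,\tilde{E}_1)$ and the conclusion $\lambda(\tilde{E}_1)<\mu$ follow as indicated.
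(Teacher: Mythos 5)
There is a genuine gap here: your argument delegates the entire content of the lemma to an off-the-shelf ``pattern theorem for quasi-transitive graphs,'' and no such theorem is available in the generality you need. Kesten's pattern theorem and its known extensions require an \emph{insertion procedure}: one must be able to modify an SAW locally so as to create an occurrence of the pattern while keeping the rest of the walk intact and self-avoiding, at a positive density of locations, with uniformly bounded multiplicity in the resulting double count. You correctly identify the verification that $P$ is a ``proper internal pattern'' as the principal obstacle, but you then assert, rather than prove, that Assumption~\ref{ap31} supplies it. It does not do so in the naive way: an SAW counted by $c_n(0,\tilde{E}_1)$ lives entirely in one component $G_1$ of $G\setminus\Gamma S$, and to insert a visit to a nearby copy $\gamma S$ you would have to route the walk out to $\gamma S$ and back into $G_1$ without intersecting itself, which is in general impossible (the walk may have no self-avoiding return route). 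This insertion step is exactly the mathematical content of Lemma~\ref{l2}, and it is why the paper describes its use of the pattern theorem as ``a variation \ldots in a surprising way'' rather than a citation.

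The paper's proof resolves this by a global, not local, surgery: given an SAW $\pi$ in $G_1$, it selects $\kappa n$ well-separated points each within bounded distance of some $\gamma_{j_i}S$ (this uses quasi-transitivity and the bound $N_0$ on the diameter of a fundamental domain), and for a chosen subset $H$ of $\delta n$ of them it iteratively (i) walks from $\pi(h_\ell)$ into the nearby copy of $S$ along a short path, and (ii) applies the automorphism $\phi(S_\ell,A_\ell)$ of Assumption~\ref{ap31}(3) to map the entire remaining tail of the walk into a \emph{different} component of $G\setminus S_\ell$, reconnecting across $S_\ell$ by a bounded-length path. Self-avoidance is then automatic because the tail has been moved to a new component rather than threaded back through the old one. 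A double count of the pairs $(\pi,H)$, as in the proof of Lemma~\ref{l1}, yields an inequality that is violated when $\lambda(\tilde E_1)=\mu$ and $\delta/\kappa$ is small. If you want to salvage your outline, you must replace the appeal to a general pattern theorem with this (or an equivalent) explicit construction; as written, the key step is missing.
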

\begin{proof}The main idea to prove the lemma is to ``lift" an SAW on a component to $G\setminus \Gamma S$ to SAWs on $G$; there are multiple ways of doing this. Different ways of lifting one SAW on a component of $G\setminus \Gamma S$ to multiple SAWs on $G$ will give a nontrivial exponential factor on the number of SAWs on these two graphs; and therefore a strict inequality on the corresponding connective constants.

Since $\Gamma$ acts on $G$ quasi-transitively, let $W$ be a fundamental domain such that $|W|<\infty$. Let 
\begin{eqnarray*}
N_0=\max_{w_1,w_2\in W}\mathrm{dist}_{G}(w_1,w_2).
\end{eqnarray*}
For any $v\in V$, let 
\begin{eqnarray*}
B_G(v,N_0)=\{u\in V:\mathrm{dist}_{G}(v,u)\leq N_0\}.
\end{eqnarray*}
It is not hard to see that for any $v\in V$, $B_G(v,N_0)$ contains a vertex in each orbit of $V$ under the action of $\Gamma$.
Therefore for any $v\in V$, there exists $\gamma S\in \Gamma S$, such that 
\begin{eqnarray*}
B_G(v,N_0)\cap \gamma S\neq \emptyset.
\end{eqnarray*}

Let $G_1$ be a component of $G\setminus\Gamma S$, such that $\lambda(\tilde{E}_1)$ is the connective constant of $G_1$. The existence of $G_1$ is guaranteed by Lemma \ref{lcc}.
Let $T_n(v)$ be the set of all $n$-step SAWs on $G_1$ starting from the vertex $v$. For each $\pi\in T_n(v)$, find indices $j_1,\ldots, j_u$, such that for $i\neq \ell$, we have
\begin{eqnarray}
\mathrm{dist}_{G_1}(\pi(j_i),\pi(j_{\ell}))\geq 8(N_0+|S|)+N,\label{a}
\end{eqnarray}
here $N$ is given by Assumption \ref{ap31} (3).

We may assume that $u=\kappa n$. For each $B_{G}(\pi(j_i), N_0)$, there exists $\gamma_{j_i}S\in\Gamma S$, such that 
\begin{eqnarray}
B_G(\pi(j_i),N_0)\cap \gamma_{j_i} S\neq \emptyset.\label{ine}
\end{eqnarray}
Let $\pi(t_i)$ be a closest vertex on $\pi$, in graph distance of $G$,  to $\gamma_{j_i}S$. By (\ref{ine}), we have
\begin{eqnarray}
\mathrm{dist}_{G}(\pi(t_i),\gamma_{j_i}S)\leq N_0.\label{dln}
\end{eqnarray}
 Then
\begin{eqnarray}
\mathrm{dist}_{G}(\pi(j_i),\pi(t_i))&\leq& \mathrm{dist}_{G}(\pi(j_i),\gamma_{j_i}S)+\mathrm{dist}_{G}(\gamma_{j_i}S,\pi(t_i))+|S|\label{b}
\\&\leq& 2N_0+|S|.\notag
\end{eqnarray}
By rearrangements if necessary, we may assume that 
\begin{eqnarray*}
t_1<t_2<\ldots<t_{\kappa n}.
\end{eqnarray*}

We choose a subset of indices  $H\subset \{t_1,t_2,\ldots,t_{\kappa n}\}$ to perform a manipulation, which will be described later. Assume that $H=\{h_1,\ldots, h_{\delta n}\}$, where $0<\delta<\kappa$, and $h_a<h_b$ if $a<b$. For each $h_{\ell}\in H$ with $h_{\ell}=t_i$, let $S_{\ell}=\gamma_{j_i}S$.

We construct a new SAW $\pi_1$ as follows
\begin{itemize}
\item let $\alpha_1$ be the subwalk of $\pi$ from $\pi(0)$ to $\pi(h_1)$;
\item use a shortest path to join $\pi(h_1)$ and $S_1$, denoted by $\omega_1$;
\item let $A_1$ be the component of $G\setminus S_1$ containing $\pi(h_1)$; use $\phi(S_1,A_1)$ to map the concatenation of the reversed $\omega_1$ and $\pi(h_1),\pi(h_1+1),\ldots,\pi(n)$ to another component of $G\setminus S_1$, and denote the image of the concatenation of the two subwalks under $\phi(S_1,A_1)$ by $\theta_1$ - Here $\phi(S_1,A_1)$ is given as in Assumption \ref{ap31} (3);
\item use an SAW $\beta_1$ in $G\setminus[A_1\cup\phi(S_1,A_1)A_1]$ to join the last vertex of $\omega_1$ and its image under $\phi(S_1,A_1)$ (note that the existence of $\beta_1$ is guaranteed by Assumption \ref{ap31});
\item let $\pi_1$ be the concatenation of $\alpha_1$, $\omega_1$, $\beta_1$ and $\theta_1$.
\end{itemize}
It is not hard to check that $\pi_1$ is indeed an SAW. See Figure \ref{fig:rel}.
  \begin{figure}[hb]
  \centering
  \includegraphics[width=15cm]{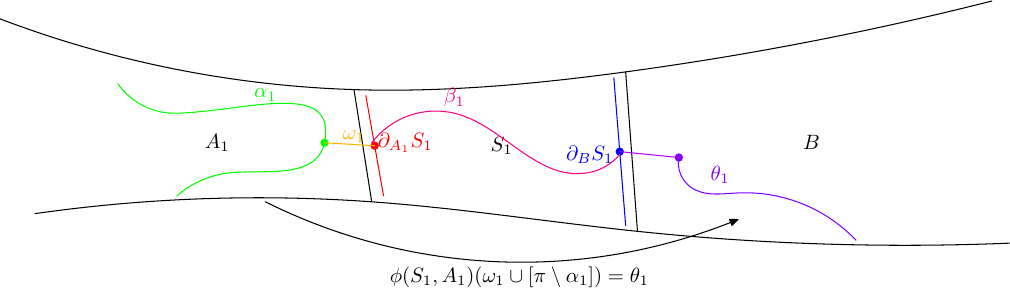}
  \caption{}\label{fig:rel}
\end{figure}
Note that $\alpha_1$ is a subwalk of both $\pi$ and $\pi_1$; and the subwalk $\pi\setminus \alpha_1$ is mapped, by $\phi(S_1,A_1)$ to a subwalk of $\theta_1$. Let $h_1^1=h_1$, and let $h_i^1\ (2\leq i\leq \delta n)$ be the step in $\pi_1$ of the image of $\pi(h_i)$ under $\phi(S_1,A_1)$; i.e.,
\begin{eqnarray}
\phi(S_1,A_1)[\pi(h_i)]=\pi_1(h_i^1).\label{img}
\end{eqnarray}

We now prove the following lemma
\begin{lemma}\label{ll25}
For $2\leq i\leq \delta n$, $\pi_1(h_i^1)$ is a closest vertex on $\pi_1$ to $\phi(S_1,A_1)S_i$. 
\end{lemma}
\begin{proof}
First of all, since $\pi_1(h_2^1-h_2+h_1),\ldots,\pi_1(n+h_2^1-h_2)$ is the image of $\pi(h_1),\ldots, \pi(n)$ under $\phi(S_1,A_1)$, we have that for $2\leq i\leq \delta n$, $\pi_1(h_i^1)$ is a closest vertex on the subwalk $\pi_1(h_1^1-h_2+h_1),\ldots,\pi_1(n-h_2+h_2^1)$ to $\phi(S_1,A_1)S_i$. Moreover, by (\ref{dln}) and (\ref{img}), we have
\begin{eqnarray*}
\mathrm{dist}_{G}(\pi_1(h_i^1),\phi(S_1,A_1)S_i)\leq N_0, \ \mathrm{for}\ 2\leq i\leq \delta n.
\end{eqnarray*}
However, for any vertex $v$ on the subwalk $\pi(0),\ldots, \pi_1(h_2^1-h_2+h_1)$, we have
\begin{eqnarray}
\mathrm{dist}_{G}(v,\phi(S_1,A_1)S_i)> N_0, \ \mathrm{for}\ 2\leq i\leq \delta n.\label{c}
\end{eqnarray}
To see why (\ref{c}) is true, assume $S_i=\gamma_{j_k}S$ and $h_1=t_l$; note that from (\ref{a}), (\ref{ine}), (\ref{b}), we have
\begin{eqnarray*}
&&\mathrm{dist}_{G}(\pi_1(h_2^1-h_2+h_1),\phi(S_1,A_1)S_i)\\
&=&\mathrm{dist}_{G}(\phi(S_1,A_1)\pi (h_1),\phi(S_1,A_1)S_i)\\
&=&\mathrm{dist}_{G}(\pi (h_1),S_i)\\
&\geq& \mathrm{dist}_{G}(\pi(j_k),\pi(j_l))-\mathrm{dist}_{G}(\pi(j_k),S_i)-\mathrm{dist}_{G}(\pi(j_l),\pi(t_l)) \\
&\geq& 8(N_0+|S|)+N-N_0-2N_0-|S|\\
&\geq& 5N_0+7|S|+N.
\end{eqnarray*}
For any $u\in\beta_1\cup[\theta_1\setminus(\pi_1(h_2^1-h_2+h_1),\ldots,\pi_1(n-h_2+h_2^1))]\cup\omega_1$, we have
\begin{eqnarray*}
\mathrm{dist}_{G}(u,\pi_1(h_2^1-h_2+h_1))\leq 2|\omega_1|+|\beta_1|\leq 2 N_0+N.
\end{eqnarray*}
Then
\begin{eqnarray*}
\mathrm{dist}_{G}(u,\phi(S_1,A_1)S_i)&\geq&\mathrm{dist}_{G}(\pi_1(h_2^1-h_2+h_1),\phi(S_1,A_1)S_i)-\mathrm{dist}_{G}(u,\pi_1(h_2^1-h_2+h_1)) \\
&\geq& 3N_0+7|S|.
\end{eqnarray*}
For any $u\in \alpha_1\subseteq \pi$, since $\pi$ is an $n$-step SAW on $G_1$, and $G_1$ is a component of $G\setminus \Gamma S$, we deduce that $u\in A_1$ and $\pi(h_i)\in A_1$. Hence
\begin{eqnarray*}
\pi_1(h_i^1)=\phi(S_1,A_1)[\pi(h_i)]\in B_1.
\end{eqnarray*}
Since any path joining $u$ and $\pi_1(h_i^1)$ must cross $S_1$, we obtain
\begin{eqnarray*}
\mathrm{dist}_{G}(u,\phi(S_1,A_1)S_i)&\geq& \mathrm{dist}_{G}(u,\pi_1(h_i^1))-\mathrm{dist}_{G}(\pi_1(h_i^1),\phi(S_1,A_1)S_i)\\
&\geq & \mathrm{dist}_{G}(u,S_1)+\mathrm{dist}_{G}(\pi_1(h_i^1),S_1)-\mathrm{dist}_{G}(\pi_1(h_i^1),\phi(S_1,A_1)S_i)\\
&\geq &2N_0-N_0=N_0.
\end{eqnarray*}
This way we obtain (\ref{c}), and the lemma follows.
\end{proof}

Let 
\begin{eqnarray*}
\pi_0:=\pi,\qquad h_i^0=h_i,
\end{eqnarray*}
for $1\leq i\leq \delta n$.
Assume that we have constructed SAWs $\pi_1,\ldots, \pi_s$, where $1\leq s<\delta n$. Assume that 
$\alpha_s=(\pi_{s-1}(0),\ldots,\pi_{s-1}(h_s^{s-1}))$ is a subwalk of both $\pi_{s-1}$ and $\pi_s$, and that $\pi_{s-1}\setminus \alpha_s$ is mapped by a graph automorphism $\phi_s\in \Gamma$ to a subwalk of $\pi_s$. Let $h_i^s=h_i^{s-1}$, for $1\leq i\leq s$, and let $h_i^s\ (s+1\leq i\leq \delta n)$ be the step in $\pi_s$ of the image of $\pi_{s-1}(h_i^{s-1})$ under the graph automorphism $\phi_s$. Assume also that $\pi_s(h_q^s)$ is the closest vertex, in graph distance, on $\pi_s$ to $\phi_s\ldots\phi_1 S_q$ for $s+1\leq q\leq\delta n$.

 Now we construct an SAW $\pi_{s+1}$, following the procedure below.
 \begin{itemize}
 \item let $\alpha_{s+1}$ be the subwalk of $\pi_s$ from $\pi_s(0)$ to $\pi_s(h_{s+1}^s)$;
 \item use a shortest path $\omega_{s+1}$ to join $\pi_s(h_{s+1}^s)$ and $\tilde{S}_{s+1}:=\phi_s\ldots\phi_1 S_{s+1}$;
 \item let $A_{s+1}$ be the component of $G\setminus \tilde{S}_{s+1}$ containing $\pi_s(h_{s+1}^s)$; use $\phi_{s+1}:=\phi(\tilde{S}_{s+1},A_{s+1})$ to map the concatenation of the reversed $\omega_{s+1}$ and $\pi_s(h_{s+1}^s),\ldots,\pi_s(n+h_{s+1}^s-h_{s+1})$ to another component of $G\setminus \tilde{S}_{s+1}$, and denote the image of the concatenation of the two subwalks under $\phi(\tilde{S}_{s+1},A_{s+1})$ by $\theta_{s+1}$;
 \item use an SAW $\beta_{s+1}$ in $G\setminus[A_{s+1}\cup\phi(\tilde{S}_{s+1},A_{s+1})A_{s+1}]$ to join the last vertex of $\omega_{s+1}$ and its image under $\phi_{s+1}$ (note that the existence of $\beta_{s+1}$ is guaranteed by Assumption \ref{ap31});
 \item let $\pi_{s+1}$ be the concatenation of $\alpha_{s+1}$, $\omega_{s+1}$, $\beta_{s+1}$ and $\theta_{s+1}$.
 \end{itemize}
 We can check that $\pi_{s+1}$ is indeed an SAW. See Figure \ref{fig:reg}.
   \begin{figure}[hb]
  \centering
  \includegraphics[width=15cm]{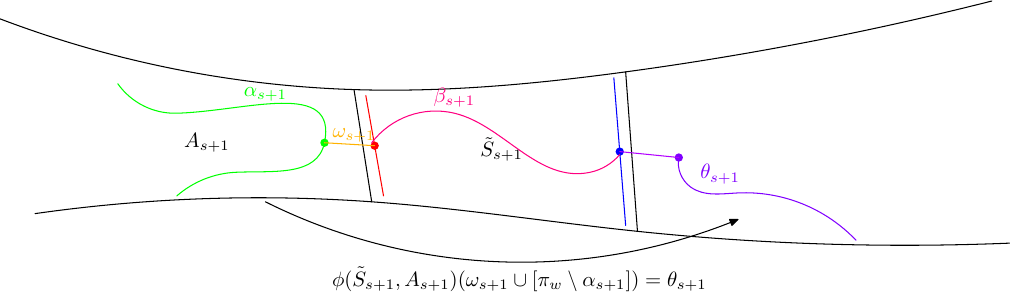}
  \caption{}\label{fig:reg}
\end{figure}

 We can also show that that $\pi_{s+1}(h_p^{s+1})$ is a closest vertex on $\pi_{s+1}$ to $\phi_{s+1}\ldots\phi_1 S_p$, for $s+2\leq p\leq\delta n$, where $\pi_{s+1}(h_p^{s+1})=\phi_{s+1}(\pi_s(h_p^s))$, by similar arguments as in the proof of Lemma \ref{ll25}.
 We repeat the above induction process until we construct the SAW $\pi_{\delta n}$. Note that $\pi_{\delta n}$ is an SAW whose length is at least $n$ and at most $n+\delta n(N+2N_0)$.

 We consider the number of pairs $(\pi,H)$, and obtain
\begin{eqnarray}
|(\pi,H)|\geq \overline{c}_n(v)\left(\begin{array}{c}\kappa n\\\delta n\end{array}\right).\label{p1}
\end{eqnarray}
Moreover,
\begin{eqnarray}
|(\pi, H)|\leq \left(\sum_{t=n}^{n+\delta n(N+2N_0)}c_t(v)\right)\left[d^{|S|}|S|(2N_0+N)^4\right]^{2\delta n}\label{pp2}
\end{eqnarray}
where $d$ is the maximal vertex degree of $G$; $d$ is obviously finite since $G$ is quasi-transitive and locally finite. Taking $n$th root of (\ref{p1}), (\ref{pp2}), and letting $n\rightarrow\infty$, we have
\begin{eqnarray}
\lambda(\tilde{E}_1)\frac{\kappa^{\kappa}}{\delta^{\delta}(\kappa-\delta)^{\kappa-\delta}}\leq \mu\left(d^{|S|}|S|(2N_0+N)^4\right)^{2\delta}.\label{req}
\end{eqnarray}
Then the proof is completed by the following lemma:
\begin{lemma}If (\ref{req}) holds, we have $\lambda(\tilde{E})<\mu$.
\end{lemma}
\begin{proof}
We claim that (\ref{req}) cannot hold when $\lambda(\tilde{E}_1)=\mu$, and $\frac{\delta}{\kappa}>0$ and is sufficiently small. That is because
\begin{eqnarray*}
&&\lim_{\delta\rightarrow 0}\frac{\kappa^{\kappa}}{\delta^{\delta}(\kappa-\delta)^{\kappa-\delta}}=1.\\
&&\lim_{\delta\rightarrow 0}\left(d^{|S|}|S|(2N_0+N)^4\right)^{2\delta }=1.
\end{eqnarray*}
Let
\begin{eqnarray*}
f_1(\delta)&=&\kappa \log \kappa-\delta\log \delta-(\kappa-\delta)\log (\kappa-\delta)\\
f_2(\delta)&=&2\delta\log[d^{|S|}|S|(2N_0+N)^4]
\end{eqnarray*}
Then
\begin{eqnarray*}
f_1'(\delta)=\log \frac{\kappa-\delta}{\delta};\qquad f_2'(\delta)=2 \log[d^{|S|}|S|(2N_0+N)^4]
\end{eqnarray*}
Hence
\begin{eqnarray*}
\lim_{\delta\rightarrow 0}f_1'(\delta)=+\infty;\qquad \lim_{\delta\rightarrow 0}f_2'(\delta)=2 \log[d^{|S|}|S|(2N_0+N)^4]<+\infty.
\end{eqnarray*}
Then when $\delta$ is sufficiently small, we have
\begin{eqnarray*}
\frac{\kappa^{\kappa}}{\delta^{\delta}(\kappa-\delta)^{\kappa-\delta}}> \left(d^{|S|}|S|(2N_0+N)^4\right)^{2\delta};
\end{eqnarray*}
which contradicts (\ref{req}). Therefore we have $\lambda(\tilde{E}_1)<\mu$.
\end{proof}
\end{proof}

To prove Theorem \ref{m31}, we will analyze both the case $\lambda(E^*)=\mu$ and the case $\lambda(E^*)<\mu$.

\begin{lemma}
\begin{enumerate}
\item If 
\begin{eqnarray}
\lambda(E^*)=\mu\label{ls};
\end{eqnarray} 
then there exists an integer $k$ with $1\leq k<|S|$, such that
\begin{enumerate}
\item there exists a vertex $v_0$ such that 
\begin{eqnarray}
\limsup_{n\rightarrow\infty}c_n^{v_0}(0,\tilde{E}_{k+1})^{\frac{1}{n}}=\mu.\label{ct1}
\end{eqnarray}

\item for any vertex $v_0$ of $G$,
\begin{eqnarray}
\limsup_{n\rightarrow\infty}c_n^{v_0}(an, \tilde{E}_k(m))^{\frac{1}{n}}<\mu,\label{ct2}
\end{eqnarray}
where $a,m$ satisfy (\ref{pst}).
\end{enumerate}
\item  If 
\begin{eqnarray}
\lambda(E^*)<\mu\label{lmu};
\end{eqnarray} 
 for any vertex $v_0$ of $G$,
\begin{eqnarray}
\limsup_{n\rightarrow\infty}c_n^{v_0}(an,E^*(m))^{\frac{1}{n}}<\mu.\label{aml}
\end{eqnarray}
\end{enumerate}
\end{lemma}

\begin{proof} We first assume (\ref{ls}). Let $1\leq k\leq |S|$. We make the following observations. First, $c_n(0,\tilde{E}_k)$ is a non-decreasing function of $k$; secondly, if $E^*$ does not occur on a given walk, then $E_{|S|}$ cannot occur. Therefore
\begin{eqnarray*}
c_n(0,E^*)\leq c_n(0,\tilde{E}_{|S|})\leq c_n.
\end{eqnarray*}
As a result, (\ref{ls}) implies
\begin{eqnarray*}
\limsup_{n\rightarrow\infty} c_n(0,\tilde{E}_{|S|})^{\frac{1}{n}}=\mu.
\end{eqnarray*}

By Lemma \ref{l2}, $\lambda(\tilde{E}_1)<\mu$. We may choose $k$ with $1\leq k<|S|$, such that 
\begin{eqnarray}
\lambda(\tilde{E}_k)<\mu,\qquad\lambda(\tilde{E}_{k+1})=\mu\label{dk}
\end{eqnarray}

Let $\epsilon,m$ satisfy (\ref{em1}) and let $m\geq M$ satisfy (\ref{em2}). By Lemma \ref{l1}, there exists $a=a(\epsilon,m)>0$, such that 
\begin{eqnarray}
\limsup_{n\rightarrow\infty} c_n(an, \tilde{E}_k(m))^{\frac{1}{n}}<\mu.\label{pst}
\end{eqnarray}

By quasi-transitivity of $G$, $\lambda(\tilde{E}_{k+1})=\mu$ implies (\ref{ct1}).

Moreover, $\lambda(\tilde{E}_k)<\mu$ implies (\ref{ct2}).

Now assume that (\ref{lmu}), by quasi-transitivity of $G$, (\ref{cc1}) and Lemma \ref{l1}, for any vertex $v_0$ of $G$,
\begin{eqnarray}
\limsup_{n\rightarrow\infty}c_n^{v_0}(an,E^*(m))^{\frac{1}{n}}<\lim_{n\rightarrow\infty}c_n(v_0)^{\frac{1}{n}}=\mu.\label{aml}
\end{eqnarray}
\end{proof}

Let $\pi$ be an $n$-step SAW on $G$. We say that $F$ occurs at the $j$th step of $\pi$ when one of the following two cases occur:
\begin{letlist}
\item if $\lambda(E^*)<\mu$, then
\begin{itemize}
\item $E^*(m)$ occurs at the $j$th step of $\pi$; and
\item assume that $\pi(j)\in \gamma S$, $\gamma\in \Gamma$ and the subwalk $(\pi(j-m),\ldots,\pi(j+m))$ visits all the vertices of $\gamma S$. Let $\pi(\alpha)$ ($\alpha\geq j-m$) be the first vertex of $\gamma S$ visited by $\pi$, and let $\pi(\beta)$ ($\alpha<\beta\leq j+m$) be the last vertex of $\gamma S$ visited by $\pi$. Then $\pi(\alpha-1)$ and $\pi(\beta+1)$ are in distinct components of $G\setminus \gamma S$.
\end{itemize}
\item if $\lambda(E^*)=\mu$, let $k$ be as in (\ref{dk}), then
\begin{itemize}
\item $\tilde{E}_k(m)$ occurs at the $j$th step, and $\tilde{E}_{k+1}$ does not occur at the $j$th step; and
\item assume that $\pi(j)\in \gamma S$, $\gamma\in \Gamma$ and $\pi$ visits exactly $k$ vertices of $\gamma S$. Let $\pi(\alpha)$ ($\alpha\geq j-m$) be the first vertex of $\gamma S$ visited by $\pi$, and let $\pi(\beta)$ ($\alpha<\beta\leq j+m$) be the last vertex of $\gamma S$ visited by $\pi$. Then $\pi(\alpha-1)$ and $\pi(\beta+1)$ are in distinct components of $G\setminus \gamma S$.
\end{itemize}
\end{letlist}

For $r\geq 0$, and $v\in V$
\begin{itemize}
\item if $\lambda(E^*)<\mu$, let $b_n^v(r,F)$ be the number of $n$-step SAWs on $G$ starting from $v$, such that $E^*(m)$ occurs at least $an$ times, and $F$ occurs no more than $r$ steps;
\item if $\lambda(E^*)=\mu$, let $b_n^v(r,F)$ be the number of $n$-step SAWs on $G$ starting from $v$, such that $\tilde{E}_k(m)$ occurs at least $an$ times, $E_{k+1}$ never occurs, and $F$ occurs in no more than $r$ steps. 
\end{itemize}

\begin{lemma}\label{l25}
\begin{letlist}
\item If $\lambda(E^*)<\mu$, then for any $v\in V$, we have
\begin{eqnarray*}
\limsup_{n\rightarrow\infty}b_n^v(0,F)^{\frac{1}{n}}<\mu.
\end{eqnarray*}
\item If $\lambda(E^*)=\mu$, let $v$ be a vertex satisfying (\ref{ct1}), then
\begin{eqnarray*}
\limsup_{n\rightarrow\infty}b_n^v(0,F)^{\frac{1}{n}}<\mu.
\end{eqnarray*}
\end{letlist}
\end{lemma}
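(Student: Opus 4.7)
The plan is to adapt the iterated-surgery construction from Lemma \ref{l2} to the more restrictive class counted by $b_n^v(0,F)$, exploiting the negation of the ``distinct components'' condition in the definition of $F$. I treat cases (a) and (b) uniformly: in case (a) the marker event is $E^*(m)$, and in case (b) the marker event is $\tilde{E}_k(m)$ with the additional stipulation that exactly $k$ vertices of $\gamma S$ are visited (forced by $E_{k+1}$ never occurring, and which is what makes the vertices $\pi(\alpha),\pi(\beta)$ well-defined as the first/last visited vertices of $\gamma S$). In either case, $F$ failing at a marker step $j$ means that, for the corresponding translate $\gamma S$, the neighbours $\pi(\alpha-1)$ and $\pi(\beta+1)$ of the subwalk across $\gamma S$ lie in the \emph{same} component $A$ of $G\setminus\gamma S$.

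For $\pi$ counted by $b_n^v(0,F)$, the marker event occurs at $\ge an$ steps, so a greedy selection produces indices $j_1<\cdots<j_{\kappa n}$ at which the marker event occurs and whose pairwise $G$-distances exceed $8(N_0+|S|)+N$, as in \eqref{a}, where $\kappa>0$ depends only on $a,m$ and the geometry of $G,S$. For each subset $H\subseteq\{j_1,\dots,j_{\kappa n}\}$ of size $\delta n$ I then perform the flip-and-splice surgery of Lemma \ref{l2} at the steps in $H$ in increasing order: at step $h_s\in H$, apply the automorphism $\phi(\tilde{\gamma}_s S,A_s)$ supplied by Assumption \ref{ap31}(3) to map the current tail of the walk past the patch $\tilde{\gamma}_s S$ into a different component of $G\setminus\tilde{\gamma}_s S$, and splice with a short path of length at most $N$. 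The fact that $F$ does not occur at $h_s$, i.e.\ that the immediate neighbours of $\tilde{\gamma}_s S$ on either side of the visit lie in the same component $A_s$, is precisely what guarantees that the untouched head of the walk and the $\phi$-image of the tail are disjoint, so that the output is a genuine SAW of length between $n$ and $n+C\delta n$ for some constant $C=C(N,N_0)$.

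Double-counting pairs $(\pi,H)$ then yields
\begin{eqnarray*}
b_n^v(0,F)\binom{\kappa n}{\delta n}\le\left(\sum_{t=n}^{n+C\delta n}c_t\right)\bigl(d^{|S|}|S|(2N_0+N)^4\bigr)^{2\delta n},
\end{eqnarray*}
with the combinatorial factor accounting for the bounded multiplicity of the map from $(\pi,H)$ to the output SAW, since the surgery locations and the $O(\delta n)$ splice edges can be recovered from the output. Taking $n$th roots, sending $n\to\infty$, and applying Stirling's formula to $\binom{\kappa n}{\delta n}$ gives
\begin{eqnarray*}
\limsup_{n\to\infty}b_n^v(0,F)^{1/n}\cdot\frac{\kappa^{\kappa}}{\delta^{\delta}(\kappa-\delta)^{\kappa-\delta}}\le\mu\bigl(d^{|S|}|S|(2N_0+N)^4\bigr)^{2\delta},
\end{eqnarray*}
and choosing $\delta/\kappa>0$ small enough forces $\limsup b_n^v(0,F)^{1/n}<\mu$, handling both cases (a) and (b) at once.

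The main obstacle, as in Lemma \ref{l2}, is verifying that iterating the surgery $\delta n$ times preserves self-avoidance and produces distinct outputs for distinct subsets $H$. The ``same component'' consequence of $F$ not occurring is essential here: without it, the $\phi$-image of the tail could re-enter the head of the walk, spoiling the construction. Once this is verified, the remaining calculation is a direct repetition of the counting argument at the end of Lemma \ref{l2}, with $\overline{c}_n(v)$ replaced by $b_n^v(0,F)$.
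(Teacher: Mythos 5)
Your proposal is correct and follows essentially the same route as the paper: select $\kappa n$ well-separated occurrences of the marker event, use the failure of $F$ (so that $\pi(\alpha-1)$ and $\pi(\beta+1)$ lie in the same component of $G\setminus\gamma S$) to perform the iterated flip-and-splice surgery at $\delta n$ chosen locations, and conclude by double-counting pairs $(\pi,H)$ with $\delta/\kappa$ small. The only (harmless) imprecision is in your reconstruction multiplicity: inverting the surgery also requires recovering each deleted crossing subwalk of length at most $2m$, so the per-surgery factor should include a term like $\sum_{i=1}^{2m}c_i$ as in the paper, but since this is still a constant raised to the power $\delta n$ the conclusion is unchanged.
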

\begin{proof}Let $v$ be a vertex satisfying the assumptions of the lemma. Assume that
\begin{eqnarray}
\limsup_{n\rightarrow\infty}b_n^v(0,F)^{\frac{1}{n}}=\mu;\label{bem}
\end{eqnarray}
we will obtain a contradiction. The idea is to modify those SAWs where $F$ never occurs to SAWs where $F$ occurs $\delta N (0<\delta<1)$ times; different ways of modifications give a nontrivial exponential factor to the total number of $n$-step SAWs compared to $b_n^v(0,F)$. Therefore if (\ref{bem}) holds, then the connective constant must be strictly greater than $\mu$.

We define a set $U_n$ of SAWs as follows
\begin{itemize}
\item If $\lambda(E^*)<\mu$, let $U_n$ be the set consisting of all the $n$-step SAWs on $G$ starting from $v$ such that $E^*(m)$ occurs at least $an$ times, and $F$ never occurs. .
\item If $\lambda(E^*)=\mu$, let $k$ be given as in (\ref{dk}) and let $U_n$ be the set consisting of all the $n$-step SAWs on $G$ starting from $v$, such that $\tilde{E}_k(m)$ occurs at least $an$ times, and $\tilde{E}_{k+1}$ never occurs, and $F$ never occurs.
\end{itemize}

Let $\pi\in U_n$. Let $j_1,\ldots,j_u$ be the indices of the SAW $\pi$ such that one of the following is true 
\begin{itemize}
\item $E^*(m)$ occurs at $j_1,\ldots,j_u$ if $\lambda(E^*)<\mu$; or
\item $E_k(m)$ occurs if $\lambda(E^*)=\mu$. 
\end{itemize}
Moreover, assume that for any $1\leq i<\ell\leq u$,
\begin{eqnarray}
\mathrm{dist}_G(\pi(j_i),\pi(j_{\ell}))\geq 4(2m+1)|S|.\label{dil}
\end{eqnarray}

We may assume $u=\kappa n$ such that (\ref{dil}) holds. We choose a subset 
\begin{eqnarray*}
H=\{t_1,\ldots, t_{\delta n}\}\subset\{j_1,\ldots, j_{\kappa n}\},
\end{eqnarray*}
where $0<\delta<\kappa$, $t_i<t_j$ when $i<j$, to perform the following inductive manipulations.

 For $1\leq i\leq \delta n$, let $\gamma_i S$ be the copy of $S$ (where $\gamma_i\in\Gamma$) such that $\pi(t_i)\in \gamma_i S$, and all the vertices of $\gamma_i S$ are visited by $\pi$ and by the subwalk $(\pi(t_i-m),\ldots,\pi(t_i+m))$ if $\lambda(E^*)<\mu$ (exactly $k$ vertices of $\gamma_i S$ are visited by the subwalk $(\pi(t_i-m),\ldots,\pi(t_i+m))$ if $\lambda(E^*)=\mu$). Let $\pi(\alpha_i)$ ($\alpha_i\geq t_i-m$) be the first vertex of $\gamma_i S$ visited by $\pi$, and let $\pi(\beta_i)$ ($\alpha_i<\beta_i\leq t_i+m$) be the last vertex of $\gamma_i S$ visited by $\pi$. Since $\pi\in U_n$, $F$ never occurs in $\pi$; therefore $\pi(\alpha_i-1)$ and $\pi(\beta_i+1)$ are in the same component of $G\setminus \gamma_i S$.

Let $A_i$ be the component of $G\setminus \gamma_i S$ containing $\pi(\alpha_i-1)$ and $\pi(\beta_i+1)$, and let $\phi(\gamma_i S,A_i)$ be as described in Assumption \ref{ap31} (3) and $\phi(\gamma_i S,A_i)A_i \subseteq B_i$. We construct a new SAW $\pi_1$ from $\pi$ as follows.
\begin{itemize}
\item The subwalk $(\pi_1(0),\ldots,\pi_1(\alpha_1))$ is the same as the subwalk $(\pi(0),\ldots,\pi(\alpha_1))$.
\item We  map $(\pi(\beta_1+1),\ldots, \pi(n))$, as an SAW in $A_1$, to an SAW $\theta_1$ in $B_1$ by $\phi(\gamma_1 S,A_1)$, as described in Assumption \ref{ap31} (3). 
\item we use an SAW $\omega_1$ in $G\setminus[A_1\cup\phi(\gamma_1S,A_1)A_1]$ joining $\pi(\alpha_1)$ and the vertex $\phi(\gamma_1 S,A_1)\pi(\beta_1)$.  This is possible since $S$ is connected by Assumption \ref{ap31} (1), and $\gamma_1 S$ is an identical copy of $S$.
\item Let $\pi_1$ be the concatenation of $(\pi_1(0),\ldots,\pi_1(\alpha_1))$, $\omega_1$ and $\theta_1$.
\end{itemize}
See Figure \ref{fig:rell2}.

  \begin{figure}[hb]
  \centering
  \includegraphics[width=15cm]{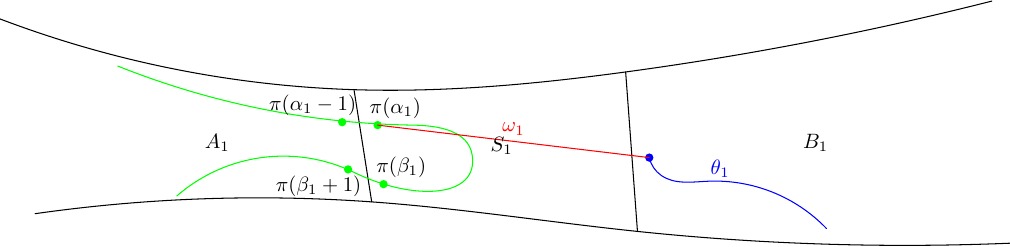}
  \caption{The self-avoiding walk $\pi$ is represented by the green curve; $\theta_1=\phi(\gamma_1 S,A_1)(\pi(\beta_1+1),\ldots,\pi(n))$; $\pi_1=(\pi(0),\ldots,\pi(\alpha_1))\cup\omega_1\cup\theta_1$}\label{fig:rell2}
\end{figure}

Let 
\begin{eqnarray*}
\pi_0:=\pi,\qquad \alpha_i^0:=\alpha_i,\qquad \beta_i^0:=\beta_i,\qquad \phi_1:=\phi(\gamma_1 S,A_1).
\end{eqnarray*}
for $1\leq i\leq \delta n$.
We make the following induction hypothesis:
assume that we have constructed SAWs $\pi_1,\ldots, \pi_k$,  and graph automorphisms $\phi_1,\ldots,\phi_k$, where $1\leq k<\delta n$ such that
\begin{itemize}
\item  for $1\leq i\leq k$, 
$\eta_i=(\pi_{i-1}(0),\ldots,\pi_{i-1}(\alpha_i^{i-1}))$ is a subwalk of both $\pi_{i-1}$ and $\pi_i$; and 
\item for $1\leq i\leq k$, $\tilde{S}_i=\phi_{i-1}\ldots \phi_1 S_i$; and
 \item \begin{eqnarray*}
&&\alpha_i^l=\alpha_i^{l-1}, \qquad\mathrm{for}\ 1\leq i\leq l\leq k-1;\\
&&\beta_j^l=\beta_j^{l-1}, \qquad\mathrm{for}\ 1\leq j<l\leq k-1;
\end{eqnarray*}
and for $1\leq l\leq k-1$, $\alpha_i^l\ (l+1\leq i\leq \delta n)$ (resp. $\beta_j^l\ (l\leq j\leq \delta n)$) is the step in $\pi_l$ of the image of $\pi_{l-1}(\alpha_i^{l-1})$ (resp. $\pi_{l-1}(\beta_j^{l-1})$) under the graph isomorphism $\phi_l$; and

\item for $1\leq i\leq k$, $(\pi_{i-1}(\beta_i^{i-1}),\pi_{i-1}(\beta_i^{i-1}+1),\ldots,\pi_{i-1}(|\pi_{i-1}|))$ is mapped by the graph automorphism $\phi_k\in \Gamma$ to a subwalk of $\pi_i$.
\end{itemize}
 Let 
\begin{eqnarray*}
&&\alpha_i^k=\alpha_i^{k-1}, \qquad\mathrm{for}\ 1\leq i\leq k;\\
&&\beta_j^k=\beta_j^{k-1}, \qquad\mathrm{for}\ 1\leq j\leq k-1.
\end{eqnarray*}
Let $\alpha_i^k\ (k+1\leq i\leq \delta n)$ (resp. $\beta_j^k\ (k\leq j\leq \delta n)$) be the step in $\pi_k$ of the image of $\pi_{k-1}(\alpha_i^{k-1})$ (resp. $\pi_{k-1}(\beta_j^{k-1})$) under the graph isomorphism $\phi_k$.  . 

 Now we construct an SAW $\pi_{k+1}$, following the procedure below.
 \begin{itemize}
 \item Let $\eta_{k+1}$ be the subwalk of $\pi_k$ from $\pi_k(0)$ to $\pi_k(\alpha_{k+1}^k)$.
\item Let $\tilde{S}_{k+1}:=\phi_k\ldots\phi_1 S_{k+1}$.
 Let $A_{k+1}$ be the component of $G\setminus \tilde{S}_{k+1}$ containing $\pi_k(\alpha_{k+1}^k)$; use $\phi_{k+1}:=\phi(\tilde{S}_{k+1},A_{k+1})$ to map the subwalk $\pi_k(\beta_{k+1}^k),\ldots,\pi_k(n-\beta_{k+1}+\beta_{k+1}^k)$ to an SAW $\theta_{k+1}$ in  another component of $G\setminus \tilde{S}_{k+1}$.
 \item Use an SAW $\omega_{k+1}$ in $G\setminus[A_{k+1}\cup\phi_{k+1}(\tilde{S}_{k+1},A_{k+1})A_{k+1}]$ to join $\pi_k(\alpha_{k+1}^k)$ and $\phi_{k+1}(\pi_k(\beta_{k+1}^k))$ (note that the existence of such an SAW is guaranteed by Assumption \ref{ap31});
 \item let $\pi_{k+1}$ be the concatenation of $\eta_{k+1}$, $\omega_{k+1}$ and $\theta_{k+1}$.
 \end{itemize}
 We can check that $\pi_{k+1}$ is indeed an SAW. See Figure \ref{fig:rell3}.
 
 \begin{figure}[hb]
  \centering
  \includegraphics[width=15cm]{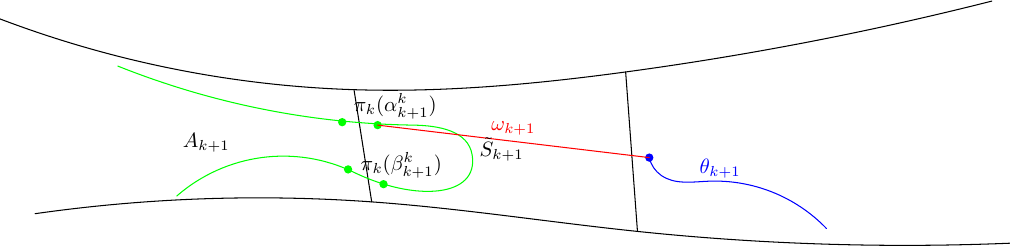}
  \caption{The self-avoiding walk $\pi_k$ is represented by the green curve; $\theta_{k+1}=\phi(\tilde{S}_{k+1},A_{k+1})(\pi_k(\beta_{k+1}^k+1),\ldots,\pi_k(|\pi_k|))$; $\pi_{k+1}=(\pi_k(0),\ldots,\pi_k(\alpha_{k+1}^k))\cup\omega_{k+1}\cup\theta_{k+1}$}\label{fig:rell3}
\end{figure}
 
 We claim that 
 \begin{eqnarray}
\left[\phi_k\ldots\phi_1 S_{k+1}\right]\cap\left[\phi_{k+1}\ldots\phi_1 S_{k+2}\right]=\emptyset.\label{ie}
 \end{eqnarray}
 To see why (\ref{ie}) is true, note that by (\ref{dil}), we have
 \begin{eqnarray*}
 \mathrm{dist}_G(S_{k+1},S_{k+2})\geq 2|S|.
 \end{eqnarray*}
 Since $\phi_i\ (1\leq i\leq k)$'s are automorphisms of $G$, we have
 \begin{eqnarray*}
  \mathrm{dist}_G(\phi_k\ldots\phi_1S_{k+1},\phi_k\ldots\phi_1 S_{k+2})\geq 2|S|.
 \end{eqnarray*}
 This implies 
  \begin{eqnarray*}
[\phi_k\ldots\phi_1S_{k+1}]\cap[\phi_k\ldots\phi_1 S_{k+2}]=\emptyset.
 \end{eqnarray*}
 Since $\phi_{k+1}$ maps $\phi_k\ldots\phi_1 S_{k+2}$ from one component of $G\setminus[\phi_k\ldots\phi_1S_{k+1}]$ to another component of of $G\setminus[\phi_k\ldots\phi_1S_{k+1}]$, (\ref{ie}) follows.
 
 We continue the above construction process until we have already constructed the SAW $\pi_{\delta n}$. Note that the length of $\pi_{\delta n}$ satisfies
 \begin{eqnarray*}
 |\pi_{\delta n}|\leq n+n\delta N;
 \end{eqnarray*}
 since the length of $\omega_i (1\leq i\leq \delta n)$ satisfies $|\omega_i|\leq N$ by Assumption \ref{ap31}(3).
 
  Counting the number of pairs $|(\pi,H)|$, we have
\begin{eqnarray*}
|(\pi,H)|\geq |U_n|\left(\begin{array}{c}\kappa n\\\delta n\end{array}\right),
\end{eqnarray*}
and
\begin{eqnarray*}
|(\pi,H)|\leq \left(\sum_{i=1}^{n+\delta nN} c_i\right)\left(\sum_{i=1}^{2m} c_i d^{|S|}|S|(N+d^{|S|})^3\right)^{\delta n}.
\end{eqnarray*}
We have 
\begin{eqnarray*}
\limsup_{n\rightarrow\infty}|U_n|^{\frac{1}{n}}\leq \frac{\delta^{\delta}(\kappa-\delta)^{\kappa-\delta}}{\kappa^{\kappa}}\mu^{1+2N\delta}\left[\sum_{i=1}^{2m} c_i d^{|S|}|S|(N+d^{|S|})^3\right]^{\delta}:=g(\delta)
\end{eqnarray*}
Note that 
\begin{eqnarray*}
g(0)=\mu,\qquad g'(0)=-\infty.
\end{eqnarray*}
Hence for $0<\delta\leq\delta_0$, $g(\delta)<\mu$. Therefore we have 
\begin{eqnarray*}
\limsup_{n\rightarrow\infty}|U_n|^{\frac{1}{n}}<\mu,
\end{eqnarray*}
and the proof is complete.
\end{proof}

\begin{lemma}\label{l29}For $r\geq 0$, let
\begin{eqnarray*}
b_n(r,F)=\sup_{v\in G}b_n^v(r,F).
\end{eqnarray*}
Then
\begin{eqnarray}
\limsup_{n\rightarrow\infty}b_n(0,F)^{\frac{1}{n}}<\mu.\label{lm}
\end{eqnarray}
\end{lemma}

\begin{proof}If $\lambda(E^*)<\mu$, then (\ref{lm}) follows from Part (a) of Lemma \ref{l25}, and the quasi-transitivity of $G$ under the action of $\Gamma$.

Now assume that $\lambda(E^*)=\mu$. Let $k$ be a positive integer satisfying (\ref{dk}). We consider the following cases
\begin{Alist}
\item The starting vertex $v$ of SAWs satisfy
\begin{eqnarray*}
\limsup_{n\rightarrow\infty}c_n^v(0,\tilde{E}_{k+1})^{\frac{1}{n}}=\mu.
\end{eqnarray*}
\item The starting vertex $v$ of SAWs satisfy
\begin{eqnarray}
\limsup_{n\rightarrow\infty}c_n^v(0,\tilde{E}_{k+1})^{\frac{1}{n}}<\mu. \label{clm}
\end{eqnarray}
\end{Alist}
In Case A., (\ref{lm}) follows from Part (b) of Lemma \ref{l25}. In Case B., (\ref{lm}) follows from (\ref{clm}) and the following fact 
\begin{eqnarray*}
b_n^v(0,F)\leq c_n^v(0,\tilde{E}_{k+1}).
\end{eqnarray*}
\end{proof}

\begin{lemma}\label{lf}There exist $a',R'\in(0,1)$, such that
\begin{eqnarray*}
\limsup_{n\rightarrow\infty} b_n(a'n,F)^{\frac{1}{n}}<R'\mu.
\end{eqnarray*}
\end{lemma}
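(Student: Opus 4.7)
The approach is to adapt the counting argument of Lemma \ref{l1}, with $F$ playing the role of $E$ and the preceding lemma providing the analogue of (\ref{em1}). By that lemma, I may pick $\epsilon>0$ and $M\in\NN$ such that $b_m(0,F)\le[\mu(1-\epsilon)]^m$ for all $m\ge M$, and then fix $m\ge M$ also satisfying $c_m\le[\mu(1+\epsilon)]^m$, which is possible since $c_m^{1/m}\to\mu$.

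Given a walk $\pi$ counted by $b_n(a'n,F)$, I partition its $n$ steps into $N=\lfloor n/m\rfloor$ consecutive $m$-step subwalks plus a tail of length $n-Nm$. Each of the at most $a'n$ steps at which $F$ occurs lies in exactly one of these subwalks, so at most $a'n$ subwalks are ``$F$-marked'' while the remaining $N-a'n$ are $F$-free in $\pi$. Bounding the $F$-marked subwalks by $c_m$, the $F$-free subwalks by $b_m(0,F)$, and the tail by $c_{n-Nm}$, and summing over which subwalks are marked, I obtain
\[
b_n(a'n,F)\le \sum_{i=0}^{a'n}\binom{N}{i}c_m^i\,[b_m(0,F)]^{N-i}c_{n-Nm}.
\]
The entropy calculation of Lemma \ref{l1} then shows that, for $\xi=a'm>0$ sufficiently small, the $N$-th root of the dominant term converges to
\[
f(\xi)=\frac{1}{\xi^{\xi}(1-\xi)^{1-\xi}}\Bigl(\frac{1+\epsilon}{1-\epsilon}\Bigr)^{\xi m}(1-\epsilon)^{m}<1,
\]
so setting $a'=\xi/m$ and $R'=f(\xi)^{1/m}\in(0,1)$ yields $\limsup_{n\to\infty}b_n(a'n,F)^{1/n}<R'\mu$, as required.

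The main obstacle I anticipate is justifying the per-subwalk bound $b_m(0,F)$ on the $F$-free subwalks. The $F$-event at a step $j$ depends on a window of radius $m$ around $j$ as well as on the global structure of $\pi$ near $\gamma S$, and the $b$-constraints packaged into $b_m(0,F)$ (many $E^*(m)$ or $\tilde E_k(m)$ events, together with the $\tilde E_{k+1}$-exclusion in the case $\lambda(E^*)=\mu$) are themselves global. The easy direction — that an $F$-free $m$-step chunk, viewed as a standalone SAW, contributes to a count dominated by an appropriate no-$F$ quantity — follows because a standalone $F$-occurrence forces an $F$-occurrence in $\pi$, but I would still need to justify that the relevant $b$-structure is inherited by most chunks. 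I would handle this either by widening the blocks to $3m$ and declaring an $F$-occurrence relevant only when it falls in the middle third (so that $F$ becomes a strictly local event on the block), or by replacing $b_m(0,F)$ with a slightly coarser quantity and absorbing the loss into the choice of $\epsilon$; either way the entropy argument above goes through verbatim.
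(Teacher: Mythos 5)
Your block decomposition and entropy computation are exactly the paper's, but the per-block bound you write down is not correct as stated, and the issue is not primarily the locality one you diagnose. Recall that $b_m(0,F)$ counts only those $m$-step SAWs for which $E^*(m)$ (resp.\ $\tilde E_k(m)$) occurs at least $am$ times \emph{and} $F$ never occurs. An $F$-free $m$-step subwalk of $\pi$, viewed as a standalone walk, need not have many occurrences of $E^*(m)$ at all, so it is in general not counted by $b_m(0,F)$; consequently the inequality $b_n(a'n,F)\le\sum_i\binom{N}{i}c_m^i[b_m(0,F)]^{N-i}c_{n-Nm}$ fails, and widening the blocks to $3m$ does nothing to repair it, since the missing hypothesis is the ``many $E^*(m)$'' precondition rather than the window radius. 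Your second proposed remedy (a ``coarser quantity'') is the right one, and the paper makes it precise: an $F$-free block falls into exactly one of two classes, namely (i) blocks on which $E^*(m)$ (resp.\ $\tilde E_k(m)$) occurs fewer than $am$ times, whose number is at most $c_m(am,E^*(m))$ and is controlled by Lemma \ref{l1} via (\ref{aml}) (resp.\ (\ref{pst}), (\ref{ct2})), and (ii) blocks on which it occurs at least $am$ times but $F$ never occurs, whose number is $b_m(0,F)$ and is controlled by (\ref{lm}). Each count is at most $[\mu(1-\epsilon)]^m$ for suitable $\epsilon$ and all large $m$, so their sum is at most $[\mu(1-\epsilon')]^m$ for a slightly smaller $\epsilon'$, and with this replacement for your $b_m(0,F)$ the entropy argument closes as you describe.

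One further bookkeeping point you should make explicit: you need the implication ``$F$ occurs at some step of the standalone block $\Rightarrow$ $F$ occurs at the corresponding step of $\pi$,'' so that an $F$-free step of $\pi$ yields an $F$-free step of the block. This holds because if the (truncated) window inside the block already visits all $|S|$ vertices of $\gamma S$ (resp.\ exactly $k$ vertices of $\gamma S$, with $\tilde E_{k+1}$ excluded globally), then by self-avoidance the first and last visits of $\pi$ to $\gamma S$ coincide with those of the block, so the ``distinct components'' condition transfers. This is the direction needed, and it is the same convention already used in Lemma \ref{l1} for windows overhanging the ends of a subwalk.
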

\begin{proof}Let us consider the number of $n$-step SAWs on $G$ starting from a fixed vertex, such that $F$ occurs less than $\xi n$ times, where $\xi$ is a small positive number.

Let $\epsilon,m$ satisfy (\ref{em2}). 
If $F$ occurs in less than $\xi n$ times, then it occurs in less than $\xi n$ of the $L=\left\lfloor\frac{n}{m}\right\rfloor$ $m$-step subwalks 
\begin{eqnarray*}
(\pi(im+1),\ldots, \pi((i+1)m)),\qquad \mathrm{for}\ 0\leq i\leq n.
\end{eqnarray*}

By enumerating all the $m$-step subwalks where $F$ occurs, we obtain
\begin{eqnarray}
b_n(\xi n, F)\leq \sum_{i=1}^{\xi n}\left(\begin{array}{c}L\\ i \end{array}\right)[\mu(1-\epsilon)]^{Nm-im}(\mu(1+\epsilon))^{im}c_{n-Nm}.\label{bnl}
\end{eqnarray}

Note that by Lemma \ref{l29}, we have
\begin{itemize}
\item When $\lambda(E^*)<\mu$, $[\mu(1-\epsilon)]^m$ gives an upper bound for the number of $m$-step SAWs starting from a fixed vertex, such that one of the followings holds
\begin{itemize}
 \item $E^*(m)$ occurs less than $am$ times; or
 \item $E^*(m)$ occurs at least $am$ times, and $F$ never occurs,
 \end{itemize}
 where $a,m$ satisfies (\ref{aml}).
 \item When $\lambda(E^*)=\mu$, $[\mu(1-\epsilon)]^m$ gives an upper bound for the number of $m$-step SAWs starting from a fixed vertex such that 
\begin{itemize}
\item $\tilde{E}_k(m)$ occurs less than $am$ times; or
\item $\tilde{E}_k(m)$ occurs at least $am$ times, $E_{k+1}$ never occurs, and $F$ never occurs.
\end{itemize}
Here $k$ satisfies (\ref{dk}), and $a,m$ satisfies (\ref{pst}).
\end{itemize}
Taking $n$th roots of \ref{bnl} and letting $n\rightarrow \infty$, we obtain that when $\xi$ is sufficiently small
\begin{eqnarray*}
\limsup_{n\rightarrow\infty}b_n(\xi n,F)^{\frac{1}{n}}<\mu,
\end{eqnarray*}
and the lemma follows.
\end{proof}

\bigskip
\noindent\textbf{Proof of Theorem \ref{m31} A.}
If $\lambda(E^*)<\mu$, let $d_n^v$ be the number of $n$-step SAWs in $G$ starting from $v$ such that $E^*(m)$ occurs at least $an$ times, and $F$ occurs at least $a'n$ times, where $a$ is given by (\ref{aml}) and $a'$ is given by Lemma \ref{lf}. Then we have
\begin{eqnarray}
\limsup_{n\rightarrow\infty}d_n^v=\mu.\label{pt1}
\end{eqnarray}

Now assume that $\lambda(E^*)=\mu$. Let $v$ be a vertex of $G$ satisfying (\ref{ct1}). Let $a>0$ be given by (\ref{ct2}), and $a'>0$ be given by Lemma \ref{lf}. Let $d_n^{v}$ be the number of $n$-step SAWs in $G$ such that $E_{k+1}$ never occurs, $E_k(m)$ occurs at least $an$ times, and $F$ occurs at least $a'n$ times. Then
\begin{eqnarray}
\limsup_{n\rightarrow\infty}d_n^{v}=\mu.\label{pt2}
\end{eqnarray}

Note that in an SAW $\pi$, if $F$ occurs at least $a'n$ times, then $\|\pi\|\geq a'n$. The theorem follows from (\ref{pt1}) and (\ref{pt2}). $\hfill\Box$

\section{Proof of Theorem \ref{m31} B.}\label{ppb}

We prove Theorem \ref{m31} B. in this section. Let $G=(V,E)$ be a graph satisfying Assumption \ref{ap32}. Since any graph satisfying Assumption \ref{ap32} must satisfy Assumption \ref{ap31} as well, all the results proved in Section \ref{pp1} also apply to graphs satisfying Assumption \ref{ap32}.

Let $\pi$ be an $n$-step SAW on $G$. Recall that $E^*$ occurs at the $j$th step of $\pi$ if there exists $\gamma S\in \Gamma S$ such that $\pi(j)\in \gamma S$, and all the vertices of $\gamma S$ are visited by $\pi$. For $k\geq 1$, we say that $\mathcal{E}_k$ occurs at the $j$th step of $\pi$, if there exists $\gamma S'\in \Gamma S'$, such that $\pi(j)\in \gamma S'$, and at least $k$ vertices of $\gamma S'$ are visited by $\pi$, where $S'$ containing $S$ is given as in Assumption \ref{ap32}. We say that $\tilde{\mathcal{E}}_k$ occurs at the $j$th step of $\pi$ if $E^*$ or $\mathcal{E}_k$ (or both) occur there.

In the following, we will use $\mathcal{E}$ to denote any of $E^*$, $\mathcal{E}_k$ or $\tilde{\mathcal{E}}_k$. If $m$ is a positive integer, we say that $\mathcal{E}(m)$ occurs at the $j$th step of $\pi$ if $\mathcal{E}$ occurs at the $m$th step of the $2m$-step subwalk $(\pi(j-m),\ldots,\pi(j+m))$. 

For $r\geq 0$ and $v\in G$, let $c_n^v(r,\mathcal{E})$ (resp.\ $c^v_n(r,\mathcal{E}(m))$) be the number of $n$-step SAWs starting from $v$ for which $\mathcal{E}$ (resp.\ $\mathcal{E}(m)$) occurs at no more than $r$ different  steps. Let
\begin{eqnarray*}
c_n(r,\mathcal{E})&=&\sup_{v\in V}c_n^v(r,\mathcal{E})\\
c_n(r,\mathcal{E}(m))&=&\sup_{v\in V}c^v_n(r,\mathcal{E}(m))
\end{eqnarray*}

Let 
\begin{eqnarray*}
\lambda(\mathcal{E})=\limsup_{n\rightarrow\infty}c_n(0,\mathcal{E})^{\frac{1}{n}}.
\end{eqnarray*}
Let $\mu$ be the connective constant of $G$. We have that $\lambda(\mathcal{E})<\mu$ if and only if 
\begin{eqnarray}
\mathrm{there\ exist\ }\epsilon>0, M\in\NN, \mathrm{such\ that\ } c_m(0,\mathcal{E})<[\mu(1-\epsilon)]^m, \mathrm{for\ } m\geq M.\label{eem1}
\end{eqnarray}

The proof of Theorem \ref{m31} B. is similar to that of Theorem \ref{m31} A.; both of which are inspired by the proof of the pattern theorem by Kesten (\cite{HK63}), and obtained by modifying local configurations of SAWs on $\delta n (0<\delta <1)$ disjoint locations to create a nontrivial exponential factor on the total number of SAWs. The difference lies in that under the stronger Assumption \ref{ap32}, we obtain the stronger Lemma \ref{l33} (recall that if we only have Assumption \ref{ap31}, we always discuss two cases $\lambda(E^*)<\mu$ and $\lambda(E^*)=\mu$). With Lemma \ref{l33}, we can prove the stronger result that SAWs on such a graph actually have positive speed. More precisely, not only that the exponential growth rate of the number of $n$-step SAWs whose end-to-end distance is linear in $n$ is the same as the connective constant; but the number of $n$-step SAWs whose end-to-end distance is not linear in $n$ is indeed exponentially small compared to the number of all the $n$-step SAWs.

\begin{lemma}\label{ll1}Let $k$ satisfy $1\leq k\leq |S|$, and 
\begin{eqnarray}
\lambda(\mathcal{E})<\mu.\label{llk}
\end{eqnarray}
Let $\epsilon,M$ satisfy (\ref{eem1}), and let $m\geq M$ satisfy
\begin{eqnarray}
c_m\leq [\mu(1+\epsilon)]^m.\label{eem2}
\end{eqnarray}
Then there exist $a=a(\epsilon,m)$ and $R=R(\epsilon,m)\in (0,1)$, such that
\begin{eqnarray*}
\limsup_{n\rightarrow\infty}c_n(an,\mathcal{E}(m))^{\frac{1}{n}}<R\mu.
\end{eqnarray*}
\end{lemma}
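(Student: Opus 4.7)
The plan is to follow the proof of Lemma \ref{l1} essentially verbatim, replacing every occurrence of the event $E$ (whether $E^*$, $E_k$, or $\tilde{E}_k$) by the corresponding event $\mathcal{E}$. The argument for Lemma \ref{l1} never inspects the internal structure of $E$; it uses only two facts: the identity $c_m(0,\mathcal{E})=c_m(0,\mathcal{E}(m))$, which holds because the $2m$-step window centered at step $j$ of a walk of total length $m$ collapses to the walk itself under the ``obvious modification'' convention described in Section \ref{pp1}; and the exponential bound (\ref{eem1}) together with the upper bound (\ref{eem2}) on $c_m$. The existence of $m\geq M$ satisfying (\ref{eem2}) is guaranteed by (\ref{cc}).

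First I would write $N=\lfloor n/m\rfloor$ and partition an $n$-step SAW $\pi$ into the $N$ consecutive $m$-step subwalks $(\pi((j-1)m),\ldots,\pi(jm))$ for $1\leq j\leq N$, together with a tail of length $n-Nm<m$. If $\mathcal{E}(m)$ occurs at no more than $r$ steps of $\pi$, then in particular it occurs in at most $r$ of these $N$ subwalks. Counting by which subwalks contain an occurrence, and applying (\ref{eem1}), (\ref{eem2}), and the identity above, I obtain
\begin{eqnarray*}
c_n(r,\mathcal{E}(m)) &\leq& \sum_{i=0}^{r}\binom{N}{i}\,c_m^i\,[c_m(0,\mathcal{E}(m))]^{N-i}\,c_{n-Nm}\\
&\leq& \mu^{Nm}\,c_{n-Nm}\sum_{i=0}^{r}\binom{N}{i}(1+\epsilon)^{im}(1-\epsilon)^{(N-i)m}.
\end{eqnarray*}

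Next I would set $r=\lfloor\xi N\rfloor$ for a small parameter $\xi>0$. Bounding the sum by its number of terms times its largest term and applying the standard Stirling estimate $\binom{N}{\xi N}^{1/N}\to\xi^{-\xi}(1-\xi)^{-(1-\xi)}$, the $N$th root of the resulting bound converges to
\begin{eqnarray*}
f(\xi)=\frac{1}{\xi^{\xi}(1-\xi)^{1-\xi}}\left(\frac{1+\epsilon}{1-\epsilon}\right)^{\xi m}(1-\epsilon)^m.
\end{eqnarray*}
Since $f(0)=(1-\epsilon)^m<1$ and $f$ is continuous at $0$, there exists $\xi_0=\xi_0(\epsilon,m)>0$ with $f(\xi)<1$ for $0<\xi<\xi_0$. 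Choosing $0<a<\xi_0/m$ and $R=f(ma)^{1/m}\in(0,1)$, and recalling $c_{n-Nm}\leq c_m$ is uniformly bounded, gives $\limsup_{n\to\infty}c_n(an,\mathcal{E}(m))^{1/n}<R\mu$, as desired.

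Because this is a direct transposition of Lemma \ref{l1}, there is no real obstacle in the argument itself; the only point worth verifying is that the identity $c_m(0,\mathcal{E})=c_m(0,\mathcal{E}(m))$ continues to hold for the $\mathcal{E}$-events defined in terms of the enlarged set $S'$, but this is immediate from the definition of $\mathcal{E}(m)$ and plays no role beyond the initial reduction.
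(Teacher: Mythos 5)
Your proposal is correct and is exactly the paper's approach: the paper's proof of this lemma is literally ``Same as the proof of Lemma \ref{l1},'' and you have carried out that transposition, correctly noting that the argument only uses $c_m(0,\mathcal{E})=c_m(0,\mathcal{E}(m))$ together with (\ref{eem1}) and (\ref{eem2}) and never the internal structure of the event.
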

\begin{proof}
Same as the proof of Lemma \ref{l1}.
\end{proof}

\begin{lemma}\label{ll2}$\lambda(\tilde{\mathcal{E}}_1)<\mu$.
\end{lemma}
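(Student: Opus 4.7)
The plan is to deduce Lemma \ref{ll2} as an immediate consequence of Lemma \ref{l2}, by an inclusion of event sets. No new construction is required here; the additional content of Assumption \ref{ap32} (the SAWs $\ell_{uv}$ visiting all of $S$) will instead be exploited later, when Section \ref{ppb} requires a genuinely richer pattern for the positive-speed estimate.

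The first step is to unwind the definitions of $\tilde{E}_1$ and $\tilde{\mathcal{E}}_1$. At any step $j$ of an SAW $\pi$, the clause ``at least one vertex of $\gamma S$ (resp.\ $\gamma S'$) visited'' in $E_1$ (resp.\ $\mathcal{E}_1$) is automatic once $\pi(j)\in \gamma S$ (resp.\ $\pi(j)\in \gamma S'$), and $E^*$ is contained in $E_1$ (resp.\ $\mathcal{E}_1$) step-by-step since $S\subseteq S'$. Hence $\tilde{E}_1$ occurring at step $j$ is equivalent to $\pi(j)\in \Gamma S$, and $\tilde{\mathcal{E}}_1$ occurring at step $j$ is equivalent to $\pi(j)\in \Gamma S'$. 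In particular, ``$\tilde{E}_1$ never occurs on $\pi$'' is precisely the condition that $\pi$ avoids $\Gamma S$, while ``$\tilde{\mathcal{E}}_1$ never occurs on $\pi$'' is precisely the condition that $\pi$ avoids $\Gamma S'$.

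Next I would invoke $S\subseteq S'$ from Assumption \ref{ap32} to conclude $\Gamma S\subseteq \Gamma S'$. Therefore every SAW avoiding $\Gamma S'$ also avoids $\Gamma S$, which gives the pointwise inequality $c_n^v(0,\tilde{\mathcal{E}}_1)\le c_n^v(0,\tilde{E}_1)$ for every starting vertex $v$. Taking the supremum over $v$ preserves the inequality, and an $n$-th-root $\limsup$ yields $\lambda(\tilde{\mathcal{E}}_1)\le \lambda(\tilde{E}_1)$. Since Assumption \ref{ap32} implies Assumption \ref{ap31}, Lemma \ref{l2} is applicable and gives $\lambda(\tilde{E}_1)<\mu$, completing the argument. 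There is essentially no obstacle; the restatement in the $S'$-setup is notational preparation so that the pattern-theorem analysis of Section \ref{ppb} parallels that of Section \ref{pp1}.
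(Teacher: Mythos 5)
Your proposal is correct and matches the paper's own argument, which deduces the lemma from Lemma \ref{l2} together with the comparison $\lambda(\tilde{\mathcal{E}}_1)\le\lambda(\tilde{E}_1)$ coming from $S\subseteq S'$. You have simply spelled out the inclusion of the relevant sets of walks that the paper leaves implicit.
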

\begin{proof}
The lemma follows from Lemma \ref{l2} and the fact that $\lambda(\tilde{\mathcal{E}}_1)<\lambda(\tilde{E}_1)$.
\end{proof}

\begin{lemma}\label{l33}$\lambda(E^*)<\mu$.
\end{lemma}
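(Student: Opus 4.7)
I would argue by contradiction, following the template of Lemma~\ref{l2} but exploiting the extra structure in Assumption~\ref{ap32}. Suppose $\lambda(E^*)=\mu$. Combined with $\lambda(\tilde{\mathcal{E}}_1)<\mu$ from Lemma~\ref{ll2}, this would imply that the number of $n$-step SAWs avoiding $E^*$ still has exponential growth rate $\mu$ and that the typical such SAW visits some $\gamma S'\in\Gamma S'$. A pigeonhole argument exactly as in the proof of Lemma~\ref{l2} then lets me pick, for each such SAW $\pi$, a positive-density sequence of indices $j_1<\cdots<j_{\kappa n}$ lying in pairwise well-separated copies $\gamma_1S',\ldots,\gamma_{\kappa n}S'$.

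The crucial local move is the replacement afforded by Assumption~\ref{ap32}. For each visit of $\pi$ to $\gamma_iS'$, let $u_i,v_i\in\gamma_i\partial S'$ denote the first entry and last exit vertices; since $\pi$ avoids $E^*$, the sub-walk of $\pi$ inside $\gamma_iS'$ fails to cover $\gamma_iS$, so the SAW $\gamma_i\ell_{\gamma_i^{-1}u_i,\gamma_i^{-1}v_i}$ supplied by Assumption~\ref{ap32}, which traverses every vertex of $\gamma_iS$, can be substituted in place of that sub-walk. After choosing a subset $H$ of size $\delta n$ from the $\kappa n$ candidates (with $0<\delta<\kappa$), I would perform these swaps inductively, exactly as in Lemmas~\ref{l2} and~\ref{l25}, producing a new SAW $\pi'$ of length between $n$ and $n+C'\delta n$ at which $E^*$ occurs at each of the $\delta n$ modified copies.

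Counting pairs $(\pi,H)$ then mimics the conclusion of Lemma~\ref{l2}. The lower bound is $c_n(0,E^*)\binom{\kappa n}{\delta n}$; the upper bound is $\bigl(\sum_{t=n}^{n+C'\delta n}c_t\bigr)K^{\delta n}$, where $K$ bounds the finite bookkeeping needed at each swap site to reconstruct the discarded sub-walk in terms of $|S'|$ and the maximum degree $d$. Taking $n$th roots and using $\lambda(E^*)=\mu$ yields
\begin{eqnarray*}
\frac{\kappa^\kappa}{\delta^\delta(\kappa-\delta)^{\kappa-\delta}}\leq \mu^{C'\delta}K^\delta.
\end{eqnarray*}
Taking logs and dividing by $\delta$, the left-hand side is $\log(e\kappa/\delta)+O(\delta)$ which tends to $+\infty$ as $\delta\to 0^+$, while the right-hand side is the constant $C'\log\mu+\log K$; so for small enough $\delta$ the inequality fails, delivering the contradiction.

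The step I expect to be the main obstacle is verifying that the inductive swap produces a legitimate SAW. Three points need care: (i) that $\ell_{uv}$ may be taken to lie inside $S'$, which is the natural reading of Assumption~\ref{ap32}; (ii) that multiple entries of $\pi$ into a single $\gamma_iS'$ are handled by working with the first entry and last exit and using the separation imposed at selection; and (iii) that previously applied swap-automorphisms do not carry later swap regions onto earlier ones, which should follow from a disjointness identity of the same flavour as equation~(\ref{ie}) in Lemma~\ref{l25}.
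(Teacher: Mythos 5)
Your overall template (surgery at a positive density of sites, then a two-sided count of pairs $(\pi,H)$) is the right one, but the specific surgery you propose does not support the counting, and this is precisely the difficulty the paper's proof is built to circumvent. The stretch of $\pi$ you excise at each site runs from the \emph{first entry} into $\gamma_iS'$ to the \emph{last exit} from it, and nothing in your setup bounds its length: between those two visits $\pi$ may leave $\gamma_iS'$ and wander arbitrarily far before returning. Consequently the quantity $K$ in your upper bound is not a constant depending only on $|S'|$ and $d$: to reconstruct $\pi$ from $\pi'$ you must choose, at each site, the excised SAW of some length $L_i$, and the number of such choices is of order $\mu^{L_i}$ with $\sum_i L_i$ possibly of order $n$. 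The entropy of the excised pieces then offsets the entropy saved by shortening the walk, and after taking $n$th roots the inequality degenerates to a comparison of $\binom{\kappa n}{\delta n}$ against $\binom{n}{\delta n}$ (together with $(1+\epsilon)$-type slop from $c_L\le[\mu(1+\epsilon)]^L$), which goes the wrong way; no contradiction results as $\delta\to0$. The unbounded stretches also make the inductive swaps interfere --- one excised stretch can contain, or be nested inside, another selected site --- which is a more serious obstruction than the disjointness issue you flag in (iii).

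The paper's proof supplies exactly the missing localization. It does not perform surgery on $E^*$-avoiding walks directly; it first passes to the critical index $k$ with $\lambda(\tilde{\mathcal{E}}_k)<\mu=\lambda(\tilde{\mathcal{E}}_{k+1})$ as in (\ref{me1}), and uses Lemma \ref{ll1} to restrict, up to exponentially negligible corrections, to the set $T_n$ of walks in which $\tilde{\mathcal{E}}_{k+1}$ never occurs while $\tilde{\mathcal{E}}_k(m)$ occurs at least $an$ times. The event $\tilde{\mathcal{E}}_k(m)$ forces all $k$ visited vertices of $\gamma_tS'$ to lie within a window of $2m$ steps of $\pi$, so $\beta_t-\alpha_t\le 2m$: the excised piece has bounded length, the per-site reconstruction factor is the constant $\sum_{i=1}^{2m}c_i$ times bounded bookkeeping, and the selected windows are pairwise disjoint by construction. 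This also repairs your first step: the existence of $\kappa n$ well-separated indices at which $\pi$ actually \emph{enters} copies of $S'$ does not follow from $\lambda(\tilde{\mathcal{E}}_1)<\mu$ plus pigeonhole (that only controls walks avoiding $\Gamma S'$ entirely), nor from the argument of Lemma \ref{l2}, where the walk need only pass \emph{near} copies of $S$ and is then connected to them by auxiliary paths; it is again Lemma \ref{ll1} at the critical level $k$ that produces the positive density of usable sites.
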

\begin{proof} Assume that 
\begin{eqnarray}
\lambda(E^*)=\mu;\label{l34}
\end{eqnarray} 
we will obtain a contradiction. 

We make the following observations. First, $c_n(0,\tilde{\mathcal{E}}_k)$ is a non-decreasing function of $k$; secondly, if $E^*$ does not occur on a given walk, then $\mathcal{E}_{|S'|}$ cannot occur. Therefore
\begin{eqnarray*}
c_n(0,\mathcal{E}^*)\leq c_n(0,\tilde{\mathcal{E}}_{|S'|})\leq c_n.
\end{eqnarray*}
As a result, (\ref{l34}) implies
\begin{eqnarray*}
\limsup_{n\rightarrow\infty} c_n(0,\tilde{\mathcal{E}}_{|S'|})^{\frac{1}{n}}=\mu.
\end{eqnarray*}

By Lemma \ref{ll2}, $\lambda(\tilde{\mathcal{E}}_1)<\mu$. We may choose $k$ with $1\leq k<|S|$, such that 
\begin{eqnarray}
\lambda(\tilde{\mathcal{E}}_k)<\mu,\qquad\lambda(\tilde{\mathcal{E}}_{k+1})=\mu.\label{me1}
\end{eqnarray}

Let $\epsilon,m$ satisfy (\ref{eem1}) and let $m\geq M$ satisfy (\ref{eem2}). By Lemma \ref{ll1}, there exists $a=a(\epsilon,m)>0$, such that 
\begin{eqnarray}
\limsup_{n\rightarrow\infty} c_n(an, \tilde{\mathcal{E}}_k(m))^{\frac{1}{n}}<\mu.\label{me2}
\end{eqnarray}

By (\ref{me1}) and (\ref{me2}) and the quasi-transitivity of $G$, there exists a vertex $v_0$ of $G$, such that 
\begin{eqnarray}
\limsup_{n\rightarrow\infty} c_n^{v_0}(0, \tilde{\mathcal{E}}_{k+1})^{\frac{1}{n}}=\mu;\label{35}
\end{eqnarray}
and for any $v\in V$, 
\begin{eqnarray}
\limsup_{n\rightarrow\infty} c_n^v(an, \tilde{\mathcal{E}}_k(m))^{\frac{1}{n}}<\mu.\label{36}
\end{eqnarray}

Let $T_n$ be the set consisting of all the $n$-step SAWs starting from $v_0$ for which $\mathcal{E}_{k+1}  $ never occurs, but $\tilde{\mathcal{E}}_k(m)$ occurs at least $an$ times.
 We have that
\begin{eqnarray*}
|T_n|\geq c_n^{v_0}(0,\tilde{\mathcal{E}}_{k+1})-c_n^{v_0}(an, \tilde{E}_k(m)).
\end{eqnarray*}

By (\ref{35}) and (\ref{36}), we have 
\begin{eqnarray}
\limsup_{n\rightarrow\infty}|T_n|^{\frac{1}{n}}=\mu.\label{tn}
\end{eqnarray}

 The rest of the proof is devoted to show the existence of $t<1$, such that
\begin{eqnarray*}
\limsup_{n\rightarrow\infty}|T_n|^{\frac{1}{n}}<t\mu.
\end{eqnarray*} 
This contradicts (\ref{tn}), and the lemma follows.

Let $d$ be the maximal vertex degree of $G$. Let $\pi\in T_n$, so that $\pi$ contains at least $an$ occurrences of $\tilde{\mathcal{E}}_k(m)$. We can find $j_1<\ldots< j_u$ with $u=\lfloor\kappa n\rfloor-2$, where 
\begin{eqnarray}
\kappa=\frac{a}{(2m+2)d^{4|S'|}},\label{ka}
\end{eqnarray}
 such that
\begin{eqnarray*}
\tilde{\mathcal{E}}_k(m)\ \mathrm{occurs\ at\ the\ }j_1th, j_2th,\ldots, j_uth\ \mathrm{steps\ of\ }\pi,
\end{eqnarray*}
(and perhaps other steps as well), and in addition,
\begin{itemize}
\item $0<j_1-m$, $j_u+m<n$,
\item $j_t+m<j_{t+1}-m$;
\item for any $\gamma_1,\gamma_2\in\Gamma$, such that $\pi(j_{i})\in \gamma_1S'$ and $\pi(j_{\ell})\in \gamma_2S'$, $i\neq\ell$, $\gamma_1 S'$ and $\gamma_2 S'$ are disjoint.
\end{itemize}

Such $j_t$'s may be found by the following iterative construction. First $j_1$ is the smallest $j>m$ such that $\mathcal{E}_k(m)$ occurs at the $j$th step of $\pi$. Having found $j_1,j_2,\ldots,j_r$, let $j_{r+1}$ be the smallest $j$ such that 
\begin{letlist}
\item $j_{r}+m<j_{r+1}-m$;
\item $\tilde{E}_k(m)$ occurs at the $j$th step of $\pi$;
\item for any $\gamma_1,\gamma_2\in\Gamma$, such that $\pi(j_i)\in \gamma_1 S'$ ($i\leq r$), and $\pi(j_{r+1})\in\gamma_2 S'$, we have $\gamma_1 S'\cap\gamma_2 S'=\emptyset$.
\end{letlist}
Condition (a) gives rise to the factor $(2m+2)$ in the denominator of (\ref{ka}), and Condition (c) gives rise to the factor $d^{4|S'|}$. 

Let $t\in\{1,2,\ldots,u\}$. Since $\tilde{\mathcal{E}}_k(m)$ but not $\tilde{\mathcal{E}}_{k+1}$ occurs at the $j_t$th step, $\pi$ visits at most $k$ vertices in each $\gamma S$ containing $\pi(j_t)$. Let $\Psi_t$ be the set consisting of all the copies $\gamma S'$ of $S'$ in $\Gamma S'$, such that $\pi(j_t)\in \gamma S'$, $\pi$ visits exactly $k$ vertices of $\gamma S'$, and these $k$ vertices lie between $\pi(j_t-m)$ and $\pi(j_t+m)$ on $\pi$. Choose $\gamma_t S\in \Psi_t$. For $t=1,2,\ldots,u$, let 
\begin{eqnarray*}
\alpha_t=\min\{i:\pi(i)\in \gamma_t S'\},\qquad\beta_t=\max\{i:\pi(i)\in \gamma_t S'\},
\end{eqnarray*}
so that
\begin{eqnarray*}
j_t-m\leq \alpha_t\leq j_t\leq \beta_t\leq j_t+m.
\end{eqnarray*}

We next describe the strategy for the replacement of the subwalk $(\pi(\alpha_t),\pi(\alpha_t+1),\ldots,\pi(\beta_t))$. Starting from $\pi(\alpha_t)$, the walk follows an SAW in $\gamma_t S$ joining $\pi(\alpha_t)$ and $\pi(\beta_t)$, and visits every vertex in $\gamma_t S$. Such an SAW exists by Assumption \ref{ap32}.

Let $\delta$ satisfy $0<\delta<\kappa$, to be chosen later, and set $s=\delta n$. Let $H=(h_1,\ldots, h_s)$ be an oriented subset of $\{j_1,\ldots,j_u\}$. We shall make an appropriate substitution in the neighborhood of each $\pi(h_t)$ to obtain an SAW $\pi^*=(\pi,H)$.

We estimate the number of pairs $(\pi,H)$ as follows. First, the number $|(\pi,H)|$ is at least the cardinality of $T_n$ multiplied by the minimum number of possible choices of $H$ as $\pi$ ranges over $T_n$. Any subset of $\{j_1,\ldots, j_u\}$ with cardinality $s=\delta n$ may be chosen for $H$, whence 
\begin{eqnarray}
|(\pi,H)|\geq |T_n|\left(\begin{array}{c}\kappa n-2\\ \delta n\end{array}\right)\label{lb}
\end{eqnarray}
We bound $(\pi, H)$ above by counting the number of SAWs $\pi_*$ of $G$ with length not exceeding $n+|S'|\delta n$, and multiplying by an upper bound for the number of pairs $(\pi,H)$ giving rise to a particular $\pi_*$.

The number of possible choices of $\pi_*$ is no greater than $\sum_{i=0}^{n+|S'|\delta n} c_i$. A given $\pi_*$ contains $|H|=\delta n$ occurrences of visits to all the vertices of some copy of $S'$ under $\Gamma$. At the $t$th such occurrence, $\pi(h_t)$ is a point on $\gamma_t S'$ and there are no more than $|S'|$ different choices of $\pi(h_t)$. For given $\pi_*$ and $(\pi(h_t): t=1,2,\ldots,s)$, there are at most $[\sum_{i=1}^{2m}c_i]^{\delta n}$ corresponding SAWs $\pi$ of $G$. Therefore
\begin{eqnarray}
|(\pi,H)|\leq \left(d^{|S'|}|S'|^2\sum_{i=1}^{2m}c_i\right)^{\delta n}\left(\sum_{i=0}^{n+|S'|\delta n}c_i\right)\label{ub}
\end{eqnarray}

 We combine (\ref{lb}) and (\ref{ub}), take $n$th root and the limit as $n\rightarrow\infty$, we obtain, by the fact that $c_n^{\frac{1}{n}}\rightarrow\mu$,
\begin{eqnarray*}
\mu\frac{\kappa^{\kappa}}{\delta^{\delta}(\kappa-\delta)^{\kappa-\delta}}\leq \left(d^{|S'|}|S'|^2\sum_{i=1}^{2m}c_i\right)^{\delta}\mu^{1+|S'|\delta}.
\end{eqnarray*}
There exists $Z=Z(\epsilon,m,d,|S|)$, such that
\begin{eqnarray*}
d^{|S'|}|S'|^2\mu^{|S'|}\sum_{i=1}^{2m}c_i\leq Z,
\end{eqnarray*}
therefore
\begin{eqnarray*}
\mu\leq f(\eta)^{\kappa}\mu,
\end{eqnarray*}
where $f(\eta)=Z^{\eta}\eta^{\eta}(1-\eta)^{1-\eta}$, and $\eta=\frac{\delta}{\kappa}$. Since
\begin{eqnarray*}
\lim_{\eta\rightarrow 0}f(\eta)=1,\qquad\lim_{\eta\rightarrow 0}f'(\eta)=-\infty,
\end{eqnarray*}
we have $f(\eta)<1$ for sufficiently small $\eta=\eta(Z)>0$. The contradiction implies the lemma.
\end{proof}

\noindent\textbf{Proof of Theorem \ref{m31} B.} By Lemma \ref{l33}, $\lambda(E^*)<\mu$. Let $\sigma_n$ be the number of $n$-step SAWs on $G$ starting from a fixed vertex such that one of the followings holds
\begin{itemize}
\item $E^*(m)$ occurs less than $an$ times; or
\item $E^*(m)$ occurs at least $an$ times, and $F$ occurs less than $a'n$ times
\end{itemize}
where $a$ satisfies (\ref{aml}) and $a'$ satisfies Lemma \ref{lf}. We have
\begin{eqnarray*}
\limsup_{n\rightarrow\infty}\sigma_n<\mu.
\end{eqnarray*}
For any $n$-step SAW $\pi$ on $G$ starting from a fixed vertex not counted in $\sigma_n$, $F$ occurs at least $a'n$ times, and therefore $\|\pi\|\geq a'n$. This implies that SAWs on $G$ have positive speed.
$\hfill\Box$

\section{Groups with more than one end}\label{p2}

In this section, we prove Theorem \ref{mg}. The proof is based on the stalling's splitting theorem, and an explicit construction of the set $S$ satisfying Assumption \ref{ap31}.

\begin{lemma}\label{l41}Let $G=(V,E)$ be an infinite, connected, locally finite graph. Let $A,B\subseteq V$ be two finite set of vertices of $G$ satisfying $A\subseteq B$. Let $G\setminus A$ (resp.\ $G\setminus B$) be the subgraph of $G$ obtained from $G$ by removing all the vertices in $A$ (resp.\ $B$) as well as their incident edges. If $G\setminus A$ has at least two infinite components, then $G\setminus B$ has at least two infinite components. Moreover, each infinite component of $G\setminus A$ contains at least one infinite component of $G\setminus B$.
\end{lemma}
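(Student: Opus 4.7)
The plan is to start with two infinite components $C_1, C_2$ of $G \setminus A$ and extract from each a distinct infinite component of $G \setminus B$. A first observation is that, since $A \subseteq B$, any path in $G \setminus B$ is also a path in $G \setminus A$, so two vertices that lie in distinct components of $G \setminus A$ must also lie in distinct components of $G \setminus B$. In particular, no component of $G \setminus B$ can meet both $C_1 \setminus B$ and $C_2 \setminus B$, so it suffices to show that each $C_i \setminus B$ contains at least one infinite component of $G \setminus B$.

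Next I would fix $i \in \{1,2\}$ and let $F_i := B \cap C_i$, which is finite since $B$ is. If $F_i = \emptyset$, then $C_i \setminus B = C_i$ is already an infinite connected component of $G \setminus B$, and we are done. Otherwise the key claim is that $C_i \setminus F_i$ has only \emph{finitely many} components, viewed as a subgraph of $C_i$. The reason is that $C_i$ is connected, so every component $D$ of $C_i \setminus F_i$ must contain at least one endpoint of a $C_i$-edge whose other endpoint lies in $F_i$; the number of such edges is at most $d \cdot |F_i|$ by local finiteness. Since $C_i \setminus F_i$ is infinite (an infinite set minus a finite one) and is a disjoint union of finitely many components, at least one component must be infinite.

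To finish, I would check that a component of $C_i \setminus F_i$, when viewed inside $C_i$, coincides with the component it generates in $G \setminus B$: any $G$-edge from a vertex of $C_i \setminus F_i$ to a vertex outside $C_i$ would have to touch $A$ (because $C_i$ is a component of $G \setminus A$), and all vertices of $A \subseteq B$ have been removed in $G \setminus B$. Thus the infinite component of $C_i \setminus F_i$ is genuinely an infinite component of $G \setminus B$, and doing this for $i = 1$ and $i = 2$ produces the two desired distinct infinite components.

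I do not anticipate a serious obstacle; the argument is essentially a bookkeeping exercise built on local finiteness and the partition of $V \setminus A$ into components of $G \setminus A$. The only point requiring care is the bound on the number of components of $C_i \setminus F_i$, which simultaneously uses the connectedness of $C_i$ and the local finiteness of $G$.
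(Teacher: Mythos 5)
Your proof is correct and takes essentially the same approach as the paper's: fix two infinite components of $G\setminus A$, use local finiteness together with the finiteness of $B$ to show that deleting $B$ from each leaves only finitely many components (hence at least one infinite one), and conclude. The only differences are cosmetic — you bound the number of components by counting neighbors of $B\cap C_i$ rather than by iterated vertex removal, and you spell out the distinctness of the two resulting components, which the paper leaves implicit.
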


\begin{proof}Since $G$ is a locally finite graph, we may assume that $G$ has maximal vertex degree $d$, where $1\leq d<\infty$ is a positive integer.
Let $R_1$ and $R_2$ be two infinite components of $G\setminus A$. We will show that each one of $R_1\setminus B$ and $R_2\setminus B$ has at least one infinite component. 

Since $R_1\setminus B$ can be obtained from $R_1$ by removing finitely many vertices and edges, if $R_1\setminus B$ has no infinite components, then $R_1\setminus B$ has infinitely many finite components. Moreover, since each vertex of $G$ is incident to at most $d$ edges, for any connected subgraph of $G$, removing one vertex of the subgraph as well as all its incident edges can split the subgraph into at most $d$ connected components. Since $|B|<\infty$, it is not possible that $R_1\setminus B$ has infinitely many finite components. As a result, $R_1\setminus B$ has at least one infinite component. The fact that $R_2\setminus B$ has at least one infinite component can be proved similarly.
\end{proof}

Now we prove Theorem \ref{mg}. By Stalling's splitting theorem (\cite{JS68,JS71}), a group $\Gamma$ has more than one end if and only if one of the followings holds. 
\begin{enumerate}
\item the group $\Gamma$ is an amalgated free product, i.e.
\begin{eqnarray*}
 \Gamma=H*_{C}K,
 \end{eqnarray*}
 where $H,K$ are groups, and $C$ is a finite group such that $C\neq H$ and $C\neq K$. More precisely, there exist group homomorphisms $\phi:C\rightarrow H$ and $\psi: C\rightarrow K$, such that $H*_{C}K$ can be obtained from the free product $H*K$ by adding relations $\phi(c)\psi^{-1}(c)$ for every $c\in C$. For simplicity, we shall identify $c$, $\phi(c)$ and $\psi(c)$ in all the remaining parts of the paper.
\item There exists a group $H$, two finite subgroups $F_1$, $F_2$ of $H$, and a group isomorphism $\phi: F_1\rightarrow F_2$, such that $\Gamma$ is the following HNN extension
\begin{eqnarray*}
\Gamma=\langle H,t|t^{-1}f_1t=\phi(f_1),\ \forall f_1\in F_1\rangle
\end{eqnarray*}
\end{enumerate}

We will prove Theorem \ref{mg} for Case (1) and Case (2), in the subsections below.

\subsection{Proof of Theorem \ref{mg} when $\Gamma$ is an amalgamated free product}\label{pf141}

In this section, we prove Theorem \ref{mg} when the group $\Gamma$ is a free product with amalgamation, as described in Case (1). We start with the following standard result concerning members in a amalgamated free product.

\begin{lemma}\label{l23}(Normal form for amalgamated free product \cite{LS77}) Every element in $H*_{C}K$ which is not in the image of $C$ can be written in the normal form
\begin{eqnarray*}
v_1\cdot\ldots\cdot v_n
\end{eqnarray*}
where the terms $v_i$ lie in $H\setminus C$ or $K\setminus C$ and alternate between two sets. The length $n$ is uniquely determined and two such expressions $v_1\cdot\ldots\cdot v_n$ give the same element in $H*_C K$ if and only if there are elements $c_1,\ldots, c_n\in C$, so that
\begin{eqnarray*}
w_k=c_{k-1}v_k c_k^{-1},
\end{eqnarray*}
where $c_0=c_n=1$.
\end{lemma}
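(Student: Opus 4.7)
The plan is to prove Lemma \ref{l23} by the classical two-step strategy: first establish \emph{existence} of an alternating decomposition, then establish \emph{essential uniqueness} via the van der Waerden--Artin action trick on a set of formal normal forms.

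For existence, I would start with any element $g \in H *_C K$ written as a finite product of elements of $H \cup K$ (this is automatic from the fact that $H *_C K$ is generated by $H \cup K$). Consecutive factors from the same group can be collapsed into one. Any factor that happens to lie in the amalgamated subgroup $C$ can, using the identification of $C$ inside $H$ with $C$ inside $K$, be pushed into an adjacent factor from the other group, which strictly decreases the length. Iterating, one arrives at a word whose factors strictly alternate between $H \setminus C$ and $K \setminus C$, which is the required normal form, unless the collapsing terminates at a single factor lying in the image of $C$ (the excluded case).

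For uniqueness, I would fix transversals $T_H$ for the right cosets of $C$ in $H$ and $T_K$ for the right cosets of $C$ in $K$, each containing the identity. Let $\Omega$ be the set of all finite sequences $(c; x_1, x_2, \ldots, x_n)$ with $c \in C$, each $x_i \in (T_H \cup T_K) \setminus \{1\}$, and the $x_i$ strictly alternating between $T_H$ and $T_K$. I would define left actions of $H$ on $\Omega$ and of $K$ on $\Omega$ as follows: to act by $h \in H$ on a sequence whose first alternating factor is in $T_K$ (or the empty sequence), write $hc = c' x$ uniquely with $c' \in C$ and $x \in T_H$ and prepend $x$ while replacing the $C$-prefix by $c'$; if the first factor $x_1$ is already in $T_H$, compute $hc x_1$ in $H$, write it as $c'' x'$ with $c'' \in C$, $x' \in T_H$, and either replace $x_1$ by $x'$ or, if $x'=1$, delete $x_1$ and absorb $c''$ into the prefix (recursing on the next layer). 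The action of $K$ is defined symmetrically. Once one checks that each of these is a group action and that the two restrictions to $C$ agree (both act simply by left multiplication on the $C$-prefix), the universal property of the amalgamated product gives an action of $H *_C K$ on $\Omega$.

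With the action in hand, uniqueness is quick: evaluating at the basepoint $(1) \in \Omega$, the word $v_1 \cdots v_n$ produces a sequence whose number of alternating terms records $n$ and whose terms, together with the accumulated $C$-prefix, determine the $v_i$ up to the canonical ambiguity from having chosen transversals. Two normal forms for the same element must yield the same sequence, so $n$ is determined, and the relation $w_k = c_{k-1} v_k c_k^{-1}$ with $c_0 = c_n = 1$ arises precisely as the freedom to reshuffle $C$-elements between consecutive alternating factors. The main obstacle I anticipate is the bookkeeping required to verify that the $H$- and $K$-actions on $\Omega$ are genuinely well-defined group actions and that they agree on $C$; this is essentially a careful induction on the length of the sequence, and is where all the combinatorial content of the theorem lives.
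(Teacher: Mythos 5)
Your outline is sound, but note that the paper does not prove this lemma at all: it is quoted verbatim from Lyndon--Schupp \cite{LS77} as a classical fact, so there is no ``paper proof'' to compare against. What you describe is the standard two-step argument: existence by collapsing adjacent same-group factors and absorbing amalgamated factors (which terminates since each move strictly shortens the word), and uniqueness by the van der Waerden--Artin device of letting $H$ and $K$ act on the set $\Omega$ of formal sequences $(c;x_1,\dots,x_n)$ built from right-coset transversals, checking the two actions agree on $C$, and invoking the universal property of $H*_C K$. This is correct, and your final telescoping observation $w_1\cdots w_n=(c_0v_1c_1^{-1})(c_1v_2c_2^{-1})\cdots(c_{n-1}v_nc_n^{-1})=v_1\cdots v_n$ gives the ``if'' direction of the equivalence immediately, while the transversal analysis gives the ``only if.'' You are also right that essentially all of the content sits in the deferred verification that the prepend/absorb rules define genuine group actions restricting compatibly to $C$; that check is a finite case analysis plus an induction on sequence length and contains no hidden obstruction, but a complete write-up must include it (in particular the case where acting by $h\in H\setminus C$ on a sequence beginning with a $T_H$-letter produces $x'=1$, forcing the recursive absorption into the tail). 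An alternative route, equally standard, is to realize $H*_C K$ as acting on its Bass--Serre tree and read off uniqueness from the fact that an alternating word of positive syllable length moves the base edge; that version trades the combinatorial bookkeeping for the geometry of the tree. For the purposes of this paper either proof (or simply the citation already given) suffices.
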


Assume that $\Gamma$ is a free product with amalgamation as described in Case (1). Let $G_{H}$ (resp.\ $G_{K}$) be a locally finite Cayley graph for $H$ (resp.\ K) with respect to a finite set of generators $T_H$ (resp.\ $T_K$). Let $G_0$ be a locally finite Cayley graph of $\Gamma$ constructed from $G_H$ and $G_K$ as follows.
\begin{letlist}
\item Construct the free product graph $G_H*G_K$ of $G_H$ and $G_K$. In other words $G_H*G_K$ is the Cayley graph for the free product $H*K$ with respect to the generator set $T_H\cup T_K$.
\item Glue vertices $u\in G_H*G_K$ and $w\in G_H*G_K$ if there exists a vertex $v\in G_H*G_K$ such that $v^{-1}u=v^{-1}w\in C$.
\end{letlist}
Let $E_0$ denote the edge set of $G_0$.

Let $1_{\Gamma}$ be the identity element of the group $\Gamma$.
Let $G$ be a locally finite Cayley graph of $\Gamma$ with respect to the generator set $T$ such that $|T|<\infty$, $T=T^{-1}$ and $1_{\Gamma}\notin T$. Let 
\begin{eqnarray}
D_0=\max_{t\in T, v\in \Gamma}\mathrm{dist}_{G_0}(v,vt),\label{dd}
\end{eqnarray}
where $\mathrm{dist}_{G_0}(\cdot,\cdot)$ is the graph distance in $G_0$.

Let
\begin{eqnarray}
C_1=\{v\in \Gamma,\mathrm{dist}_{G_0}(v,C)\leq D_0+1\}.\label{dc}
\end{eqnarray}
Then $C\subseteq C_1$, and $|C_1|<\infty$. Let $C_2$ be a finite set of vertices containing $C_1$, such that $C_2$ is connected in $G$ in the sense defined as in Section 1.

\begin{lemma}\label{luv}Assume that $u\in \Gamma$ has normal form starting from a term in $H\setminus C$; and that $v\in \Gamma$ has normal form starting from a term in $K\setminus C$, then any path in $G_0$ joining $1_{\Gamma}$ and $u^{-1}v$ must visit a point in $u^{-1}C$.
\end{lemma}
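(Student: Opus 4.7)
The plan is to exploit the Bass--Serre tree $T$ for $\Gamma=H*_C K$: $T$ has vertex set $(\Gamma/H)\sqcup(\Gamma/K)$ and edge set $\Gamma/C$, with the edge $gC$ joining the vertices $gH$ and $gK$. I identify $G_0$ with the Cayley graph of $\Gamma$ on the generating set $T_H\cup T_K$ and define a projection $p\colon V(G_0)\to E(T)$ by $p(g)=gC$. If $g'=gt$ with $t\in T_H\subset H$, then $gH=g'H$, so $gC$ and $g'C$ are both incident to the vertex $gH$ of $T$; likewise $t\in T_K$ forces them to share $gK$. Hence any path $1_\Gamma=g_0,g_1,\ldots,g_n=u^{-1}v$ in $G_0$ projects to an edge-walk $C=g_0 C,g_1 C,\ldots,g_n C=u^{-1}vC$ in $T$ in which consecutive edges share a vertex.

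Since $T$ is a tree, any such edge-walk from an edge $e$ to an edge $e'$ must traverse every edge on the unique tree-geodesic between $e$ and $e'$. So it suffices to show that the edge $u^{-1}C$ lies on the tree-geodesic from $C$ to $u^{-1}vC$; whenever some $g_i C=u^{-1}C$, one has $g_i\in u^{-1}C$, which is the conclusion of the lemma. By left translation by $u$ (using that $\Gamma$ acts on $T$ by graph automorphisms), this is equivalent to showing that the edge $C$ lies on the tree-geodesic from $uC$ to $vC$.

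To establish this, write the normal forms $u=h_1 k_1 h_2 k_2\cdots$ with $h_1\in H\setminus C$ and $v=k_1'h_1'k_2'\cdots$ with $k_1'\in K\setminus C$. I would verify that the natural edge-chains $C, h_1C, h_1k_1C,\ldots, uC$ and $C, k_1'C, k_1'h_1'C,\ldots, vC$ are non-backtracking walks of edges in $T$: the incidences alternate between $H$-type vertices and $K$-type vertices of the bipartite tree because the syllables alternate between $H\setminus C$ and $K\setminus C$, so the two vertices used to enter and leave each intermediate edge are genuinely distinct. Being non-backtracking walks in a tree, these chains are the unique geodesics from $C$ to $uC$ and from $C$ to $vC$ respectively. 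Their first edges after $C$ are $h_1 C$ (incident to $H$) and $k_1'C$ (incident to $K$), which are distinct edges, so the two geodesics share only the edge $C$. Concatenating the reverse of the first with the second yields the tree-geodesic from $uC$ to $vC$, and this geodesic contains the edge $C$, as required.

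The main technical point is the claim that the normal-form edge-chain is non-backtracking, which is where Lemma \ref{l23} is essential: backtracking at some step would mean that a prefix of the chain returns to an earlier edge, which translates into a nontrivial identity among the syllables $h_i, k_j$ of a type forbidden by the uniqueness clause of the amalgamated normal form. Once this is in hand, everything else is bookkeeping with the Bass--Serre tree.
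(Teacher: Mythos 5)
Your argument is correct, but it takes a genuinely different route from the paper's. The paper works directly in $G_0$: given an arbitrary path from $1_{\Gamma}$ to $u^{-1}v$, it records the vertices where the path switches between $H$-type and $K$-type edges, then runs an explicit iterative cancellation on that list until the surviving increments form a normal form of $u^{-1}v$ all of whose partial products lie on the path; uniqueness of the normal form (Lemma \ref{l23}) then forces one of those partial products to lie in $u^{-1}C$. You instead push the whole problem into the Bass--Serre tree $T$ of $H*_C K$, project vertices of $G_0$ to edges $gC$ of $T$, and reduce the claim to the statement that the edge $C$ separates $uC$ from $vC$, which you verify by checking that the normal-form edge-chains for $u$ and $v$ are non-backtracking (hence geodesic) and leave $C$ through opposite endpoints. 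Both proofs ultimately rest on the same algebraic input --- the normal form theorem, which you need both for $T$ being a tree and for the non-backtracking claim --- but your version replaces the paper's ad hoc cancellation procedure with a clean separation argument, at the cost of importing Bass--Serre theory; the paper's version is longer but self-contained. Two small points you should make explicit if you write this up: the fact that an edge-walk in a tree in which consecutive edges share a vertex must pass through every edge separating its endpoints (a short argument via the two components of $T$ minus that edge), and the observation that $u\notin C$ and $v\notin C$, so that $u^{-1}C$ is genuinely an interior edge of the geodesic from $C$ to $u^{-1}vC$ rather than coinciding with an endpoint.
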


\begin{proof}Since $u\in \Gamma$ has a normal form starting from a term in $H\setminus C$, and $v\in \Gamma$ has a normal form starting from a term in $K\setminus C$, then the concatenation of normal forms of $u^{-1}$ and $v$ gives us a normal form of $u^{-1}v$, denoted by $z_1\cdot\ldots \cdot z_n$. This normal form gives rise to a path $\tau_{u^{-1}v}$ in $G_0$ joining $1_{\Gamma}$ and $u^{-1}v$. More precisely, $\tau_{u^{-1}v}$ is the concatenation of $n$ paths $\tau_i$, $1\leq i\leq n$, such that $\tau_i$ is the shortest path in $z_1\cdot\ldots\cdot z_{i-1}H$ (resp.\ $z_1\cdot\ldots\cdot z_{i-1}K$) consisting of edges of $G_0$ and joining $z_1\cdot\ldots \cdot z_{i-1}$ and $z_1\cdot\ldots \cdot z_{i}$, if $z_i\in H\setminus C$ (resp. $z_i\in K\setminus C$). We can see that the path $\tau_{u^{-1}v}$ visits the vertex $u^{-1}$.

Assume that $\ell_{u^{-1}v}$ is an arbitrary path in $G_0$ joining $1_{\Gamma}$ and $u^{-1}v$. Then $\ell_{u^{-1}v}$ gives rise to a sequence $x_1,\ldots,x_m$, such that
\begin{letlist}
\item for $1\leq i\leq m$, $\ell_{u^{-1}v}$ visits every vertex in $\{x_1\cdot\ldots\cdot x_i\}_{1\leq i\leq m}$;
\item for $1\leq i\leq m$, $x_i\in H\setminus C$ or $x_i\in K\setminus C$;
\item for $1\leq i\leq m-1$, if $x_i\in H\setminus C$, then $x_{i+1}\in K\setminus C$;
\item $u^{-1}v=x_1\cdot\ldots \cdot x_m c$, where $c\in C$.
\end{letlist}
Indeed $x_1,\ldots, x_m$ can be found by the following induction process. Let $\{y_1,\ldots, y_{\ell}\}$ be all the vertices incident to an edge in $[H\setminus C]\cap E_0$ along $\ell_{u^{-1}v}$ or an edge in $[K\setminus C]\cap E_0$ along $\ell_{u^{-1}v}$, and assume that starting from $1_{\Gamma}$ and traversing along $\ell_{u^{-1}v}$, one visits $y_1,\ldots,y_{\ell}$ in order. Let $y_0=1_{\Gamma}$. From $\{y_1,\ldots,y_{\ell}\}$, we perform the following manipulations.
\begin{Alist}
\item Remove all the $y_i$'s such that $y_{i-1}^{-1}y_i\in C$; let $\{y_1^{(1)},\ldots,y_{\ell_{1}}^{(1)}\}$ be the remaining set of vertices;
\item Remove all the $y_i^{(1)}$'s such that either both $[y_{i-1}^{(1)}]^{-1}y_{i}^{(1)}$ and $[y_{i}^{(1)}]^{-1}y_{i+1}^{(1)}$ are in $H$ or both $[y_{i-1}^{(1)}]^{-1}y_{i}^{(1)}$ and $[y_{i}^{(1)}]^{-1}y_{i+1}^{(1)}$ are in $K$; let $\{y_1^{(2)},\ldots,y_{\ell_{2}}^{(2)}\}$ be the remaining set of vertices.
\end{Alist}
Once we have constructed $\{y_1^{(2j)},\ldots,y_{\ell_{2j}}^{(2j)}\}$, for $j\geq 1$, we perform the following manipulations.
\begin{Alist}
\item Remove all the $y_i$'s  such that $[y_{i-1}^{(2j)}]^{-1}y_i^{(2j)}\in C$; let $\{y_1^{(2j+1)},\ldots,y_{\ell_{1}}^{(2j+1)}\}$ be the remaining set of vertices;
\item Remove all the $y_i^{(1)}$'s such that either both $[y_{i-1}^{(2j+1)}]^{-1}y_{i}^{(2j+1)}$ and $[y_{i}^{(2j+1)}]^{-1}y_{i+1}^{(2j+1)}$ are in $H$ or both $[y_{i-1}^{(2j+1)}]^{-1}y_{i}^{(2j+1)}$ and $[y_{i}^{(2j+1)}]^{-1}y_{i+1}^{(2j+1)}$ are in $K$; let $\{y_1^{(2j+2)},\ldots,y_{\ell_{2j+2}}^{(2j+2)}\}$ be the remaining set of vertices.
\end{Alist}
We repeat the process above until we end up with a set of vertices $\{y_1^{(2k)},\ldots,y_{\ell_{2k}}^{(2k)}\}$ satisfying 
\begin{letlist}
\item for $1\leq i\leq \ell_{2k}$, $[y_{i-1}^{(2k)}]^{-1}y_i^{(2k)}\in H\setminus C$ or $[y_{i-1}^{(2k)}]^{-1}y_i^{(2k)}\in K\setminus C$;
\item for $1\leq i\leq \ell_{2k}-1$, if $[y_{i-1}^{(2k)}]^{-1}y_i^{(2k)}\in H\setminus C$, then $[y_{i}^{(2k)}]^{-1}y_{i+1}^{(2k)}\in K\setminus C$;
\item for $1\leq i\leq \ell_{2k}-1$, if $[y_{i-1}^{(2k)}]^{-1}y_i^{(2k)}\in K\setminus C$, then $[y_{i}^{(2k)}]^{-1}y_{i+1}^{(2k)}\in H\setminus C$;
\item $u^{-1}v=y_{\ell_{2k}}^{(2k)} c$, where $c\in C$.
\end{letlist}
Obviously the process above will terminate in finitely many steps. Let $x_1=y_1^{(2k)}$, for $2\leq i\leq \ell_{2k}$, $x_{i}=[y_{i-1}^{(2k)}]^{-1}y_i^{(2k)} $. Then 
\begin{eqnarray*}
x_1\cdot x_2\cdot\ldots \cdot [x_{\ell_{2k}}c]
\end{eqnarray*}
gives another normal form for $u^{-1}v$. By the uniqueness of normal form Lemma \ref{l23}, we have $\ell_{2k}=m=n$, and 
\begin{eqnarray*}
z_k=c_{k-1}^{-1}x_k c_{k}.
\end{eqnarray*}
where $c_k\in C$.
Since $\ell_{u^{-1}v}$ visits every vertex in $\{x_1\cdot\ldots\cdot x_{i}\}$, $1\leq i\leq n$, $\ell_{u^{-1}v}$ visits a vertex in $u^{-1}C$.
\end{proof}

\begin{lemma}\label{l43}Let $u\in \Gamma$ have normal form starting from a term in $H\setminus C$; and let $v\in \Gamma$ have normal form starting from a term in $K\setminus C$. If the lengths of normal forms for $u$ and $v$ exceed the maximal length of normal forms of vertices in $C_1$ (resp.\ $C_2$), $u$ and $v$ are in two distinct infinite components of $G\setminus C_1$ (resp.\ $G\setminus C_2$).

In particular, $G\setminus C_1$ (resp.\ $G\setminus C_2$) has at least two distinct infinite components.
\end{lemma}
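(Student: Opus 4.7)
The plan is to transfer the separation proved in Lemma \ref{luv} from the auxiliary Cayley graph $G_0$ to the given Cayley graph $G$, exploiting the fact that every generator $t\in T$ moves a vertex at most $D_0$ in $G_0$. Since left multiplication by $u$ is an automorphism of $G_0$, Lemma \ref{luv} translates immediately to the statement that any path in $G_0$ joining $u$ and $v$ must visit a vertex of $C$.

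Next I would lift $G$-paths to $G_0$-walks. For any $t\in T$ and any $w\in\Gamma$, the definition of $D_0$ in (\ref{dd}) gives $\mathrm{dist}_{G_0}(w,wt)\leq D_0$, so each edge $(w,wt)$ of a $G$-path from $u$ to $v$ can be replaced by a $G_0$-walk of length at most $D_0$ between the same endpoints. Concatenation yields a $G_0$-walk from $u$ to $v$; extracting a path inside it and applying the translated Lemma \ref{luv}, some vertex $c\in C$ lies on one of the inserted $G_0$-lifts of a $G$-edge $(w,wt)$. Then $\mathrm{dist}_{G_0}(w,c)\leq D_0$, so $w\in C_1$ by (\ref{dc}). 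Thus every $G$-path from $u$ to $v$ passes through $C_1$, placing $u$ and $v$ in distinct components of $G\setminus C_1$, provided both lie outside $C_1$.

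To dispose of the proviso I would note that $C_1$ is finite (a finite $G_0$-ball around the finite set $C$ in a locally finite graph), while the $G_0$-distance of an element from $1_\Gamma$ grows with the length of its normal form, because any $G_0$-realisation of a word $v_1\cdots v_n$ must traverse each syllable and the identifications inside $C$ shorten distances only by bounded amounts. Hence for normal forms of sufficiently large length both $u$ and $v$ fall outside $C_1$, giving the first claim. For the ``in particular'' statement I would use that $G$ is connected and locally finite, so every component of $G\setminus C_1$ contains a neighbour of $C_1$ and therefore $G\setminus C_1$ has only finitely many components; the infinite set of elements with long normal form starting in $H\setminus C$ lies in $G\setminus C_1$ and is separated, by the argument above, from the analogous infinite set starting in $K\setminus C$, so by pigeonhole each side contains an infinite component, producing the two required infinite components. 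The main obstacle I expect is the bookkeeping step verifying that normal-form length controls $G_0$-distance from $C$ despite the amalgamation identifications: this is intuitive but requires Lemma \ref{l23} to preclude shortcuts through $C$ in the glued graph $G_0$.
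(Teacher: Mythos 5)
Your separation argument is essentially the paper's: you transfer Lemma \ref{luv} from $G_0$ to $G$ by replacing each $G$-edge $(w,wt)$ with a $G_0$-walk of length at most $D_0$ and observing, via (\ref{dd}) and (\ref{dc}), that if the resulting $G_0$-path meets $C$ then the corresponding vertex $w$ of the $G$-path lies in $C_1$; the paper runs the same computation in contrapositive form. Your disposal of the proviso ($u,v\notin C_1$) is more laborious than necessary: since $C_1$ is finite and normal-form length is well defined by Lemma \ref{l23}, the elements of $C_1$ have normal forms of bounded length, which is all that is needed; you do not need the quantitative claim that $G_0$-distance from $1_{\Gamma}$ grows with normal-form length, which is the step you yourself flag as the main obstacle.

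Where you genuinely diverge is the infiniteness claim, and there is a small gap there. The lemma asserts that the specific $u$ and $v$ (with sufficiently long normal forms) lie in two distinct \emph{infinite} components. Your pigeonhole argument shows only that \emph{some} elements with long normal forms beginning in $H\setminus C$ lie in an infinite component, and likewise for $K\setminus C$; it does not rule out that your particular $u$ sits in a finite component of $G\setminus C_1$. The paper closes this by exhibiting, for each such $u$, a one-sided infinite path starting at $u$ along which normal-form length is non-decreasing, hence which never re-enters $C_1$. Your own finiteness observation also yields a quick repair: $G\setminus C_1$ has finitely many components, so the union $F$ of its finite components is a finite set whose elements have bounded normal-form length; taking ``sufficiently large'' to exceed that bound as well forces $u,v\notin C_1\cup F$, i.e.\ into infinite components. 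One of these patches needs to be stated explicitly for the first assertion of the lemma to be proved, not just the ``in particular'' part.
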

\begin{proof}By Lemma \ref{luv}, any path in $G_0$ joining $1_{\Gamma}$ and $u^{-1}v$ must visit a point in $u^{-1}C$.

Assume that there exists a path $\ell_{uv}$ joining $u$ and $v$ in $G\setminus C_1$. Then for any edge $\langle p,q \rangle$ along $\ell_{uv}$, $p$ and $q$ can be joined by a path in $G_0$ which does not pass through any vertex in $C$. That is because $\mathrm{dist}_{G_0}(p,q)\leq D_0$ by (\ref{dd}), and $\mathrm{dist}_{G_0}(p,C)\geq D_0+2$ and $\mathrm{dist}_{G_0}(q,C)\geq D_0+2$, by (\ref{dc}) and the fact that $p,q\in G\setminus C_1$. Consider the shortest path in $G_0$ joining $p$ and $q$, the length of this path is at most $D_0$. If this path passes through a vertex in $C$, then the length of this path is at least $2D_0+4$. The contradiction implies that the shortest path in $G_0$ joining $p$ and $q$ does not pass through any vertex in $C$.

By replacing each edge $\langle p,q \rangle$ along $\ell_{uv}$ with a path in $G_0$ joining $p$ and $q$ that does not pass through any vertex in $C$, we obtain a path in $G_0$ joining $u$ and $v$ without passing through any vertex in $C$. This implies that there exists a path in $G_0$ joining $1_{\Gamma}$ and $u^{-1}v$ which does not visit any vertex of $u^{-1}C$, by the vertex-transitivity of $G_0$. This is a contradiction.

As a result, any path in $G$ joining $u$ and $v$ must visit a vertex in $C_1$. Since $C_1$ is finite, the lengths of normal forms for vertices in $C_1$ have a maximum $M_0$. If the lengths of normal forms for $u$ and $v$ exceed $M_0$, then $u\in G\setminus C_1$ and $v\in G\setminus C_1$. This means that $u$ and $v$ are in two distinct components of $G\setminus C_1$. Moreover, we can find a singly-infinite path $\ell_u$ (resp.\ $\ell_v$) on $G$ starting from $u$ (resp.\ $v$), such that moving along the path from $u$, the length of normal forms along the path is non-decreasing. Therefore all the vertices along $\ell_u$ (resp.\ $\ell_v$) are in $G\setminus C_1$. As a result, $u$ and $v$ are in two distinct infinite component of $G\setminus C_1$. Similar arguments applies if we replace $C_1$ by $C_2$.
\end{proof}

Let $\Omega_n$ be the set of $n$-step SAWs on $G$ starting from the identity vertex $1_{\Gamma}$. We make the following assumptions
\begin{assumption}\label{ap45}
\begin{enumerate}
\item Let $r_1\in \Gamma$ have a normal form starting from a term in $H\setminus C$, and ending in a term in $H\setminus C$.  
\item Let $r_2\in \Gamma$ have a normal form starting from a term in $K\setminus C$ and ending at a term in $K\setminus C$. 
\item The lengths of normal forms of $r_1$ and $r_2$ are strictly greater than the maximal length of normal forms of elements in $C_2$. 
\item 
\begin{eqnarray}
\mathrm{dist}_{G}(r_1,C_2)\geq |C_2|+1;\label{dr1}\\
\mathrm{dist}_{G}(r_2,C_2)\geq |C_2|+1.\label{dr2}
\end{eqnarray}
\end{enumerate}
\end{assumption}

Let $A$ (resp.\ $B$) be the collection of elements in $\Gamma$ with a normal form starting from an element in $H\setminus C$ (resp.\ $K\setminus C$).
Let
\begin{eqnarray}
S=C_2,\label{cs}
\end{eqnarray}
then by the construction of $C_2$ it is obvious that $S$ is connected. The graph $G\setminus S$ has at least two distinct infinite components by Lemma \ref{l43}, $C_1\subset C_2$ and Lemma \ref{l41}. Let $A_1\subseteq A$ be a component of $G\setminus S$, define $\phi(S,A_1)=r_2$; i.e. $\phi(S,A_1)$ is the graph automorphism such that for any $x\in \Gamma$, $\phi(S,A_1)x=r_2 x$. Let $Q_1\subseteq B$ be a component of $G\setminus S$, define $\phi(S,Q_1)=r_1$.

\begin{lemma}\label{l46}$r_2A_1\cap S=\emptyset$.
\end{lemma}
\begin{proof}Recall that $A_1$ is a component of $G\setminus S$ such that all the elements in $A_1$ have a normal form starting from a term in $H\setminus C$.  For each $v\in A_1$, since $r_2$ has a normal form ending in a term in $K\setminus C$, and $v$ has a normal from starting from a term in $H\setminus C$, the concatenation of normal forms of $r_2$ and $v$ gives us a normal form of $r_2v$. Under Assumption \ref{ap45} (3), the length of the normal form of $r_2v$ is also strictly greater than the maximal length of normal forms of elements in $S$, we have $r_2 v\notin S$. Since $v$ is an arbitrary point in $A_1$. By Lemma $\ref{l43}$ we obtain $r_2 A_1\cap S=\emptyset$.
\end{proof}

\begin{lemma}\label{l47}$r_2A_1$ is in a connected component of $G\setminus S$ different from $A_1$.
\end{lemma}

\begin{proof}For any two vertices $u,v\in A_1$, by the connectivity of $A_1$, there exists a path $\ell_{uv}$ joining $u$ and $v$ and consisting of vertices in $A_1$. Then $r_2\ell_{uv}$ is a path joining $r_2u$ and $r_2v$ which does not intersect $S$, since any vertex along $r_2 \ell_{uv}$ is in $r_2 A_1$, and $r_2 A_1\cap S=\emptyset$ by Lemma \ref{l46}. Therefore all the vertices in $r_2 A_1$ are in the same component $B_1$ of $G\setminus S$.

Moreover, for each $w\in A_1$, $r_2w$ has a normal form starting from a term in $K\setminus C$ by Assumption \ref{ap45} (2), and all the elements in $A_1$ have a normal form starting from a term in $H\setminus C$, we have $r_2 A_1\cap A_1=\emptyset$. By Assumption \ref{ap45}(3), the length of normal form of each element in $r_2A_1$ exceeds the maximal length of normal forms of elements in $C_2$.  By Lemma \ref{l43}, $r_2 A_1$ and $A_1$ are in two distinct components of $G\setminus C_2=G\setminus S$.
\end{proof}

\begin{lemma}Assumption \ref{ap31} (3) holds with the choice of $S$ as given by (\ref{cs}).
\end{lemma}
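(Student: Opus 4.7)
The plan is to verify the three sub-conditions of Assumption \ref{ap31}(3) with $S=C_2$ and automorphisms $\phi(S,A_1)=r_2$ when $A_1$ is a component of $G\setminus S$ lying in $A$, and $\phi(S,B_1)=r_1$ when the component lies in $B$. By the symmetric roles of $r_1$ and $r_2$, I only treat the case of a component $A_1\subseteq A$.

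First I would establish the dichotomy that every component of $G\setminus S$ lies entirely in $A$ or entirely in $B$. This proceeds exactly as in the proof of Lemma \ref{l43}: an edge $\langle p,q\rangle$ of $G$ with both endpoints outside $C_1$ lifts to a $G_0$-path of length at most $D_0$ by the definition of $D_0$ in (\ref{dd}), and this lifted path cannot meet $C$ because the definition (\ref{dc}) of $C_1$ forces $\mathrm{dist}_{G_0}(p,C),\mathrm{dist}_{G_0}(q,C)\geq D_0+2$. Thus any path in $G\setminus C_1$ (in particular any path in $G\setminus S$) lifts to a path in $G_0\setminus C$, and by Lemma \ref{luv} such a path cannot connect an $A$-vertex to a $B$-vertex.

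Next I would prove the containment $r_2 A_1\subseteq B^*$ for a single infinite component $B^*$ of $G\setminus S$ in the $B$-family. Since $r_2$'s normal form both begins and ends in $K\setminus C$ while every $a\in A_1\subseteq A$ has normal form beginning in $H\setminus C$, Lemma \ref{l23} tells us the normal form of $r_2a$ is the concatenation of the two normal forms, so its length strictly exceeds $|r_2|_{\mathrm{nf}}$, which by hypothesis exceeds the maximum normal-form length in $C_2$. Hence $r_2A_1\cap S=\emptyset$, so $r_2A_1$ — connected as the image of $A_1$ under the graph automorphism $r_2$ — is contained in a single component $B^*$ of $G\setminus S$, which lies in $B$ by the dichotomy. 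The component $B^*$ is infinite because $A_1$ is (the left-multiplication argument produces elements with arbitrarily long normal form in $A_1$). For $v\in\partial_{A_1}S$, the vertex $r_2v$ is adjacent to $r_2a$ for some $a\in A_1$ with $a\sim v$, and $r_2a\in r_2A_1\subseteq B^*$; consequently $r_2v\in B^*$ or $r_2v\in \partial_{B^*}S$, which is the content of the second bullet of (3).

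The main obstacle is the last bullet: producing a path of uniformly bounded length from $v$ to $r_2v$ inside $G\setminus(A_1\cup r_2A_1)$. My plan is to concatenate three pieces: a path inside $S=C_2$ from $v$ to a fixed basepoint $c_0\in C_2$ (of length at most the diameter of $C_2$ in $G$, by the hypothesis that $C_2$ is connected in $G$); a bridge path from $c_0$ to $r_2c_0$ obtained by writing $r_2$ as a word of length $|r_2|_T$ in the generating set $T$ and translating the corresponding path from $1_\Gamma$ to $r_2$; and a path inside $r_2C_2$ from $r_2c_0$ to $r_2v$. The first and third pieces avoid $A_1\cup r_2A_1$ because $A_1\cap S=\emptyset$, $r_2A_1\cap S=\emptyset$ (from the previous paragraph), $A_1\cap r_2S=\emptyset$ (by the symmetric normal-form length argument applied to $r_2^{-1}$ translates), and $r_2A_1\cap r_2S=r_2(A_1\cap S)=\emptyset$. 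For the middle bridge, the distance condition $\mathrm{dist}_G(r_2,C_2)\geq|C_2|+1$ guarantees the bridge vertices stay outside $C_2$, so they land in $G\setminus S$; they stay in $B$-family components by the normal-form structure (being prefix-translates of $r_2$ whose normal forms start in $K\setminus C$), and in particular avoid $A_1\subseteq A$. Avoidance of $r_2A_1$ along the bridge is the delicate point, and is handled analogously using the $r_2$-translated version of the same distance and normal-form arguments. The total length $N=2\,\mathrm{diam}_G(C_2)+|r_2|_T$ depends only on $C_2$ and $r_2$, not on $A_1$ or $v$, which completes the verification.
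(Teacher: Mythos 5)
Your first two steps (the $A$/$B$ dichotomy of components of $G\setminus S$, and the containment of $r_2A_1$ in a single different infinite component $B^*$ via the normal-form concatenation and length comparison) match the paper's argument. The gap is in the last bullet of Assumption \ref{ap31}(3), i.e.\ the bounded connecting path. Your bridge from $c_0$ to $r_2c_0$ is claimed to avoid $A_1$ because its vertices are ``prefix-translates of $r_2$ whose normal forms start in $K\setminus C$''; this is not justified. First, right translation $g\mapsto gc_0$ is not an automorphism of the Cayley graph, so translating the word path $1_\Gamma, t_1, t_1t_2,\ldots,r_2$ does not produce a path from $c_0$ to $r_2c_0$ (this is fixable by taking $c_0=1_\Gamma\in C\subseteq C_2$). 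More seriously, even for the genuine path of prefixes of $r_2$, an intermediate prefix $t_1\cdots t_i$ is an arbitrary element of $\Gamma$ and may well have normal form beginning in $H\setminus C$, hence lie in an $A$-family component --- possibly in $A_1$ itself. Nothing in the distance hypotheses \eqref{dr1}--\eqref{dr2} prevents this. The avoidance of $r_2A_1$, which you defer to an ``analogous'' argument, has exactly the same problem. So the key point of the lemma is not established.

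The paper closes this gap with a different device that you would need: take an arbitrary path $l_{w,r_2w}$ from $w\in\partial_{A_1}S$ to $r_2w$, let $p$ be its last vertex in $S$ and $q$ its first vertex in $r_2S$ thereafter, and replace the path by $L_{wp}\cup l_{pq}\cup L_{q,r_2w}$, where $L_{wp}\subseteq S$ and $L_{q,r_2w}\subseteq r_2S$ use the connectivity of $C_2$. The middle piece $l_{pq}$ avoids $A_1$ because $l_{pq}\setminus\{p\}$ is a connected subset of $G\setminus S$ containing $q\in r_2S\subseteq B_1$, hence lies in $B_1$, which is disjoint from $A_1$; it avoids $r_2A_1$ because $l_{pq}\setminus\{q\}$ is a connected subset of $G\setminus r_2S$ containing $p\in S$, and $r_2A_1$ is a component of $G\setminus r_2S$ disjoint from $S$. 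This connectivity/truncation argument, rather than any normal-form property of the intermediate vertices, is what makes the path land outside $A_1\cup r_2A_1$; the uniform bound $N$ then comes from the finiteness of $S$ and the fixed choice of $r_1,r_2$. Without some version of this step your proof does not go through.
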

\begin{proof}For each component of $G\setminus S$, either all the elements in the component have
a normal form starting from a term in $H\setminus C$, or all the elements in the component
have a normal form starting from a term in $K \setminus C$, by Lemma \ref{l43}.
Recall that $A_1$ is an arbitrary component of $G \setminus S$ such that all the elements in $A_1$ have a normal
form starting from a term in $H \setminus C$; and $B_1$ is the component of $G\setminus S$ containing $r_2A_1$.

As in Assumption \ref{ap31} (3), let $\partial_{A_1}S$ be the set consisting of all the vertices in $S$ incident to a vertex in $A_1$. Let $u\in A_1$, $w\in \partial_{A_1}S$, and $e$ be the edge of $G$ with endpoints $u$ and $w$. By Lemmas \ref{l46} and \ref{l47}, it suffices to show that $r_2 w\in \partial_{B_1}S\cup B_1$, $w$ and $r_2 w$ are joined by a path in $G\setminus [A_1\cup r_2 A_1]$, whose length is bounded above by a constant $N$ independent of $A_1$ and $w$.

 Since $r_2 u\in B_1$, we have $r_2 w\in B_1\cup \partial_{B_1}S$, since $r_2 u$ and $r_2 w$ are adjacent vertices in $G$.  Let $l_{w,r_2 w}$ be a path in $G$ joining $w$ and $r_2 w$, starting from $w$ and ending in $r_2 w$. Let $p$ be the last vertex of $S$ visited  by $l_{w, r_2 w}$, and let $q$ be the first vertex of $r_2 S$ visited by $l_{w,r_2 w}$. Then $l_{w,r_2w}$ is divided by $p$ and $q$ into 3 portions: $l_{wp}$, $l_{pq}$ and $l_{q,r_2w}$.

 By the connectivity of $S$, there exists a path $L_{wp}$ joining $w$ and $p$ and consisting of vertices of $S$; also, there exists a path $L_{q, r_2w}$ joining $q$ and $r_2 w$ and consisting of vertices of $r_2S$.  We shall prove the following lemmas concerning $L_{wp}$, $L_{q,r_2w}$ and $l_{pq}$.
 
 \begin{lemma}\label{l49}$L_{wp}\in G\setminus[A_1\cup r_2 A_1]$; $L_{q,r_2w}\in G\setminus[A_1\cup r_2 A_1]$.
 \end{lemma}
 \begin{proof}Since $L_{wp}\in S$,  $r_2 A_1\subset B_1$, and $A_1$ and $B_1$ are two distinct connected components of $G\setminus S$, we have $L_{wp}\in G\setminus [A_1\cup r_2 A_1]$.

 By (\ref{dr2}), $r_2 S\cap S=\emptyset$. Since $r_2\in r_2 S$ and $r_2\notin A_1$, the connectivity of $r_2 S$ implies that $r_2 S$ is in a component of $G\setminus S$ different from $A_1$; in particular $r_2 S\cap A_1=\emptyset$. Moreover, since $r_2w\in r_2 S$, $r_2w\in B_1\cup \partial_{B_1}S$, we have $r_2 S\subset B_1$. 
 Hence $L_{q,r_2w}\subset r_2 S\subset B_1\subset G\setminus A_1$. Since $S\cap A_1=\emptyset$, $r_2 S\cap r_2 A_1=\emptyset$, we obtain that $L_{q,r_2w}\subset r_2 S\subset G\setminus [A_1\cup r_2 A_1]$.
 \end{proof}
 
 \begin{lemma}\label{l410}$[l_{pq}\setminus\{p,q\}]\in [G\setminus[A_1\cup r_2 A_1]]$.
 
 \begin{proof}Recall that $l_{pq}$ be the portion of $l_{w,r_2w}$ between $p$ and $q$. All the vertices along $l_{pq}$ except $p$ are outside $S$, hence they are in the same component of $G\setminus S$. Since $q\in r_2S\subset B_1$, all the vertices along $l_{pq}$ except $p$ are in $B_1$. Since $B_1\cap A_1=\emptyset$, we have
\begin{eqnarray} 
 [l_{pq}\setminus\{p\}]\subset [G\setminus (A_1)]\label{st1}
\end{eqnarray} 
  Similarly, all the vertices along $ l_{pq}$ except $q$ are outside  $r_2S$, hence they are in the same component of $G\setminus r_2 S$. Under the assumption that the length of the normal form of $r_2$ is strictly greater than the maximal length of normal forms of elements in $C_2 (:=S)$, we have $r_2 A_1\cap S=\emptyset$.  Since $p\in S$, $p\notin r_2 A_1$. Note that $r_2A_1$ is a connected component of $G\setminus r_2 S$. Hence
 \begin{eqnarray}
 [l_{pq}\setminus\{q\}]\subset [G\setminus (r_2A_1)]\label{st2}
 \end{eqnarray}

By (\ref{st1}) and (\ref{st2}), no vertices in $\ell_{pq}$ except $p$ and $q$ are in $A_1\cup r_2 A_1$.
\end{proof}
\end{lemma} 
 
 Let $L_{w,r_2w}=L_{wp}\cup l_{pq}\cup L_{q,r_2w}$; then $L_{w,r_2w}$ is a path joining $w$ and $r_2w$ in $G\setminus [A_1\cup r_2 A_1]$ by Lemmas \ref{l49} and \ref{l410}.
\end{proof}

\bigskip

\subsection{Proof of Theorem \ref{t15} when $\Gamma$ is an amalgamated free product}

In this section, we prove Theorem \ref{t15} when $\Gamma$ is an amalgamated free product. Let $\Gamma$ be a finitely generated, infinite group which is a free product with amalgamation as described in (1). It suffices to construct a locally finite Cayley graph $G$ of $\Gamma$ such that SAWs on $G$ have positive speed.

Choose a Cayley graph $G_H$ (resp.\ $G_K$) for $H$ (resp.\ $K$) such that any two vertices in $C$ (resp.\ $C$) are joined by an edge. Let $G$ be the graph obtained from the free product graph $G_H*G_K$ by gluing the vertices $u\in G_{H}*G_{K}$ and $w\in G_{H}*G_{K}$ satisfying the condition that there exists a vertex $v\in G_{H}*G_{K}$ such that $v^{-1}u=v^{-1}w\in C$.

Let 
\begin{eqnarray*}
S'=S=C
\end{eqnarray*}
It is not hard to check that for the locally finite Cayley graph $G$ of $\Gamma$ constructed above with the finite set of vertices $S$, Assumption \ref{ap32} is satisfied and SAWs on $G$ have positive speed.

\subsection{HNN extension}

In this section, we prove Theorems \ref{mg} and \ref{t15} when $\Gamma$ is an HNN extension as described by Part (2) of the Stalling's splitting theorem. Again we shall explicitly construct the ``cut sets'' $S$ and $S'$ satisfying Assumptions \ref{ap31} and \ref{ap32} based on the structures of the group.

Let $F_1$ and $F_2$ be the two finite subgroups of $H$ as in Part (2) of the Stalling's splitting theorem.
Choose a set $R_1$ of representatives of the right cosets of $F_1$ in $H$, and a set $R_2$ of representatives of the right cosets of $F_2$ in $H$.  We shall assume that the identity element $1$ of $H$ is in both $R_1$ and $R_2$. In particular, $R_1$ (resp.\ $R_2$) is a subset of $H$ whose elements are in 1-1 correspondence with right cosets of $F_1$ (resp.\ $F_2$) in $H$.
 The choice of coset representatives is to be fixed to the rest of the discussion.

\begin{definition}(Normal form for HNN extension \cite{LS77})\label{dfn} Let $\Gamma$ be the HNN extension with a presentation
\begin{eqnarray*}
\Gamma=\langle H,t|t^{-1}f_1t=\phi(f_1),\ \forall f_1\in F_1 \rangle.
\end{eqnarray*}
where $H$ is a group, $F_1$, $F_2$ are two finite subgroups of $H$, and $\phi: F_1\rightarrow F_2$ is a group isomorphism.
 A normal form is a sequence $g_0, t^{\epsilon_1},g_1,t^{\epsilon_2},\ldots,t^{\epsilon_{n-1}},g_{n-1},t^{\epsilon_n},g_n (n\geq 0)$ where
\begin{itemize}
\item $g_0$ is an arbitrary element of $\Gamma$;
\item for $1\leq i\leq n$, if $\epsilon_i=-1$, then $g_i\in R_1$;
\item for $1\leq i\leq n$, if $\epsilon_i=1$, then $g_i\in R_2$; and
\item there is no consecutive subsequence $t^{\epsilon},1,t^{-\epsilon}$.
\end{itemize}
\end{definition}

\begin{theorem}(Uniqueness of normal form \cite{LS77})\label{t47}Every element $w$ of $\Gamma$ with a presentation as in (2) has a unique representation as
\begin{eqnarray*}
w=g_0t^{\epsilon_1}\cdot\ldots\cdot t^{\epsilon_n}g_n,
\end{eqnarray*}
where $g_0,t^{\epsilon_1},\ldots,t^{\epsilon_n},g_n$ is a normal form. Let $n$ be the \textbf{length} of the normal form of $w$. 
\end{theorem}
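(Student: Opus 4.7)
The plan is to prove the classical Britton normal form theorem for HNN extensions in two stages: existence of a normal form for every $w \in \Gamma$, and uniqueness via a permutation representation in the style of van der Waerden and Higman.

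For existence, I would start with any word $w = h_0 t^{\epsilon_1} h_1 t^{\epsilon_2} \cdots t^{\epsilon_n} h_n$ with $h_i \in H$, and normalize using the rewriting rules derived from the defining relation $t^{-1} c_1 t = \phi(c_1)$: for $c_1 \in C_1$ this gives $t^{-1} c_1 = \phi(c_1) t^{-1}$, and for $c_2 \in C_2$ it gives $t c_2 = \phi^{-1}(c_2) t$. Processing from right to left, at each interior $h_i$ decompose $h_i = c \cdot g_i$ where $c$ lies in $C_1$ (if $\epsilon_i = -1$) or $C_2$ (if $\epsilon_i = +1$) and $g_i$ is the chosen coset representative; push $c$ across $t^{\epsilon_i}$ via the identity above, absorbing the resulting element of the opposite subgroup into $h_{i-1}$. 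Whenever a configuration $t^\epsilon, 1, t^{-\epsilon}$ is produced, collapse it via the defining relation. The procedure terminates because the total number of $t^{\pm 1}$ letters never increases and the leftmost un-normalized position strictly advances, producing a sequence satisfying Definition \ref{dfn}.

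For uniqueness, let $X$ denote the set of all normal-form sequences. For each generator $h \in H$ and each $t^{\pm 1}$ I would define a map $X \to X$ by prepending the generator to a given normal form and renormalizing via the rules above. One checks that these maps are bijections and that they satisfy the defining relations of $\Gamma$: the relations internal to $H$ are immediate because $H$ only modifies the leading coordinate $g_0$, and the HNN relation $t^{-1} c_1 t = \phi(c_1)$ becomes a direct identity on normal forms after a short case analysis. This yields a group homomorphism $\rho: \Gamma \to \text{Perm}(X)$. The trivial normal form $(1)$ (with $n = 0$) lies in $X$, and if $w = g_0 t^{\epsilon_1} g_1 \cdots t^{\epsilon_n} g_n$ is in normal form, tracing the action step by step shows $\rho(w) \cdot (1) = (g_0, t^{\epsilon_1}, g_1, \ldots, t^{\epsilon_n}, g_n)$, since no pinches arise during the evaluation. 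Consequently two distinct normal-form sequences representing the same element of $\Gamma$ would give the same permutation but different outputs on $(1)$, which is a contradiction; the length $n$ is then intrinsic as the number of $t^{\pm 1}$ factors in the unique normal form.

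The main technical obstacle is verifying that the $t^{\pm 1}$ actions are well-defined and that they satisfy the HNN relation, which requires a careful case distinction on whether prepending $t$ to a normal form creates a pinch with an existing $t^{-1}$, and whether the intervening coordinate $g_0$ lies in $C_1$ or not. Once this case analysis is organized so that $\rho(t) \rho(t^{-1}) = \rho(t^{-1}) \rho(t) = \mathrm{id}$ and $\rho(t^{-1}) \rho(c_1) \rho(t) = \rho(\phi(c_1))$ hold identically on $X$, the remainder of the argument is bookkeeping and the uniqueness statement follows.
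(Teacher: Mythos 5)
Your argument is correct, and it is exactly the classical proof of Britton's normal form theorem for HNN extensions --- existence by right-to-left coset rewriting and uniqueness via the van der Waerden--style permutation representation of $\Gamma$ on the set of normal forms --- which is the proof given in the reference \cite{LS77} that the paper cites; the paper itself states Theorem \ref{t47} as a quoted result and supplies no proof of its own. The only nitpick is that your termination claim (``the leftmost un-normalized position strictly advances'') is slightly off when pinches cascade, but the accompanying observation that each pinch strictly decreases the number of $t^{\pm 1}$ letters already yields termination by induction on $t$-length.
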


\subsubsection{Proof of Theorem \ref{mg} when $\Gamma$ is an HNN extension}
Let $\Gamma$ be an infinite, finitely generated group, which is an HNN extension as described in Part (2) of the Stalling's splitting theorem. Let $G_H$ be a locally finite Cayley graph for $H$ with respect to a finite set of generators $T_H$ satisfying $|T_H|<\infty$, $T_H=T_H^{-1}$ and $1\notin T_H$.

Let $G_0$ be the Cayley graph of the HNN extension $\Gamma$ with respect to the generator set $T_H\cup\{t,t^{-1}\}$. Let $G$ be a locally finite Cayley graph of $\Gamma$ with respect to a finite generator set $T$ satisfying $T=T^{-1}$, $|T|<\infty$ and $1\notin T$.

Let
\begin{eqnarray}
D_0&=&\max_{v\in \Gamma,s\in T}\mathrm{dist}_{G_0}(v,vs);\label{dd0}\\
S_0&=&\left\{w\in\Gamma,\mathrm{dist}_{G_0}(w,F_1\cup F_2)\leq \frac{D_0}{2}\right\}\label{s0}
\end{eqnarray}
Obviously $S_0$ is finite since both $F_1$ and $F_2$ are finite.

\begin{lemma}\label{css}Let $u,v\in \Gamma$ such that one  of the followings hold
\begin{letlist}
\item $v$ has a normal form with $\epsilon_1=1$, $g_0\in F_1$; and $u$ has a normal form with $\epsilon_1=-1$; or 
\item $v$ has a normal form with $\epsilon_1=-1$, $g_0\in F_2$; and $u$ has a normal form with $\epsilon_1=1$; or
\item $v$ has a normal form with $\epsilon_1=1$, $g_0\in F_1$; and $u$ has a normal form with $\epsilon_1=1$, $g_0\in H\setminus F_1$; or
\item $v$ has a normal form with $\epsilon_1=-1$, $g_0\in F_2$; and $u$ has a normal form with $\epsilon_1=-1$, $g_0\in H\setminus F_2$;
\end{letlist}
Then any path in $G_0$ joining $1_{\Gamma}$ and $u^{-1}v$ must visit a vertex in $u^{-1}[F_1\cup F_2]$.
\end{lemma}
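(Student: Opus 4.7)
The plan is to follow the blueprint of Lemma \ref{luv}, with Theorem \ref{t47} (uniqueness of the HNN normal form) playing the role of Lemma \ref{l23}. Write $u = g_0^u t^{\epsilon_1^u} g_1^u \cdots t^{\epsilon_m^u} g_m^u$ and $v = g_0^v t^{\epsilon_1^v} g_1^v \cdots t^{\epsilon_n^v} g_n^v$ in normal form, so that formally
\[
u^{-1} v \;=\; (g_m^u)^{-1} t^{-\epsilon_m^u} \cdots t^{-\epsilon_1^u} \bigl[(g_0^u)^{-1} g_0^v\bigr] t^{\epsilon_1^v} g_1^v \cdots t^{\epsilon_n^v} g_n^v,
\]
and the only spot where Britton's lemma could collapse the total length is the central triple $t^{-\epsilon_1^u}\bigl[(g_0^u)^{-1} g_0^v\bigr] t^{\epsilon_1^v}$. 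The first step is to check case by case that no such collapse occurs, so that, after rewriting the middle element in the required coset-representative form, the displayed word is a genuine normal form of length $m+n$. In cases (a) and (b) the two exponents agree and the middle triple has shape $t h t$ or $t^{-1} h t^{-1}$, neither of which matches the Britton templates $t^{-1} c_1 t \to \phi(c_1)$ or $t c_2 t^{-1} \to \phi^{-1}(c_2)$; in case (c), reducibility would require $(g_0^u)^{-1} g_0^v \in C_1$, whence $g_0^u \in C_1$ since $g_0^v \in C_1$, contradicting the hypothesis $g_0^u \in H \setminus C_1$; case (d) is symmetric with $C_1$ replaced by $C_2$.

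The second step mirrors the end of the proof of Lemma \ref{luv}. Any path $\ell$ in $G_0$ from $1_\Gamma$ to $u^{-1} v$, read as a word in $T_H \cup \{t, t^{-1}\}$, can be contracted by merging consecutive $T_H$-letters into single elements of $H$ to give an expression $h_0 t^{\eta_1} h_1 \cdots t^{\eta_r} h_r = u^{-1} v$. Applying Britton reductions to this expression yields a normal form which, by Theorem \ref{t47}, must coincide with the concatenation above; in particular the projection of $\ell$ to the Bass--Serre tree of $\Gamma$ (whose vertices are the $H$-cosets of $\Gamma$) traverses the full geodesic from the vertex $H$ to the vertex $u^{-1} v H$, so at some point $\ell$ crosses the tree-edge joining $u^{-1} H$ to its neighbour on the way to $u^{-1} v H$, which is $u^{-1} t H$ in cases (a) and (c) and $u^{-1} t^{-1} H$ in cases (b) and (d).

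The final step is to identify the gateways. Britton's lemma gives $H \cap t H t^{-1} = C_1$ and $H \cap t^{-1} H t = C_2$, so the only vertices $w \in u^{-1} H$ admitting a $G_0$-edge into $u^{-1} t H$ (resp.\ $u^{-1} t^{-1} H$) are those in $u^{-1} C_1$ (resp.\ $u^{-1} C_2$); each traversal of the required tree-edge by $\ell$ therefore forces $\ell$ to visit a vertex of $u^{-1} [C_1 \cup C_2]$. The main technical point is the Britton-reducibility analysis of step one: one must see that the four hypotheses on $(\epsilon_1^u, g_0^u, \epsilon_1^v, g_0^v)$ are exactly those that prevent collapse at the junction of $u^{-1}$ and $v$, after which the remaining geometric conclusion is forced by uniqueness of the normal form together with the conjugate-intersection identities for $H$.
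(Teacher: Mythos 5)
Your argument is correct. The first half --- checking case by case via Britton's lemma that the junction triple $t^{-\epsilon_1^u}\bigl[(g_0^u)^{-1}g_0^v\bigr]t^{\epsilon_1^v}$ is never a pinch, so that the concatenation of the normal forms of $u^{-1}$ and $v$ is already reduced of length $m+n$ --- is exactly the content that the paper's proof uses implicitly when it invokes Theorem \ref{t47}; your explicit verification (in (a),(b) the exponents at the junction are equal so neither template $t^{-1}C_1t$ nor $tC_2t^{-1}$ applies, and in (c),(d) a pinch would force $g_0^u$ into $C_1$ resp.\ $C_2$) is a welcome clarification of a step the paper leaves unstated. Where you genuinely diverge is in the second half. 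The paper takes an arbitrary path, extracts the sequence of vertices where $t^{\pm1}$-edges are used, and runs an explicit iterated pinch-cancellation on that sequence until a reduced word for $u^{-1}v$ remains, then appeals to uniqueness of the normal form to assert that one surviving vertex lies in $u^{-1}[C_1\cup C_2]$ (a final step it declares ``straightforward to check''). You instead project the path to the Bass--Serre tree, observe that a walk between two tree vertices must traverse every edge of the geodesic between them, note that the no-collapse computation places $u^{-1}H$ on the geodesic from $H$ to $u^{-1}vH$ with successor $u^{-1}t^{\epsilon_1^v}H$, and close with the gateway identities $H\cap tHt^{-1}=C_1$ and $H\cap t^{-1}Ht=C_2$, which pin every crossing of that tree-edge to a vertex of $u^{-1}C_1$ or $u^{-1}C_2$. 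Your route buys a cleaner and fully explicit justification of the paper's final assertion, at the cost of importing the Bass--Serre tree, which the paper avoids by keeping everything at the level of word reductions; both ultimately rest on the same uniqueness statement. One cosmetic remark: the sentence about applying Britton reductions to the word read off the path is not actually needed once you pass to the tree, since the geodesic-covering property of walks in trees does that work for you.
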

\begin{proof}It suffices to prove that any self-avoiding path in $G_0$ joining $1_{\Gamma}$ and $u^{-1}v$ must visit a vertex in $u^{-1}[F_1\cup F_2]$.

Let $\ell_{u^{-1}v}$ be an arbitrary self-avoiding path in $G_0$ joining $1_{\Gamma}$ and $u^{-1}v$.  Let $y_1,\ldots,y_m$ be all the vertices along $\ell_{u^{-1}v}$ such that one of the followings holds
\begin{Alist}
\item both edges incident to $y_i$ along $\ell_{u^{-1}v}$ are $t$ or $t^{-1}$; or
\item one edge incident to $y_i$ along $\ell_{u^{-1}v}$ is $t$ or $t^{-1}$; the other edge incident to $y_i$ along $\ell_{u^{-1}v}$ is an edge of $G_{H}$.
\end{Alist}
Count each vertex in Case A. twice in $\{y_1,\ldots, y_m\}$, and count each vertex in Case B. once in $\{y_1,\ldots,y_m\}$. More precisely, if $x$ is a vertex along $\ell_{u^{-1}v}$ such that both edges incident to $x$ along $\ell_{u^{-1}v}$ are $t$ or $t^{-1}$, then there exists $1\leq i\leq m-1$, such that $y_i=y_{i+1}=x$. Moreover, if $z$ is a vertex along $\ell_{u^{-1}v}$ such that  one edge incident to $z$ along $\ell_{u^{-1}v}$ is $t$ or $t^{-1}$; the other edge incident to $z$ along $\ell_{u^{-1}v}$ is an edge of $G_{H}$, then there exists exactly one $i\in\{1,2,\ldots,m\}$, such that $y_i=z$.

We perform the following manipulations on the set $\{y_1,\ldots,y_m\}$.
\begin{enumerate}
\item For all $i$'s satisfying $y_{i-1}^{-1}y_i=t^{-1}$, $y_i^{-1}y_{i+1}\in F_1$, and $y_{i+1}^{-1}y_{i+2}=t$, remove $y_{i-1},y_i,y_{i+1},y_{i+2}$ from $\{y_1,\ldots,y_m\}$, and let $\{y_1^{(1)},\ldots,y_{n_1}^{(1)}\}$ be the new sequence;
\item For all $i$'s satisfying $[y_{i-1}^{(1)}]^{-1}y_i^{(1)}=t$, $[y_i^{(1)}]^{-1}y_{i+1}^{(1)}\in F_2$, and $[y_{i+1}^{(1)}]^{-1}y_{i+2}^{(1)}=t^{-1}$, remove $y_{i-1}^{(1)},y_i^{(1)},y_{i+1}^{(1)},y_{i+2}^{(1)}$ from $\{y_1^{(1)},\ldots,y_{n_1}^{(1)}\}$, and let $\{y_1^{(2)},\ldots,y_{n_2}^{(2)}\}$ be the new sequence;
\end{enumerate}
Assume we have obtained $\{y_1^{(2j)},\ldots,y_{n_{2j}}^{(2j)}\}$, then we perform the following inductive manipulations
\begin{enumerate}
\item For all $i$'s satisfying $[y_{i-1}^{(2j)}]^{-1}y_i^{(2j)}=t^{-1}$, $[y_i^{(2j)}]^{-1}y_{i+1}^{(2j)}\in F_1$, and $[y_{i+1}^{(2j)}]^{-1}y_{i+2}^{(2j)}=t$, remove $y_{i-1}^{(2j)},y_i^{(2j)},y_{i+1}^{(2j)},y_{i+2}^{(2j)}$ from $\{y_1^{(2j)},\ldots,y_{n_{2j}}^{(2j)}\}$, and let $\{y_1^{(2j+1)},\ldots,y_{n_{2j+1}}^{(2j+1)}\}$ be the new sequence;
\item For all $i$'s satisfying $[y_{i-1}^{(2j+1)}]^{-1}y_i^{(2j+1)}=t$, $[y_i^{(2j+1)}]^{-1}y_{i+1}^{(2j+1)}\in F_2$, and $[y_{i+1}^{(2j+1)}]^{-1}y_{i+2}^{(2j+1)}=t^{-1}$, remove $y_{i-1}^{(2j+1)},y_i^{(2j+1)},y_{i+1}^{(2j+1)},y_{i+2}^{(2j+1)}$ from $\{y_1^{(2j+1)},\ldots,y_{n_{2j+1}}^{(2j+1)}\}$, and let $\{y_1^{(2j+2)},\ldots,y_{n_{2j+2}}^{(2j+2)}\}$ be the new sequence.
\end{enumerate}
We continue the above process until we obtain $\{y_1^{(2k)},\ldots,y_{n_{2k}}^{(2k)}\}$ such that 
\begin{romlist}
\item There are no $i$'s satisfying $[y_{i-1}^{(2k)}]^{-1}y_i^{(2k)}=t^{-1}$, $[y_i^{(2k)}]^{-1}y_{i+1}^{(2k)}\in F_1$, and $[y_{i+1}^{(2k)}]^{-1}y_{i+2}^{(2k)}=t$; and 
\item There are no $i$'s satisfying $[y_{i-1}^{(2k)}]^{-1}y_i^{(2k)}=t$, $[y_i^{(2k)}]^{-1}y_{i+1}^{(2k)}\in F_2$, and $[y_{i+1}^{(2k)}]^{-1}y_{i+2}^{(2k)}=t^{-1}$.
\end{romlist}
Obviously the above inductive process will terminate after finitely many steps. Then we have 
\begin{eqnarray*}
&&y_1^{(2k)}=\eta_0;\ y_2^{(2k)}=\eta_0t^{\xi_1};\ y_3^{(2k)}=\eta_0t^{\xi_1}\eta_1;\ \ldots;\\
&&y^{(2k)}_{n_{2k}}=\eta_0t^{\xi_1}\ldots t^{\xi_n};\ u^{-1}v= y^{(2k)}_{n_{2k}}\eta_n.
\end{eqnarray*}
where $\eta_i\in H$. Since $\{y_1^{(2k)},\ldots,y_{n_{2k}}^{(2k)}\}\subset\{y_1,\ldots,y_{m}\}$, all the vertices in $\{y_1^{(2k)},\ldots,y_{n_{2k}}^{(2k)}\}$ are visited by $\ell_{u^{-1}v}$. Working from the right to the left of $\eta_0t^{\xi_1}\ldots t^{\xi_n}\eta_n$, we can change it to a normal form $\theta_0t^{\xi_1}\ldots t^{\xi_n}\theta_n$, such that for $1\leq i\leq n$, $\theta_i\in R_1$ (resp.\ $\theta_i\in R_2$) if $\xi_i=-1$ (resp.\ $\xi_i=1$), as explained in Definition \ref{dfn}. More precisely, we find $\theta_1,\ldots,\theta_n$ in the following way
\begin{enumerate}
\item \begin{itemize}
\item If $\xi_n=-1$, choose $\theta_n\in R_1$, such that $F_1 \theta_n=F_1\eta_n$. Then 
\begin{eqnarray*}
t^{-1}\eta_n=[t^{-1}\eta_n\theta_n^{-1}t]t^{-1}\theta_n;
\end{eqnarray*}
where $t^{-1}\eta_n\theta_n^{-1}t\in F_2$. Let $q_n:=t^{-1}\eta_n\theta_n^{-1}t$.
\item If $\xi_n=1$, choose $\theta_n\in R_2$, such that $F_2 \theta_n=F_2\eta_n$. Then 
\begin{eqnarray*}
t\eta_n=[t\eta_n\theta_n^{-1}t^{-1}]t\theta_n;
\end{eqnarray*}
where $t\eta_n\theta_n^{-1}t^{-1}\in F_1$. Let $q_n:=t\eta_n\theta_n^{-1}t^{-1}$.
\end{itemize}
\item Let $1\leq i\leq n-1$. Assume we have determined $\theta_n,\ldots, \theta_{i+1}$ and $q_n,\ldots,q_{i+1}$. Then
\begin{itemize}
\item If $\xi_i=-1$, choose $\theta_i\in R_1$ such that $F_1\theta_i=F_1\eta_i q_{i+1}$, then
\begin{eqnarray*}
t^{-1}\eta_i q_{i+1}=[t^{-1}\eta_i q_{i+1}\theta_i^{-1}t]t^{-1}\theta_i
\end{eqnarray*}
where $t^{-1}\eta_i q_{i+1}\theta_i^{-1}t=F_2$. Let $q_i:=t^{-1}\eta_i q_{i+1}\theta_i^{-1}t$.
\item If $\xi_i=1$, choose $\theta_i\in R_1$ such that $F_2\theta_i=F_2\eta_i q_{i+1}$, then
\begin{eqnarray*}
t\eta_i q_{i+1}=[t \eta_i q_{i+1}\theta_i^{-1}t^{-1}]t\theta_i
\end{eqnarray*}
where $t\eta_i q_{i+1}\theta_i^{-1}t^{-1}=F_1$. Let $q_i:=t\eta_i q_{i+1}\theta_i^{-1}t^{-1}$.
\end{itemize}
\item Let $\theta_0=\eta_0 q_1$.
\end{enumerate}

The proof of Lemma \ref{css} makes use of the following two lemmas.

\begin{lemma}
Assume that $u$ and $v$ satisfy one of (a),(b),(c),(d) as in the statement of the Lemma \ref{css}. The normal form $\theta_0t^{\xi_1}\ldots t^{\xi_n}\theta_n$ of $u^{-1}v$ gives rise to a path by using a path $\tau_i$ ($0\leq i\leq n$) in $\theta_0t^{\xi_1}\ldots t^{\xi_{i}}H$ to join $\theta_0t^{\xi_1}\ldots t^{\xi_{i}}$ and $\theta_0t^{\xi_1}\ldots t^{\xi_{i}}\theta_i$, and concatenating these path as well as the $t$-edge or $t^{-1}$-edge joining them. Then the path must visit a vertex in $u^{-1}[F_1\cup F_2]$. 

\end{lemma}

\begin{proof}It is straightforward to check that if $u$ and $v$ satisfy one of (a), (b), (c), (d) as in the statement of Lemma \ref{css}, then the concatenation of normal forms of $u^{-1}$ and $v$ gives rise to a normal form of $u^{-1}v$. By the uniqueness of the normal form as stated in Lemma \ref{t47}, the concatenation of normal forms of $u^{-1}$ and $v$ is exactly the normal form $\theta_0t^{\xi_1}\ldots t^{\xi_n}\theta_n$ of $u^{-1}v$, and they give rise to the same path in $G_0$. Note that in Cases (a)(c) of Lemma \ref{css}, the path visits a vertex in $u^{-1}F_1$; while in Cases (b) (d) of Lemma \ref{css}, the path visits a vertex in $u^{-1}F_2$.
\end{proof}

\begin{lemma}
One vertex in $\{y_1^{(2k)},\ldots,y_{n_{2k}}^{(2k)}\}$ is in $u^{-1}[F_1\cup F_2]$. 
\end{lemma}

\begin{proof}Reviewing the process of constructing the normal form $\theta_0t^{\xi_1}\ldots t^{\xi_n}\theta_n$ from $\{y_1^{(2k)},\ldots,y_{n_{2k}}^{(2k)}\}$, one can find that for $1\leq i\leq n$,
\begin{eqnarray*}
\theta_0t^{\xi_1}\ldots t^{\xi_{i}}\theta_i=\eta_0t^{\xi_1}\ldots t^{\xi_{i}}\eta_i q_{i+1}.
\end{eqnarray*}
where $q_{n+1}:=1$. We consider the following cases
\begin{itemize}
\item In Cases (a)(c) of Lemma \ref{css}, since $v$ has a normal form with $\epsilon_1=1$ and $g_0\in F_1$, there exists $1\leq j\leq i-1$, such that
\begin{eqnarray*}
\theta_0t^{\xi_1}\ldots t^{\xi_{j}}\theta_j\in u^{-1}F_1
\end{eqnarray*}
with $\xi_{j+1}=1$. Hence $q_{j+1}\in F_1$, and therefore
\begin{eqnarray*}
\eta_0t^{\xi_1}\ldots t^{\xi_{j}}\eta_j=\theta_0t^{\xi_1}\ldots t^{\xi_{j}}\theta_j q_{j+1}^{-1}\in u^{-1}F_1.
\end{eqnarray*}
\item In Cases (b)(d) of Lemma \ref{css}, since $v$ has a normal form with $\epsilon_1=-1$ and $g_0\in F_2$, there exists $1\leq j\leq i-1$, such that
\begin{eqnarray*}
\theta_0t^{\xi_1}\ldots t^{\xi_{j}}\theta_j\in u^{-1}F_2
\end{eqnarray*}
with $\xi_{j+1}=-1$. Hence $q_{j+1}\in F_2$, and therefore
\begin{eqnarray*}
\eta_0t^{\xi_1}\ldots t^{\xi_{j}}\eta_j=\theta_0t^{\xi_1}\ldots t^{\xi_{j}}\theta_j q_{j+1}^{-1}\in u^{-1}F_2.
\end{eqnarray*}
\end{itemize}
Then the lemma follows.
\end{proof}

Since $\ell_{u^{-1}v}$ visits every vertex in $\{y_1^{(2k)},\ldots,y_{n_{2k}}^{(2k)}\}$, it must visit a vertex in $u^{-1}[F_1\cup F_2]$. Then the proof is complete.
\end{proof}

\begin{lemma}\label{l48}Let $S_0$ be defined as in (\ref{s0}). Let $G\setminus S_0$ be the subgraph obtained from $G$ by removing all the vertices in $S_0$ as well as their incident edges. For any $u,v\in G\setminus S_0$ one of (a) (b) (c) (d) in Lemma \ref{css} holds, $u$ and $v$ are in two distinct components of $G\setminus S_0$. Moreover, $G\setminus S_0$ has at least two infinite components.
\end{lemma}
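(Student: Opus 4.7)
The plan is to mirror the proof of Lemma \ref{l43} from the amalgamated free product case, with the finite set $S_0$ playing the role of $C_1$ there and $C_1\cup C_2$ playing the role of $C$. First I would establish the distinct-components claim by contradiction: suppose $u,v\in G\setminus S_0$ satisfy one of conditions (a)--(d) of Lemma \ref{css} yet lie in a common component of $G\setminus S_0$, so that there is a walk $\ell_{uv}$ in $G$ from $u$ to $v$ all of whose vertices avoid $S_0$. For each edge $\langle p,q\rangle$ along $\ell_{uv}$ I would choose a shortest $G_0$-path $P_{pq}$ joining $p$ and $q$; by (\ref{dd}) its length is at most $D_0$. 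Concatenating the $P_{pq}$'s gives a walk in $G_0$ from $u$ to $v$, and left-translation by $u^{-1}$ (using vertex-transitivity of $G_0$) converts this into a walk from $1_\Gamma$ to $u^{-1}v$ in $G_0$; the contradiction with Lemma \ref{css} will come by arranging that this walk avoids $u^{-1}(C_1\cup C_2)$.

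The key technical step is to verify that each $P_{pq}$ avoids $C_1\cup C_2$. If some $w\in C_1\cup C_2$ lay on $P_{pq}$, then $\mathrm{dist}_{G_0}(p,w)+\mathrm{dist}_{G_0}(w,q)=|P_{pq}|\leq D_0$, so by pigeonholing at least one of $p,q$ would satisfy $\mathrm{dist}_{G_0}(\cdot,C_1\cup C_2)\leq D_0/2$, placing it in $S_0$ by the definition (\ref{s0})---a contradiction to $p,q\in G\setminus S_0$. Once this is verified, the concatenated $G_0$-walk avoids $C_1\cup C_2$, completing the contradiction with Lemma \ref{css}.

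For the \emph{moreover} assertion I would exhibit the two sequences $(t^n)_{n\geq 1}$ and $(t^{-n})_{n\geq 1}$. By the uniqueness of normal form (Theorem \ref{t47}), each $t^n$ has the normal form $1,t,1,t,\ldots,t,1$ with $g_0=1\in C_1$ and $\epsilon_1=1$, and symmetrically each $t^{-n}$ has $g_0=1\in C_2$ and $\epsilon_1=-1$; hence condition (a) of Lemma \ref{css} holds for every pair $(u,v)=(t^{-m},t^n)$. Since $S_0$ is finite there exists $n_0$ with $t^{\pm n}\notin S_0$ for all $n\geq n_0$, and by the first part of the lemma no $t^n$ shares a component of $G\setminus S_0$ with any $t^{-m}$ for $n,m\geq n_0$. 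Because $G$ is locally finite of bounded degree $d$, removing the finite set $S_0$ produces only finitely many components (at most $(d-1)|S_0|+1$, by the argument used in Lemma \ref{l41}), so the pigeonhole principle forces infinitely many of the $t^n$'s into a single component $A$, which is therefore infinite; the same argument produces an infinite component $B$ containing infinitely many $t^{-n}$'s, and $A\neq B$ by the first part.

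The step I expect to be the main obstacle is the pigeonhole argument in the last paragraph: one must rule out the possibility that the $t^n$'s are spread among infinitely many distinct (necessarily finite) components of $G\setminus S_0$, which is precisely where the bounded-degree/Lemma \ref{l41} observation enters. The replacement argument in the first two paragraphs, by contrast, is a direct transcription of the amalgamated free product proof, the one subtlety being that the tighter threshold $D_0/2$ in (\ref{s0}) (instead of $D_0+1$ used in the amalgamated case) is calibrated precisely for the pigeonhole in the triangle inequality to close.
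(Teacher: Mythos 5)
Your argument for the main claim is the same as the paper's: assume a path from $u$ to $v$ in $G\setminus S_0$, replace each $G$-edge $\langle p,q\rangle$ by a $G_0$-path of length at most $D_0$, observe that the resulting $G_0$-walk avoids $C_1\cup C_2$, translate by $u^{-1}$, and contradict Lemma \ref{css}; you merely spell out the triangle-inequality/pigeonhole step that justifies the threshold $D_0/2$ in (\ref{s0}), which the paper leaves implicit. For the \emph{moreover} assertion the two arguments diverge slightly: the paper picks $u,v$ with long normal forms and asserts that the rays $\{ut^{\epsilon_{n_1}(u)k}\}_{k\geq 1}$ and $\{vt^{\epsilon_{n_2}(v)k}\}_{k\geq 1}$ lie in two distinct infinite components, whereas you take the vertices $t^n$ and $t^{-n}$ and pigeonhole them into the finitely many components of $G\setminus S_0$. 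Your version is arguably the more careful one, since it sidesteps the need to connect consecutive points of the paper's rays inside $G\setminus S_0$ (they need not be adjacent in $G$), at the cost of invoking the bounded-degree count from Lemma \ref{l41} to bound the number of components. Both routes are valid; no gap.
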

\begin{proof}To show that $u$ and $v$ are in two distinct components of $G\setminus S_0$, it suffices to show that any path in $G$ joining $u$ and $v$ must visit a vertex in $S_0$.

 Let $\ell_{uv}$ be a path in $G$ joining $u$ and $v$. Assume that $\ell_{uv}$ visits no vertices in $S_0$. From (\ref{dd0}) we see that for any edge $e=\langle p,q \rangle\in \ell_{uv}$, $p$ and $q$ can be joined by a path $\tau_{pq}$ in $G_0$ whose length does not exceed $D_0$. By replacing each edge $\langle p,q\rangle$ by the path $\tau_{pq}$ in $\ell_{uv}$, we obtain a path $\ell_{pq}^0$ in $G_0$ joining $u$ and $v$ and visiting no vertices in $F_1\cup F_2$ by the definition of $S_0$ in (\ref{s0}). This is equivalent to the condition that there exists a path in $G_0$ joining $1_{\Gamma}$ and $u^{-1}v$, which visits no vertices in $u^{-1}[F_1\cup F_2]$. But this is a contradiction to Lemma \ref{css}. 
 
For $u$ and $v$ satisfying the condition of the theorem, assume the lengths of the normal forms of $u$ and $v$ are strictly greater than the maximal lengths of normal forms of elements in $S_0$. Assume the normal form of $u$ (resp.\ $v$) has length $n_1$ (resp.\ $n_2$). Let $\epsilon_{n_1}(u)$ (resp.\ $\epsilon_{n_2}(v)$) be the exponent of the $n_1$th (resp.\ $n_2$th) $t$ in the normal form of $u$ (resp.\ $v$). Then $\{u t^{\epsilon_{n_1}(u)k}\}_{k=1}^{\infty}$ and $\{v t^{\epsilon_{n_2}(v)k}\}_{k=1}^{\infty}$ are in two distinct infinite components of $G\setminus S_0$.
\end{proof}

Let $S_1$ be a finite, connected set of vertices in $G$ containing $S_0$.  By Lemmas \ref{l41} and \ref{l48}, the fact that $G\setminus S_0$ has at least two distinct infinite components implies that $G\setminus S$ has at least two distinct infinite components. 

\subsubsection{$F_1=F_2=H$}

In this section, we consider the case when the group $\Gamma$ is an HNN extension, as in Definition \ref{dfn}, with $F_1=F_2=H$.

\begin{lemma}Assume that the group $\Gamma$ is a finitely generated HNN extension, as in Definition \ref{dfn}, such that $F_1=F_2=H$. Then any locally finite Cayley graph $G$ of $\Gamma$ has two ends.
\end{lemma}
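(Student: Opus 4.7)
The plan is to show, under the hypothesis $C_1 = C_2 = H$, that $\Gamma$ is virtually $\ZZ$, and then invoke the classical theorem of Hopf and Freudenthal characterising finitely generated groups with exactly two ends as precisely the infinite virtually cyclic ones.

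First I would observe that in the HNN setup of Case (2) the subgroups $C_1,C_2 \le H$ are assumed finite, so the hypothesis $C_1 = C_2 = H$ forces $H$ itself to be finite. Next I would exploit the defining relation $t^{-1}ht = \phi(h)$, equivalently $ht = t\phi(h)$, to move every occurrence of $t^{\pm 1}$ to the right, showing that every element of $\Gamma$ has a representation $h t^n$ with $h \in H$ and $n \in \ZZ$. When $C_1 = C_2 = H$, each coset space $H/C_i$ is a single point, so the only allowed representative in each slot $g_i$ ($i\ge 1$) of the normal form of Definition \ref{dfn} is $1$; the ``no consecutive subsequence $t^\epsilon, 1, t^{-\epsilon}$'' requirement then reduces to ordinary free reduction on powers of $t$. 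Uniqueness of the normal form (Theorem \ref{t47}) therefore guarantees that the representation $h t^n$ of any element of $\Gamma$ is unique.

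With $\Gamma$ in bijection with $H \times \ZZ$ via $h t^n \leftrightarrow (h,n)$, I would define $\rho : \Gamma \to \ZZ$ by $\rho(h t^n) = n$ and verify, using the multiplication law
\[
(h_1 t^{n_1})(h_2 t^{n_2}) = h_1 \phi^{n_1}(h_2)\, t^{n_1+n_2},
\]
that $\rho$ is a surjective group homomorphism with kernel $H$. Consequently the cyclic subgroup $\langle t\rangle \cong \ZZ$ has trivial intersection with $H$ and index $|H| < \infty$ in $\Gamma$, so $\Gamma$ contains $\ZZ$ as a finite-index subgroup; equivalently $\Gamma \cong H \rtimes_\phi \ZZ$ is virtually $\ZZ$.

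Finally I would cite the Hopf--Freudenthal theorem: a finitely generated group has exactly two ends if and only if it is virtually $\ZZ$. Combined with the standard fact that the number of ends of a locally finite Cayley graph of a finitely generated group depends only on the group (not on the chosen finite generating set), this implies that any locally finite Cayley graph $G$ of $\Gamma$ has exactly two ends. The only step requiring genuine care is the bookkeeping that translates Theorem \ref{t47} into the semidirect product decomposition $\Gamma \cong H \rtimes_\phi \ZZ$; everything after that reduces to citing well-known results from geometric group theory.
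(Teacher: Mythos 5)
Your proposal is correct, but it reaches the conclusion by a different route than the paper. The paper's proof is a one-line reduction: since the number of ends of a locally finite Cayley graph depends only on the group and not on the finite generating set, it suffices to examine the single Cayley graph with generating set $H\cup\{t,t^{-1}\}$, for which it is ``straightforward to check'' (deleting the finite slab $H\,t^{[-n,n]}$ leaves exactly the two infinite pieces $H\,t^{>n}$ and $H\,t^{<-n}$) that there are two ends. You instead extract the algebraic structure: $C_1=C_2=H$ forces $H$ finite, the normal form of Theorem \ref{t47} collapses to the unique representation $ht^n$, hence $\Gamma\cong H\rtimes_\phi\ZZ$ is virtually $\ZZ$, and the Hopf--Freudenthal/Stallings--Wall characterisation of two-ended groups finishes the argument. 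Both proofs rest on the same invariance of ends under change of generating set; yours trades the paper's direct geometric inspection of one graph for a citation of the classification of two-ended groups, which makes the ``straightforward to check'' step fully explicit and arguably more robust, at the price of invoking a heavier classical theorem. (One immaterial slip: with the convention $t^{-1}ht=\phi(h)$ the commutation rule is $t^{n_1}h_2=\phi^{-n_1}(h_2)t^{n_1}$, so your multiplication law should carry $\phi^{-n_1}$ rather than $\phi^{n_1}$; this does not affect the conclusion that $\rho$ is a surjection onto $\ZZ$ with kernel $H$.)
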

\begin{proof}It is a well-known fact that the number of ends of locally finite Cayley graphs of a finitely generated group do not depend on the choices of finite generating set. Therefore it suffices to prove the theorem for a specific choice of generating set. Let $G$ be the Cayley graph with respect to the generating set $H\cup\{t,t^{-1}\}$. It is straightforward to check that $G$ has two ends.
\end{proof}

\subsubsection{$|F_1|=|F_2|<|H|$}\label{sl}

Now we prove Theorem \ref{mg} when the group $\Gamma$ is a finitely generated HNN extension, as in Definition \ref{dfn}, such that $F_1$ is a proper subset of $H$. Since $F_1$ is finite and $F_1$ and $F_2$ are isomorphic groups, this implies that $|F_1|=|F_2|<|H|$, and therefore $F_2$ is also a proper subset of $H$.

\begin{lemma}When the group $\Gamma$ is a finitely generated HNN extension, as in Definition \ref{dfn}, such that $F_1$ is a proper subset of $H$, then there exists $a\in(0,1]$, such that
\begin{eqnarray*}
\limsup_{n\rightarrow\infty}|\{\pi_n:\|\pi_n\|\geq an\}|^{\frac{1}{n}}=\mu,
\end{eqnarray*}
where $\pi_n$ is an $n$-step SAW starting from $1_{\Gamma}$, and $\mu$ is the connective constant.
\end{lemma}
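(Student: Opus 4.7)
The plan is to verify that some finite set $S \subset V$ satisfies Assumption \ref{ap31} for the given locally finite Cayley graph $G$, after which Theorem \ref{m31} A immediately yields the claim. First I would set $S_0$ as in (\ref{s0}) and take $S$ to be any finite connected subset of $G$ containing $S_0$; such $S$ exists because $G$ is connected and $S_0$ is finite. Item (1) of Assumption \ref{ap31} is then automatic, and item (2) follows by combining Lemma \ref{l48} (which gives at least two infinite components of $G \setminus S_0$) with Lemma \ref{l41} (which promotes this to $G \setminus S$).

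The substantive step is item (3): for each component $A$ of $G \setminus S$ I must produce an automorphism $\phi(S,A) \in \Gamma$ with the prescribed properties. Lemma \ref{l48} tells us that whenever a vertex $u \in G \setminus S_0$ has a sufficiently long normal form, the ``initial-block type'' of that form, namely which of the cases (a)--(d) of Lemma \ref{css} applies, is determined by the component of $G \setminus S_0$ containing $u$. In analogy with the elements $r_1,r_2$ used in Section \ref{pf141}, I would choose, for each initial-block type, a witness element $r \in \Gamma$ whose normal form both begins and ends with the $t^{\pm 1}$-sign and coset part opposite to that type, and whose normal-form length strictly exceeds the maximum length of a normal form of any vertex of $S$. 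The hypothesis $C_1 \subsetneq H$ is precisely what makes such $r$ available: it supplies representatives in $H \setminus C_1$ and in $H \setminus C_2$, so one is not forced into the degenerate case $C_1=C_2=H$ treated in the preceding subsubsection. Left multiplication by such an $r$ then sends $A$ into a component whose initial-block type is different from that of $A$, and I would declare $\phi(S,A) := r$.

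The main obstacle I expect is verifying the bounded-path clause of Assumption \ref{ap31}(3): for every $v \in \partial_A S$, the vertices $v$ and $rv$ should be joined in $G \setminus (A \cup rA)$ by a path whose length is bounded by a constant $N$ independent of $A$ and $v$. I would attack this exactly as at the end of Section \ref{pf141}. Fix any geodesic in $G$ from $v$ to $rv$, let $p$ be its last visit to $S$ and $q$ its first visit to $rS$, and reroute its endpoints via paths inside $S$ and $rS$ (both connected, the latter by left-invariance). Using that $rS \cap S = \emptyset$ when $r$ has a sufficiently long normal form, together with Lemma \ref{css} applied to the normal forms of points along the interior segment from $p$ to $q$, one sees that this interior segment cannot lie in $A \cup rA$, while the rerouted endpoint segments lie entirely in $S \cup rS$. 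The resulting path has length bounded by $2|S|$ plus the word length of $r$ in $T$, a constant depending only on $S$ and the finitely many fixed witnesses. Once all three items of Assumption \ref{ap31} have been checked, Theorem \ref{m31} A yields the existence of $a \in (0,1]$ with the desired exponential lower bound for SAWs starting from $1_\Gamma$.
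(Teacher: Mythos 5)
Your overall strategy is exactly the paper's: take $S$ to be a finite connected set containing $S_0$, get items (1)--(2) of Assumption \ref{ap31} from Lemmas \ref{l48} and \ref{l41}, realize $\phi(S,A)$ as left multiplication by a fixed element with a long normal form whose first and last syllables are chosen so that translation changes the ``initial-block type,'' and verify the bounded-path clause by rerouting a connecting path through $S$ and $rS$ as in Section \ref{pf141}. The paper does precisely this, with the witnesses taken concretely to be $t^{-2k}$, $t^{-2k}r_1$, $t^{2k}$, $r_2t^{-2k}$ (cases A--E of Section \ref{sl}) and the connector built as a path in $S$, then a shortest path in $B_w$ from $\partial_{B_w}S$ to $\phi(S,A_w)S$, then a path in $\phi(S,A_w)S$.

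One step of your plan is not justified as stated: you assert that the initial-block type of a vertex ``is determined by the component of $G\setminus S_0$ containing $u$,'' and you use this to attach a single witness $r$ to each component. Lemma \ref{l48} gives only the one-way implication that vertices whose types form one of the pairs (a)--(d) of Lemma \ref{css} lie in \emph{different} components; it does not say that all long-normal-form vertices of a given component share a type, and in general they need not (e.g.\ a component may contain vertices with $g_0\in H\setminus[C_1\cup C_2]$ together with vertices with $g_0\in C_1\setminus C_2$, $\epsilon_1=-1$, since that pair is separated by none of (a)--(d)). If you choose $r$ according to the wrong type present in $A$, the translate $rA$ can land back in $A$. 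The paper repairs exactly this by imposing a priority ordering on its cases (the provisos ``$A_w$ contains no vertices satisfying Case A'' in Case C, and ``no vertices satisfying Case A or Case C'' in Case E), so that the chosen witness moves \emph{every} type occurring in the component into a different one. Separately, your length bound ``$2|S|$ plus the word length of $r$'' is off, since $\mathrm{dist}_G(v,rv)=|v^{-1}rv|_T$, not $|r|_T$; this is harmless because $v$ ranges over the finite set $S$ and $r$ over finitely many witnesses, so the quantity is still a constant independent of $A$ and $v$, which is all Assumption \ref{ap31}(3) requires.
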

\begin{proof}It suffices to check Assumption \ref{ap31}, then the lemma follows from Theorem \ref{m31} A.

Let $S_0$ be defined as in (\ref{s0}). Let $S$ be a finite set of vertices including $S_0$ such that $S$ is connected.

Let $v\in S$ be incident to $w\in G\setminus S$. Let $A_w$ be the component of $G\setminus S$ including $w$. Let $k$ be an integer which is strictly greater than the maximal length of normal forms of elements in $S$. The we define the graph automorphism $\phi(S,A_{w})$ in Assumption \ref{ap31} according to the normal form of $w$ as follows.
\begin{Alist}
\item If $w$ has a normal form with $g_0\in H\setminus [F_2\cup F_1]$, let $\phi(S,A_{w})=t^{-2k}$;
\item If $w$ has a normal form with $g_0\in F_1$, $\epsilon_1=1$, let $\phi(S,A_w)= t^{-2k}r_1$, where $r_1\in H\setminus F_1$;
\item If $w$ has a normal form with $g_0\in F_1\setminus F_2$, $\epsilon_1=-1$, and $A_{w}$ contains no vertices satisfying Case A.; let $\phi(S,A_w)=t^{2k}$;
\item If $w$ has a normal form with $g_0\in F_2$, $\epsilon_1=-1$, let $\phi(S,A_w)=r_2t^{-2k}$, where $r_2\in H\setminus F_2$;
\item If $w$ has a normal form with $g_0\in F_2\setminus F_1$, $\epsilon_1=1$, and $A_{w}$ contains no vertices satisfying Case A. or Case C.; let $\phi(S,A_w)=t^{-2k}$.
\end{Alist}

\begin{lemma}\label{ll1}In Case A., $\phi(S,A_w)A_w$ is in a component of $G\setminus S$ different from $A_w$.
\end{lemma}
\begin{proof}Assume Case A. occurs.
 Let $u$ be an arbitrary vertex in $A_w$, then one of  the following 3 cases must occur
\begin{romlist}
\item $u$ has a normal form with $g_0\in H\setminus [F_2\cup F_1]$
\item $u$ has a normal form with $g_0\in F_1\setminus F_2$, $\epsilon_1=-1$;
\item $u$ has a normal form with $g_0\in F_2\setminus F_1$, $\epsilon_1=1$.
\end{romlist}
To see why that is true, let us look at Cases (a)-(d) of Lemma \ref{css}. First of all, note that the vertex $w\in A_w$ satisfies Case (i), therefore $A_w$ does contain vertices in Case (i). 

Now we consider vertices in $A_w$ whose normal form has $g_0\in [F_1\cup F_2]$. The following cases might occur
\begin{itemize}
\item If $w$ has a normal form with $\epsilon_1=-1$, then by Cases (a) and (d) of Lemma \ref{css} and Lemma \ref{l48}, for any vertex $u$ in $A_w$ whose normal form has $g_0\in [F_1\cup F_2]$, either $u$ has a normal form with $\epsilon_1=1$ and $g_0\in F_2\setminus F_1$, or $u$ has a normal form with $\epsilon_1=-1$ and $g_0\in F_1\setminus F_2$, because otherwise $u$ and $w$ are in two distinct components of $G\setminus S$; or one of them is in $S$.
 \item If $w$ has a normal form with $\epsilon_1=1$, then by Cases (b) and (c) of Lemma \ref{css} and Lemma \ref{l48}, for any vertex $u$ in $A_w$ whose normal form has $g_0\in [F_1\cup F_2]$, either $u$ has a normal form with $\epsilon_1=1$ and $g_0\in F_2\setminus F_1$, or $u$ has a normal form with $\epsilon_1=-1$ and $g_0\in F_1\setminus F_2$, because otherwise $u$ and $w$ are in two distinct components of $G\setminus S$; or one of them is in $S$.
 \end{itemize}

For Case (i), $\phi(S,A_{w})u$ and $u$ are in distinct components of $G\setminus S$ by  (b)(d) of Lemma \ref{css}. For all the Cases (i) (ii) and (iii),
if $\phi(S,A_w)u\in S$, then $u\in t^{2k} S$, then $u$ has normal form with $g_0\in F_1$, $\epsilon_1=1$, but this is possible in none of Cases (i), (ii) and (iii). Therefore $\phi(S,A_w)A_w$ is in a component of $G\setminus S$ different from $A_w$.
\end{proof}

\begin{lemma}\label{ll2}In Case B., $\phi(S,A_w)A_w$ is in a component of $G\setminus S$ different from $A_w$.
\end{lemma}

\begin{proof}
For Case B. $\phi(S,A_{w})w$ and $w$ are in distinct components of $G\setminus S$ by (b) of Lemma \ref{css}. Note that any vertex in $A_w$ has a normal form with $g_0\in F_1$, $\epsilon_1=1$ by (a) (c) of Lemma \ref{css}. Therefore $\phi(S,A_w)A_w$ is in a component of $G\setminus S$ different from $A_w$.
\end{proof}

\begin{lemma}\label{ll3}In Case C., $\phi(S,A_w)A_w$ is in a component of $G\setminus S$ different from $A_w$.
\end{lemma}

\begin{proof}
For Case C. $\phi(S,A_{w})w$ and $w$ are in distinct components of $G\setminus S$ by (a) of Lemma \ref{css}. Let $u$ be an arbitrary vertex in $A_w$, then by Cases (a) (d) of Lemma \ref{css} $u$ must have a normal form satisfying one of the following two conditions
\begin{enumerate}
\item
 $g_0\in H\setminus F_1$, $\epsilon_1=1$; or  
\item $g_0\in H\setminus F_2$, $\epsilon_1=-1$.
 \end{enumerate}
 If $\phi(S,A_{w})u\in S$, then $u\in t^{-2k}S$.
 But this is not possible since in this case the normal form of $u$ satisfies $\epsilon_1=-1$ and $g_0\in F_2$. Therefore $\phi(S,A_w)A_w$ is in a component of $G\setminus S$ different from $A_w$.
\end{proof}

\begin{lemma}\label{ll4}In Case D., $\phi(S,A_w)A_w$ is in a component of $G\setminus S$ different from $A_w$.
\end{lemma}

\begin{proof}
For Case D. $\phi(S,A_{w})w$ and $w$ are in distinct components of $G\setminus S$ by (d) of Lemma \ref{css}. Note that any vertex in $A_w$ has a normal form with $g_0\in F_2$, $\epsilon_1=-1$ by Cases (b) (d) of Lemma \ref{css}. Then if $\phi(S,A_w)u\in S$, $u\in t^{2k} r_2^{-1}S$ has a normal form with $\epsilon_1=1$, which is impossible.
 Therefore $\phi(S,A_w)A_w$ is in a component of $G\setminus S$ different from $A_w$.
\end{proof}

\begin{lemma}\label{ll5}In Case E., $\phi(S,A_w)A_w$ is in a component of $G\setminus S$ different from $A_w$.
\end{lemma}

\begin{proof}
For Case E. $\phi(S,A_{w})w$ and $w$ are in distinct components of $G\setminus S$ by (b) of Lemma \ref{css}. By (b) (c) of Lemma \ref{css},  any vertex $u$ in $A_w$ has a normal form satisfying one of the following conditions
\begin{enumerate}
\item $g_0\in H\setminus F_1$, $\epsilon_1=1$; or
\item $g_0\in H\setminus F_2$, $\epsilon_1=-1$.
 \end{enumerate}
If $\phi(S,A_w)u\in S$, then $u\in t^{2k}S$. Then $u$ has a normal form with $\epsilon_1=1$ and $g_0=1\in F_1$, but this is a contradiction to Cases (1) and (2). Therefore $\phi(S,A_w)A_w$ is in a component of $G\setminus S$ different from $A_w$.
\end{proof}

Let $B_w$ be the component of $G\setminus S$ containing $\phi(S,A_w)A_w$. By the construction above, we have
\begin{eqnarray*}
\phi(S,A_w)S\cap S=\emptyset,
\end{eqnarray*}
since for each element in $\phi(S,A_w)S$, its normal form has a length strictly greater than the length of the normal form of any element in $S$.
Therefore, $\phi(S,A_w)A_w\subset B_w$ implies
\begin{eqnarray}
\phi(S,A_w) S\subset B_{w}.\label{bw}
\end{eqnarray}
\begin{lemma}\label{l424}
In Cases A.-E. that $A_w$ and $\phi(S,A_w)A_w$ are in two distinct components of $G\setminus[\phi(S,A_w)S]$. 
\end{lemma}

\begin{proof}It suffices to show that in Cases A.-E., $A_w$ and $\phi(S,A_w)^{-1}A_w$ are in two distinct components of $G\setminus S$. Then the lemma follows from similar arguments as in the proofs of Lemmas \ref{ll1}-\ref{ll5}.
\end{proof}
For each $v\in \partial_{A_w}S$, we can construct a path $\ell_v$ joining $v$ and $\phi(S,A_w)v$ as follows
\begin{numlist}
\item use a path $\ell_1$ in $S$ to join $v$ and a vertex $u$ in $\partial_{B_w}S$ - this is possible by the connectivity of $S$;
\item use a shortest path $\ell_2$ in $B_w$ to join $u$ and $\phi(S,A_w)S$; let $x$ be the endpoint of $\ell_2$;
\item use a path $\ell_3$ in $\phi(S,A_w)S$ to join $x$ and $\phi(S,A_w)v$.
\end{numlist}
Let $\ell_v$ be the concatenation of $\ell_1$, $\ell_2$, and $\ell_3$. 

\begin{lemma}
$\ell_v\in G\setminus [A_w\cup \phi(S,A_w)A_w]$.
\end{lemma}
\begin{proof}Since $\ell_v=\ell_1\cup\ell_2\cup\ell_3$, it suffices to show that for $1\leq i\leq 3$, $\ell_i\in G\setminus [A_w\cup \phi(S,A_w)A_w]$.

 The path $\ell_1\subset S$, $S\cap A_w=\emptyset$ and $S\cap \phi(S,A_w)A_w=\emptyset$. Therefore $\ell_1\in G\setminus [A_w\cup \phi(S,A_w)A_w]$. 
 
 The path $\ell_3\in \phi(S,A_w)S$; $\phi(S,A_w)S\cap \phi(S,A_w)A_w=\emptyset$ since $S\cap A_w=\emptyset$; $\phi(S,A_w)S\cap A_w=\emptyset$ by (\ref{bw}) and the fact that $B_w\cap A_w=\emptyset$. Therefore $\ell_3\in G\setminus [A_w\cup \phi(S,A_w)A_w]$. 
 
 The path $\ell_2\in B_w$, hence $\ell_2\cap A_w=\emptyset$. Since $[\ell_1\cup\ell_2\setminus\{x\}]\cap \phi(S,A_w)S=\emptyset$; and $v$ is in the same component of $G\setminus[\phi(S,A_w)S]$ as $A_w$, we obtain that $\ell_1\cup\ell_2\setminus\{x\}$ are in in the same component of $G\setminus[\phi(S,A_w)S]$ as $A_w$. By Lemma \ref{l424}, we have  $[\ell_1\cup\ell_2\setminus\{x\}]\cap \phi(S,A_w)A_w=\emptyset$. Therefore $\ell_2\in G\setminus [A_w\cup \phi(S,A_w)A_w]$. 
\end{proof}

The lengths of $\ell_1$ and $\ell_3$ are bounded above by $|S|$. We can make the length of $\ell_2$ to be bounded above by the distance of $\partial_{B_w}S$ and $\phi(S,A_w)S$, which is bounded above by the graph distance in $G$ if $1_{\Gamma}$ and $\phi(S,A_w)1_{\Gamma}$. The latter is bounded by $2k+\max\{\mathrm{dist}_{G}(1_{\Gamma},r_1),\mathrm{dist}_{G}(1_{\Gamma},r_2)\}$. Hence if we choose
\begin{eqnarray*}
N=2|S|+2k+\max\{\mathrm{dist}_{G}(1_{\Gamma},r_1),\mathrm{dist}_{G}(1_{\Gamma},r_2)\},
\end{eqnarray*}
then Assumption \ref{ap31}(3) is satisfied.

Therefore Theorem \ref{mg} when $\Gamma$ is an HNN extension with $|F_1|=|F_2|<H$ follows from Theorem \ref{m31} A.
\end{proof}

\subsubsection{Proof of Theorem \ref{t15} when $\Gamma$ is an HNN extension}

Let $\Gamma$ be an infinite, finitely generated graph which is an HNN extension as described by (2). It suffices to construct a locally finite Cayley graph $G$ of $\Gamma$ on which SAWs have positive speed.

First we consider the case when $F_1=F_2=H$. Let $G$ be the Cayley graph of of $\Gamma$ with respect to  generator set $H\cup\{t,t^{-1}\}$; i.e. any elements in $H$ corresponds to an edge in $G$.
Let $S'=S=H$. Note that $\phi\setminus S$ has two distinct infinite components. For any component $A$ of $G\setminus S$, let $\phi(S,A)$ be the mapping from $\Gamma$ to $\Gamma$ changing each $t$ in the normal form to $t^{-1}$ and each $t^{-1}$ in the normal form to $t$.
 Then Theorem \ref{t15} in this case follows from Theorem \ref{m31} B.

Now we consider the case when $F_1$ is a proper subset of $H$. Let $T_H$ be a finite generator set of $H$ satisfying $T_H=T_H^{-1}$, $1\notin T_H$, $|T_H|<\infty$, and $[F_1\cup F_2]\setminus \{1_{\Gamma}\}\subset T_H$. Let $G$ be a Cayley graph of $\Gamma$ with respect to the set of generators $T_H\cup\{t,t^{-1}\}$. Let $S$ be defined as in Section \ref{sl}, and let $S'=S$.
Then Theorem \ref{t15} in this case follows from Theorem \ref{m31} B.

\section{Free product graph of two quasi-transitive graphs}\label{fp}

In this section, we prove Theorem \ref{tm51}.

\begin{proof}Obviously $G$ is an infinite, connected, quasi-transitive graph. Let $S=\{o\}\in V$. Then $G\setminus S$ has at least two infinite components. Indeed, Let $x,y\in V$ satisfy
\begin{eqnarray}
x&=&x_1\ldots x_n;\label{x}\\
y&=&y_1\ldots y_m; \label{y}
\end{eqnarray}
where $m,n\geq 1$, $x_i,y_j\in V_1^{\times} \cup V_2^{\times}$, $I(x_i)\neq I(x_{i+1})$, $I(y_j)\neq I(y_{j+1})$ (see Definition \ref{df41} for notations). If $x_1\in V_1^{\times}$ and $y_1\in V_2^{\times}$, then $x$ and $y$ are in two distinct components of $G\setminus S$. 

Let $A$ (resp.\ $B$) be a component of $G\setminus S$, such that for any $x\in A$ (resp.\ $y\in B$), $x$ (resp.\ $y$) has the form (\ref{x}) (resp. (\ref{y})) with $x_1\in V_1^{\times}$ (resp. $y_1\in V_2^{\times}$). Let $u\in V_2^{\times}$ (resp.\ $w\in V_2^{\times}$), and define $\phi(S,A)x=ux$ (resp. $\phi(S,B)=vy$). Then it is straightforward to verify Assumption \ref{ap32} with $S$ chosen as above. Therefore the theorem follows from Theorem \ref{t15}.
\end{proof}

\bigskip
\noindent{\textbf{Acknowledgements.}} The author thanks Yuval Peres, Geoffrey Grimmett for helpful discussions. The author's research is partially supported by National Science Foundation grant $\#$1608896.

\bibliography{new5}
\bibliographystyle{amsplain}

\end{document}